\documentclass[12pt]{article}
\usepackage{amsmath}
\usepackage{graphicx,psfrag,epsf}
\usepackage{enumerate}
\usepackage[round]{natbib}
\usepackage{url} 
\usepackage{hyperref}
\usepackage{amsmath}
\usepackage{amsthm}
\usepackage{amssymb}
\usepackage{mathrsfs}
\newtheorem{theorem}{Theorem}[section]
\newtheorem{lemma}{Lemma}[section]
\newtheorem{assumption}{Assumption}

\newtheorem{remark}{Remark}[section]

\usepackage{array}
\usepackage{booktabs} 
\usepackage{caption}
\usepackage{tabularx}
\usepackage{float}
\usepackage{graphicx}
\usepackage{subcaption}
\usepackage{caption}
\usepackage{appendix}
\usepackage{multirow}
\usepackage[linesnumbered,ruled]{algorithm2e}

\usepackage{bm}

\usepackage{xcolor}
\usepackage{makecell}

\newcommand{\blind}{1}

\hypersetup{
  colorlinks,       
  citecolor=blue,   
  linkcolor=blue,   
  pdfborder={0 0 0} 
}

\begin{document}

\def\spacingset#1{\renewcommand{\baselinestretch}%
{#1}\small\normalsize} \spacingset{1}


\if1\blind
{
  \title{\bf Theoretical Properties of Covariate-Adaptive Randomization with a Diverging Number of Covariates}
  \author{\small{Yuhang Tao\textsuperscript{1} and Li-Xin Zhang\textsuperscript{2,3,}}\thanks{Corresponding author. Email address: stazlx@mail.zjgsu.edu.cn} \hspace{.2cm}\\
    \small{\textsuperscript{1}School of Mathematical Sciences, Zhejiang University, Hangzhou, China} \\
    \small{\textsuperscript{2}School of Statistics and Data Science, Zhejiang Gongshang University, Hangzhou, China} \\
    \small{\textsuperscript{3}Center for Data Science, Zhejiang University, Hangzhou, China} }
  \date{}
  \maketitle
} \fi

\if0\blind
{
  \bigskip
  \bigskip
  \bigskip
  \begin{center}
    {\LARGE\bf Covariate-adaptive Randomization with Missing Covariates and Statistical Inference}
\end{center}
  \medskip
} \fi

\bigskip
\begin{abstract}
Covariate-adaptive randomization procedures are widely used in clinical trials to improve covariate balance. In modern applications, experimenters often have access to many covariates, motivating the need for a theory of covariate-adaptive randomization  procedures with a diverging number of covariates. This paper studies two classes of covariate-adaptive randomization procedures, referred to as imbalance-efficient covariate-adaptive randomization and imbalance-robust covariate-adaptive randomization, when the feature dimension diverges. We derive convergence rates for the imbalance of the covariates used in randomization. For both procedures, the imbalance is of a smaller order than that under complete randomization when the feature dimension is $o(n)$, whereas it is of the same order when the feature dimension is $\Omega(n)$. For imbalance-robust covariate-adaptive randomization, we further establish the asymptotic properties of the imbalance of additional covariates and use these results to derive the asymptotic distribution of the difference-in-means estimator for the average treatment effect and construct asymptotically valid confidence intervals. Furthermore, we provide extensive numerical and empirical studies to illustrate the practical relevance of our theoretical results. 
\end{abstract}

\noindent%
{\it Keywords:} Covariate-adaptive randomization; covariate balance; high-dimensional covariates; average treatment effect. 
\vfill

\newpage
\spacingset{1.45} 
\section{Introduction}

Covariate-adaptive randomization (CAR) procedures have been powerful tools to balance covariates between the treatment group and the control group in randomized experiments. A commonly used class of CAR methods consists of stratified CAR procedures. These methods first stratify experimental units according to baseline covariates and then assign treatment statuses to achieve balance within each stratum, marginally across strata, or both. Representative examples include stratified block randomization (\cite{ZELEN1974365}), Pocock and Simon’s minimization procedure (\citealp{taves1974minimization}, \citealp{pocock1975sequential}) and Hu and Hu's procedure (\cite{HuHu}). The theoretical properties of this class of methods have been well studied in the existing literature (\citealp{baldi2011covariate}, \citealp{HuHu}, \citealp{hu2020theory}, \citealp{hu2023multi}). Building on the established theoretical properties, statistical inference under stratified CAR procedures has also been extensively studied (\citealp{shao2010theory}, \citealp{ma2015testing}, \citealp{bugni2018inference}, \citealp{bugni2019inference}, \citealp{zhang2020quantile}, \citealp{ye2020robust}, \citealp{liu2024testing}). In addition, regression adjustment has been widely adopted to further improve estimation efficiency (\citealp{ma2022regression}, \citealp{ye2022inference}, \citealp{liu2023lasso}, \citealp{gu2023regression}, \citealp{jiang2023regression}, \citealp{wang2023model}, \citealp{bannick2025general}, \citealp{jiang2025adjustments}). For continuous covariates, stratified CAR procedures can be implemented via discretization, but this may lead to a loss of efficiency. Therefore, several procedures have been proposed to directly incorporate continuous covariates (\citealp{atkinson1982optimum}, \citealp{frane1998method}, \citealp{endo2006minimization}, \cite{stigsby2010rank}, \citealp{lin2012balancing}, \citealp{ma2013balancing}, \citealp{qin2022adaptive}). Among the aforementioned works, the theoretical properties of many CAR procedures remain incomplete. Moreover, for those with well-established theoretical guarantees, subsequent statistical inference typically relies on linear model assumptions relating the response to the covariates.

Recently, \cite{ma2024new} proposed a unified family of CAR procedures. Their approach specifies a feature map of the covariates and sequentially minimizes the Euclidean norm of the imbalance vector defined by the mapped features. By choosing different feature maps, the procedure can target the balance of different covariate features and encompasses many commonly used CAR procedures as special cases. Under this framework, covariates included in the feature map can be well balanced. However, additional covariates that are not incorporated in the feature map may remain imbalanced, and their imbalance may even be worse than that under complete randomization (CR), which can lead to a loss of efficiency in estimating the treatment effect. Additionally, \cite{liu2025properties} extended this framework to accommodate unequal allocation ratios. Under unequal allocation, covariates included in the feature map can still be balanced, but additional covariates may even exhibit the so-called ``shift problem'', where the imbalance vector is no longer centered at zero. This phenomenon may introduce bias in treatment effect estimation. To address the issues described above, \cite{zhang2026covariate} proposed a new family of CAR procedures. Under this framework, although the covariates included in the feature map may not be balanced as well as under the procedure of \cite{ma2024new}, their imbalance is still smaller than that under CR. More importantly, the imbalance of additional covariates that are not incorporated in the feature map is guaranteed not to be worse than that under complete randomization. Furthermore, under unequal allocation ratios, the procedure does not lead to the "shift problem". 

The existing theoretical results on the CAR procedures and subsequent statistical inference mainly focus on the low-dimensional setting, that is, the dimension of covariates is fixed as the sample size goes to infinity. However, in modern experiments, researchers often collect a large number of covariates. This motivates the need to develop methodology and theory for high-dimensional settings, where the covariate dimension $p$ or the feature dimension $q$ diverges as the sample size $n$ grows. To the best of our knowledge, existing work is limited and has mainly developed along two directions. The first considers stratified CAR procedures, which partition units into a fixed number of strata based on essential baseline covariates and incorporate additional high-dimensional covariates via regression adjustment to improve efficiency (\citealp{liu2023lasso}, \citealp{jiang2025adjustments}). However, such approaches rely on a pre-specified low-dimensional stratification and may suffer from information loss when important covariates are excluded. The second focuses on selecting influential covariates from a diverging set and maintaining balance with respect to these selected covariates (\citealp{zhang2022covariate}, \citealp{guo2025covariate}). Theoretical guarantees for this approach typically rely on strong sparsity assumptions, often requiring the number of relevant covariates to be bounded by a constant, which may be unrealistic in practice.

The purpose of this paper is to establish theoretical properties of the unified family of CAR procedures proposed by \cite{ma2024new} and \cite{zhang2026covariate} in high-dimensional settings, which we refer to as imbalance-efficient CAR (IE-CAR) procedures and imbalance-robust CAR (IR-CAR) procedures in this paper, respectively, according to their theoretical behavior in low-dimensional settings. Our contributions are threefold. First, we establish the dimension-dependent convergence rates for the imbalance of the covariates used in randomization. For both classes of procedures, the imbalance is of a smaller order than that under CR when $q=o(n)$, whereas it is of the same order when $q=\Omega(n)$. These results quantify how the balancing performance changes as the feature dimension increases. Second, for IR-CAR, we establish a law of large numbers and a central limit theorem for the imbalance of additional covariates under suitable dimensionality conditions. Third, based on these theoretical properties, we establish the asymptotic behavior of the difference-in-means estimator for the treatment effect under IR-CAR, and further construct the corresponding asymptotic valid confidence intervals. These results show that the confidence intervals based on the difference-in-means estimator under IR-CAR are no longer than those under CR. Our results are obtained under moment and eigenvalue conditions that allow the number of covariate features to increase with the sample size. 

The remainder of this article is organized as follows. In Section \ref{Section_framework}, we describe the general framework. Theoretical properties of the imbalance-efficient CAR procedures and the imbalance-robust CAR procedures are presented in Section \ref{Section_IE-CAR} and Section \ref{Section_IR-CAR}, respectively. In Section \ref{Section_ex}, we provide some examples under our framework. Simulation studies and a clinical trial example are presented in Sections \ref{Section_simulation} and \ref{Section_realdata}. Section \ref{Section_conclusion} summarizes the study and provides directions for future work. Technical proofs and additional simulations are provided in the Supplementary Materials.

\section{Framework and Notation}\label{Section_framework}

\subsection{General Framework}
Consider an experiment with $n$ units that enter the experiment sequentially. The experimenter aims to assign a proportion $\pi\in(0,1)$ of the units to the treatment group. Let $T_i$ denote the assignment of $i$th unit, that is, $T_i=1$ indicates that unit $i$ is assigned to the treatment group, and $T_i=0$ indicates that unit $i$ is assigned to the control group. Let $X_i=(X_{i,1},\cdots,X_{i,p})^\top \in \mathbb{R}^p$ denote the covariates adopted in the randomization procedure. We consider balancing general covariate features $\phi(X_i)$, which is defined by a feature map $\phi:\mathbb{R}^p \mapsto \mathbb{R}^{q}$ that maps $X_i$ into a $q$-dimensional feature vector $\phi(X_i)$. Here, we consider a high-dimensional setting in which both \(p\) and \(q\) grow with \(n\), i.e., \(p,q\to\infty\) as \(n\to\infty\). Then, the corresponding imbalance vector can be defined as $\Lambda_n = \sum_{i=1}^n (T_i-\pi)\phi(X_i)$ and the imbalance measure can be defined as the squared Euclidean norm of $\Lambda_n$, that is 
\begin{align*}
\text{Imb}_n = \left\| \Lambda_n \right\|^2 = \left\| \sum_{i=1}^n (T_i-\pi)\phi(X_i) \right\|^2. 
\end{align*}
Let $\text{Imb}_i^{(1)}$ and $\text{Imb}_i^{(0)}$ denote the potential imbalance measure if the $i$th unit is assigned to the treatment group and the control group, respectively. It is easy to verify that 
\begin{align*}
\text{Imb}_i^{(1)}-\text{Imb}_i^{(0)} = 2\Lambda_{i-1}^\top\phi(X_i) + (1-2\pi)\|\phi(X_i) \|^2. 
\end{align*}
Hence, if $\Lambda_{i-1}^\top \phi(X_i) >0$, the $i$th unit is assigned to treatment with probability less than $\pi$; if $\Lambda_{i-1}^\top \phi(X_i) <0$ , the assignment probability is greater than $\pi$, thereby steering the allocation toward the direction that reduces the growth of the imbalance measure. 

Based on this perspective, CAR procedures can be broadly classified into two categories according to how the allocation probability depends on the imbalance. The first category consists of procedures in which the allocation probability depends on the value of $\Lambda_{i-1}^\top \phi(X_i)$ but does not vary with the allocation step $i$, so that the influence of imbalance remains at a constant level throughout the sequential assignment process. Representative examples include \cite{ma2024new}, \cite{zhang2023asymptotic} and \cite{liu2025properties}. The second category includes procedures in which the allocation probability depends on a scaled version of the imbalance, typically of the form $(i-1)^{-\gamma}$ with $0<\gamma<1$ for the  $i$th unit. Consequently, the allocation probability is adjusted in a direction that favors covariate balance, while the influence of the imbalance on the allocation probability gradually diminishes as the assignment proceeds. Representative examples include \cite{zhang2019theory} and \cite{zhang2026covariate}. We will study the theoretical properties of these two classes of CAR procedures separately under high-dimensional settings in the subsequent sections.

\subsection{Notation}
The following notation is adopted throughout this paper. For two positive sequences $a_n$ and $b_n$, we write $a_n = O(b_n)$ if $a_n/b_n \leq C$ for some constant $C > 0$, $a_n = \Omega(b_n)$ if $a_n/b_n \geq c$ for some constant $c > 0$, $a_n = o(b_n)$ if $a_n/b_n \to 0$ and $a_n \asymp b_n$ if $c \leq a_n/b_n \leq C$ for some constants $0 < c \leq C < \infty$, i.e., $a_n = O(b_n)$ and $a_n = \Omega(b_n)$ simultaneously. For a sequence of random variables $\xi_n$, we write $\xi_n\xrightarrow{d}\xi$, if $\xi_n$ converges weakly to $\xi$ and write $\xi_n \xrightarrow{P} \xi$, if $\xi_n$ converges in probability to $\xi$. For $a$, $b\in\mathbb{R}$, let $a\vee b$ and $a\wedge b$ denote the maximum and minimum of $a$ and $b$, respectively. For a random vector $\xi$ and a random variable $\eta$, let $\Pi_{\xi} \eta$ denote the $L_2$ projection of $\eta$ onto the linear span of $\xi$ without intercept, that is, $\Pi_{\xi} \eta = \xi^\top b^*$, where $b^* = \arg\min_{b} \mathbb{E}[(\eta - \xi^\top b)^2]$.

\section{Imbalance-Efficient CAR Procedures}\label{Section_IE-CAR}
\subsection{Framework}
\cite{ma2024new} proposed a unified CAR procedure with $\pi=1/2$, which was further extended to the setting with $\pi\ne 1/2$ by \cite{liu2025properties}. In the low-dimensional setting, the imbalance process $\Lambda_n$ is positive recurrent under mild regularity conditions (e.g., a density condition as specified in Assumption 4 of \cite{ma2024new}), which implies that $\Lambda_n = O_P(1)$. Owing to its strong ability to control imbalance for the targeted covariates, we refer to this class of procedures as imbalance-efficient CAR procedures (IE-CAR). The procedure is defined as follows: \\
(i) The first unit is assigned to the treatment with probability $\pi$ and the control with probability $1-\pi$. \\
(ii) Suppose that $i - 1$ units have been assigned ($1<i\leq n$), and the results of assignments of the $i-1$ units $\{T_j\}_{j=1}^{i-1}$ and all covariates up to $i$th unit $\{X_j\}_{j=1}^i$ are observed. Then, we can accordingly calculate $\Lambda_{i-1}^\top \phi(X_i)$ and the treatment assignment for the $i$th unit is generated according to
\begin{align*}
\mathbb{P}(T_i = 1 | \Lambda_{i-1}, X_i) = \begin{cases}
\rho & \text{if }  \Lambda_{i-1}^\top \phi(X_i) < 0,  \\
2\pi - \rho & \text{if } \Lambda_{i-1}^\top \phi(X_i) > 0, \\
\pi & \text{if } \Lambda_{i-1}^\top \phi(X_i) = 0,
\end{cases}
\end{align*}
where $\pi < \rho < \min\{1,2\pi \} $. \\
(iii) Repeat step (ii) until all units are assigned.
\begin{remark}
Here we make a modification to the procedure proposed in \cite{liu2025properties}. When $p$ and $q$ are fixed, the convergence rate of the imbalance vector is the same as that of the procedure in \cite{liu2025properties}, which follows as a corollary of Theorem \ref{Thm_traditional_CAR} established later in the paper. Furthermore, the proposed procedure facilitates theoretical analysis in high-dimensional settings. Specifically, the procedure is equivalent to the unified procedure proposed by \cite{ma2024new} when $\pi=1/2$. 
\end{remark}

\begin{remark}
The biased-coin probability $\rho$ is recommended to be chosen substantially larger than $\pi$. For example, one may take $0.1\pi+0.9\min\{1,2\pi\}$, $0.2\pi+0.8\min\{1,2\pi\}$ and $0.3\pi+0.7\min\{1,2\pi\}$. When $\pi=1/2$, these choices yield  $0.95$, $0.90$, and $0.85$, respectively, which are commonly recommended values under equal allocation (\cite{HuHu}).
\end{remark}

\subsection{Theoretical Properties of Imbalance Measure}
We proceed to investigate the convergence rate of $\mathbb{E}[\|\Lambda_n\|^2]$ under IE-CAR procedures. First, we introduce the following assumptions.
\begin{assumption}\label{assump_iid}
The covariates $\{X_i \}_{i=1}^n$ are independent and identically distributed as $X$. 
\end{assumption}

\begin{assumption}\label{assump_phi1}
There exist constants $\nu \geq \iota \geq 4$ such that the  feature map $\phi(X)$ satisfies $\mathbb{E}[\| \phi(X) \|^\nu] \asymp q^{\nu/2}$ and 
\begin{align}\label{eq_assump_phi}
	\sup_{\|u\|=1} \mathbb{E} \left[ \left|u^\top \phi(X)\right|^\iota\right] \leq M_{\iota} < \infty, 
\end{align}
where $M_{\iota}$ is a constant depending only on $\iota$. 
\end{assumption}

\begin{assumption}\label{assump_boundSigma}
The nonzero eigenvalues of $\boldsymbol{\Sigma}_{\phi\phi} = \mathbb{E}[\phi(X)\phi(X)^\top]$ are bounded below by a positive constant and $\operatorname{rank}(\boldsymbol{\Sigma}_{\phi\phi}) \asymp q$.
\end{assumption}

Assumption \ref{assump_phi1} requires that the $\nu$-th moment of the norm of the feature map $\phi(X)$ scales at the same rate as that of a $q$-dimensional vector with components of constant order. It also imposes a moment constraint on all one-dimensional projections of $\phi(X)$, which is a standard condition in high-dimensional statistics. In addition, the projection moment condition in Assumption \ref{assump_phi1} implies that the largest eigenvalue of $\boldsymbol{\Sigma}_{\phi\phi}$ is uniformly bounded. Together with Assumption \ref{assump_boundSigma}, this ensures that all nonzero eigenvalues of $\boldsymbol{\Sigma}_{\phi\phi}$ are bounded above and bounded away from zero. Moreover, the condition $\operatorname{rank}(\boldsymbol{\Sigma}_{\phi\phi}) \asymp q$ ensures that the number of nonzero eigenvalues of \(\boldsymbol{\Sigma}_{\phi\phi}\) is of order $q$. Together with Assumption \ref{assump_phi1}, this yields $\mathbb{E}\left[\|\phi(X)\|^r\right] \asymp q^{r/2}$ for $1\le r\le\nu$.

\begin{remark}
For the subsequent theoretical results, we require Assumption \ref{assump_phi1} to hold for all $\nu \geq \iota\geq 4$, which allows us to obtain sharper bounds. This condition is satisfied, for example, when \(\phi(X)\) has sub-exponential tails.
\end{remark}

The following theorem specifies the theoretical properties of IE-CAR procedure in high-dimensional settings. 

\begin{theorem}\label{Thm_traditional_CAR}
Suppose that Assumptions \ref{assump_iid}-\ref{assump_boundSigma} hold. Then, \\
(i) if $q=o(n)$, $\mathbb{E}[\|\Lambda_n\|^2] = O(n^{\frac{1}{\nu-1}} q^{\frac{2\nu-3}{\nu-1}}) = o(nq)$ and $\mathbb{E}[\|\Lambda_n\|^2] = \Omega(q^2)$; \\
(ii) if $q=\Omega(n)$, $\mathbb{E}[\|\Lambda_n\|^2] \asymp nq$. \\ Specifically, if Assumption \ref{assump_phi1} holds for all $\nu \geq 4$ and $q=o(n)$, then $\mathbb{E}[\|\Lambda_n\|^2] = O(n^{\varepsilon}q^{2-\varepsilon})$ for any $\varepsilon>0$. 
\end{theorem}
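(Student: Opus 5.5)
The proof tracks the one-step drift of $\|\Lambda_i\|^2$ and sums. Write $\Lambda_i=\Lambda_{i-1}+(T_i-\pi)\phi(X_i)$, so
\begin{align*}
\mathbb{E}\big[\|\Lambda_i\|^2-\|\Lambda_{i-1}\|^2 \,\big|\, \mathcal{F}_{i-1}\big]
=2\,\mathbb{E}\big[(T_i-\pi)\Lambda_{i-1}^\top\phi(X_i)\,\big|\,\mathcal{F}_{i-1}\big]+\mathbb{E}\big[(T_i-\pi)^2\|\phi(X_i)\|^2\,\big|\,\mathcal{F}_{i-1}\big].
\end{align*}
Under the IE-CAR rule, $\mathbb{E}[T_i-\pi\mid \Lambda_{i-1},X_i]=-(\rho-\pi)\,\mathrm{sgn}(\Lambda_{i-1}^\top\phi(X_i))$. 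The cross term is therefore $-2(\rho-\pi)\,\mathbb{E}[|\Lambda_{i-1}^\top\phi(X)|\mid\Lambda_{i-1}]$. The second term is bounded above by $C q$ and below by $c q$, since $(T_i-\pi)^2\ge \min\{\pi,1-\pi\}^2-$type bounds hold with probability bounded away from zero. Writing $\lambda=\Lambda_{i-1}$, the key quantity is $g(\lambda)=\mathbb{E}|\lambda^\top\phi(X)|$.

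\textbf{Upper bound.} I need a lower bound on $g(\lambda)$ of order $\|\lambda\|$, at least on the range of $\Sigma_{\phi\phi}$; note $\Lambda_i$ always lies in that range a.s. By Hölder/Paley–Zygmund, $\mathbb{E}[(u^\top\phi)^2]\le (\mathbb{E}|u^\top\phi|)^{2/3}(\mathbb{E}|u^\top\phi|^4)^{1/3}$. Combining Assumption \ref{assump_boundSigma} (eigenvalues $\ge c$) with Assumption \ref{assump_phi1} ($\iota\ge4$) gives $g(\lambda)\ge c_0\|\lambda\|$. Hence $a_i:=\mathbb{E}\|\Lambda_i\|^2$ satisfies $a_i\le a_{i-1}-2c_0(\rho-\pi)\mathbb{E}\|\Lambda_{i-1}\|+Cq$. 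This alone gives a Lyapunov drift yielding $\|\Lambda\|\lesssim q$ in expectation, but only crudely.

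To get the stated rate $n^{1/(\nu-1)}q^{(2\nu-3)/(\nu-1)}$, I would instead work with a higher moment, or a truncation. The increment $\|\phi(X_i)\|^2$ has heavy tail controlled by $\mathbb{E}\|\phi\|^\nu\asymp q^{\nu/2}$. I would bound $\max_{i\le n}\|\phi(X_i)\|$ via $\mathbb{E}\max\le n^{1/\nu}q^{1/2}$. Then I would argue that once $\|\Lambda_{i-1}\|\ge K$ the drift is $-c\|\Lambda_{i-1}\|+Cq<0$, so the process is pulled back below level $\asymp q$, and overshoots are controlled by jump sizes. Concretely, I would use the potential $V=\|\Lambda\|^{r}$ with $r$ tied to $\nu$, Taylor expand with the $\nu$-th moment bound, obtain $\mathbb{E}\|\Lambda_n\|^{r}\lesssim$ (drift balance), and then optimize $r$ against $n$. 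I expect that optimization to produce the exponents $\frac1{\nu-1}$ and $\frac{2\nu-3}{\nu-1}$. Letting $\nu\to\infty$ gives $n^\varepsilon q^{2-\varepsilon}$. Since $q=o(n)$, $n^{1/(\nu-1)}q^{(2\nu-3)/(\nu-1)}=nq\,(q/n)^{(\nu-2)/(\nu-1)}=o(nq)$. The trivial bound $a_n\le Cnq$, which holds by dropping the nonpositive drift, gives the upper half of (ii).

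\textbf{Lower bounds.} For $\Omega(q^2)$, use $g(\lambda)\le (\lambda^\top\Sigma\lambda)^{1/2}\le C\|\lambda\|$. Then $a_i\ge a_{i-1}-C'\sqrt{a_{i-1}}+cq$, so $a_i$ cannot stay below a level $\asymp q^2$: whenever $\sqrt{a_{i-1}}\le cq/(2C')$ the sequence increases by at least $cq/2$. This gives $a_n\gtrsim \min(q^2,nq)$, which yields $\Omega(q^2)$ when $q=o(n)$ and $\Omega(nq)$ when $q=\Omega(n)$. Together with the trivial upper bound, this proves (ii).

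\textbf{Main obstacle.} The main obstacle is the sharp upper rate in (i): controlling overshoot/heavy-tail effects through the $\nu$-th moment so as to get exactly $n^{1/(\nu-1)}q^{(2\nu-3)/(\nu-1)}$. My first attempt would use the truncation $\|\phi(X_i)\|\le \tau$ with $\tau=(nq^{\nu/2}\cdot\text{something})^{1/\nu}$-type level. The truncated part has drift-balance bound $\asymp \tau q$ (overshoot $\times$ equilibrium radius), and the tail part contributes $n\,\mathbb{E}[\|\phi\|^2\mathbf 1\{\|\phi\|>\tau\}]\le nq^{\nu/2}\tau^{2-\nu}$. Balancing $\tau q$ with $nq^{\nu/2}\tau^{2-\nu}$ gives $\tau^{\nu-1}=nq^{\nu/2-1}$. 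Hence $\tau q=n^{1/(\nu-1)}q^{(\nu/2-1)/(\nu-1)+1}$, and I would then need to match this to the claimed exponent by checking the precise equilibrium radius (likely $\tau\cdot q^{1/2}$ scaling rather than $\tau q$).
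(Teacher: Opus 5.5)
Several parts of your proposal match the paper:
\begin{itemize}
\item the drift identity;
\item the bound $\mathbb{E}[|\lambda^\top\phi(X)|]\ge c_0\|\lambda\|$ on the range of $\boldsymbol{\Sigma}_{\phi\phi}$ (your first/fourth-moment H\"older interpolation is a fine substitute for the paper's Paley--Zygmund Lemma~\ref{Lemma_qusi_smallball});
\item the trivial bound $\mathbb{E}[\|\Lambda_n\|^2]\le n\beta_2$, where $\beta_r=\mathbb{E}[\|\phi(X)\|^r]\asymp q^{r/2}$;
\item the identity showing the stated rate is $o(nq)$;
\item the $\Omega(\min\{q^2,nq\})$ lower bound.
\end{itemize}
For the lower bound you still need one check: $x\mapsto x-C'\sqrt{x}+cq$ is increasing on $[A,\infty)$ and is at least $A$ at $x=A$. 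Otherwise $a_m$ could fall back below $A\asymp q^2$ after exceeding it.

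The genuine gap is the main upper bound $O(n^{1/(\nu-1)}q^{(2\nu-3)/(\nu-1)})$ in (i). Both the $o(nq)$ claim and the $n^{\varepsilon}q^{2-\varepsilon}$ refinement rest on it.
\begin{itemize}
\item You name the potential $\|\Lambda\|^r$, but never say how a higher-moment bound is converted into a bound on $\mathbb{E}[\|\Lambda_n\|^2]$.
\item The recursion $a_{m+1}\le a_m-2c_0(\rho-\pi)\mathbb{E}[\|\Lambda_m\|]+\beta_2$ does not by itself bound $\mathbb{E}[\|\Lambda_m\|]$ at every $m$. Even if it did, a first-moment bound would not control the second moment.
\item The truncation fallback is not a proof. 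Large and small jumps cannot be split into separate processes, because every assignment depends on the full $\Lambda_{i-1}$. The `equilibrium' size $\tau q$ is asserted rather than derived. By your own computation it gives the exponent $(3\nu-4)/(2(\nu-1))$, not $(2\nu-3)/(\nu-1)$.
\end{itemize}

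The paper closes the gap in three steps.

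(a) Take $r=\nu$. The inequality $\|u+v\|^r-\|u\|^r\le r\,u^\top v\,\|u\|^{r-2}+c_r(\|u\|^{r-2}\|v\|^2+\|v\|^r)$ and the small-ball bound give $\mathbb{E}[\|\Lambda_{m+1}\|^\nu\mid\Lambda_m]-\|\Lambda_m\|^\nu\le g(\|\Lambda_m\|)$, with $g(x)=-c\,x^{\nu-1}+C(\beta_\nu+\beta_2x^{\nu-2})$. Bounding $g$ by its maximum over $x\ge0$ gives a uniform per-step increment $H_\nu\asymp q^{\nu-1}$. Hence only $\mathbb{E}[\|\Lambda_m\|^\nu]\le mH_\nu$, linear in $m$; no equilibrium argument is used.

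(b) Let $m^*\in\{0,\dots,m\}$ be the last index with $-2c_0(\rho-\pi)\mathbb{E}[\|\Lambda_{m^*}\|]+\beta_2\ge0$; it exists since $\Lambda_0=0$. Then $\mathbb{E}[\|\Lambda_{m^*}\|]\lesssim q$. All later increments of $a_k$ are negative, so $a_m\le a_{m^*}+\beta_2$.

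(c) Apply H\"older interpolation between the first and $\nu$-th moments at time $m^*$:
\begin{align*}
\mathbb{E}[\|\Lambda_{m^*}\|^2]\le\big(\mathbb{E}[\|\Lambda_{m^*}\|]\big)^{\frac{\nu-2}{\nu-1}}\big(\mathbb{E}[\|\Lambda_{m^*}\|^{\nu}]\big)^{\frac{1}{\nu-1}}\lesssim q^{\frac{\nu-2}{\nu-1}}\big(nq^{\nu-1}\big)^{\frac{1}{\nu-1}}=n^{\frac{1}{\nu-1}}q^{\frac{2\nu-3}{\nu-1}}.
\end{align*}
This is exactly the stated rate. No optimization of $r$ against $n$ is involved: the largest available moment $r=\nu$ is best, and taking $\nu=1+1/\varepsilon$ gives the refinement.
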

Under CR, which serves as a benchmark, $\mathbb{E}[\| \Lambda_n \|^2] \asymp nq$. Theorem \ref{Thm_traditional_CAR} therefore reveals a clear transition in the balancing performance of IE-CAR as the dimension \(q\) increases. When $q=o(n)$, IE-CAR achieves $\mathbb{E}[\|\Lambda_n  \|^2] = o(nq)$, and hence provides a strictly smaller order of covariate imbalance than CR. Moreover, when Assumption \ref{assump_phi1} holds for all \(\nu\ge4\), the upper bound can be sharpened to \(O(n^\varepsilon q^{2-\varepsilon})\) for any \(\varepsilon>0\), which approaches the lower bound of order \(q^2\) as higher-order moment conditions are imposed. In contrast, when $q=\Omega(n)$, $\mathbb{E}[\|\Lambda_n\|^2] \asymp nq$, which is of the same order as under CR. Thus, the relative improvement of IE-CAR over CR vanishes once the feature dimension becomes comparable to or larger than the sample size. 

Under low-dimensional settings, the asymptotic behavior of the imbalance of additional covariates can be derived using classical tools from Markov chain theory, including the establishment of invariant measures via drift conditions, ergodic laws of large numbers, and the analysis of the associated Poisson equation. However, in high-dimensional settings, these tools become less effective, making the analysis of the asymptotic behavior of the imbalance of additional covariates considerably more challenging.  We leave this problem for future research.

\section{Imbalance-Robust CAR Procedures}\label{Section_IR-CAR}

\subsection{Framework}
\cite{zhang2026covariate} proposed IR-CAR, which ensures that the additional covariates are no more imbalanced than under CR in low-dimensional settings. This improvement comes at the cost of a larger imbalance for the specified covariates than that under IE-CAR procedures, although it still remains of order $o_P(\sqrt{n})$. Motivated by this trade-off, we refer to the procedure as imbalance robust CAR procedure (IR-CAR). The procedure is defined as follows: \\
(i) The first unit is assigned to the treatment group with probability $\pi$ and the control group with probability $1-\pi$. \\
(ii) Suppose that $i - 1$ units have been assigned to treatments ($1<i\leq n$), and the results of assignments of the $i-1$ units $\{T_j\}_{j=1}^{i-1}$ and all covariates up to $i$th unit $\{X_j\}_{j=1}^i$ are observed. Then, we can accordingly calculate $\Lambda_{i-1}^\top \phi(X_i)$ and the treatment assignment for the $i$th unit is generated according to
\begin{align*}
\mathbb{P}(T_i = 1 | \Lambda_{i-1}, X_i) = \ell\left(\frac{\Lambda_{i-1}^\top \phi(X_i)}{(i-1)^\gamma} \right),
\end{align*}
where $0 < \gamma <1$, $\ell(x):\mathbb{R} \to (0,1)$ is a non-increasing function with $\ell(0) = \pi$, $\ell^{\prime}(0)<0$ and $\ell(x)$ is twice differentiable at $x=0$. \\
(iii) Repeat step (ii) until all units are assigned.
\begin{remark}
The allocation function can take various forms, for example, $\ell(x) = \Phi(-x+z_{\pi})$, where $\Phi$ and $z_{\alpha}$ denote the standard normal distribution and its $\alpha$th quantile, respectively. \cite{zhang2026covariate} further proposed a class of allocation functions to achieve improved theoretical properties, given by
\begin{align}\label{prop_ellx}
	\ell(x) = \underline{\rho} \vee (\pi-\lambda x) \wedge \overline{\rho},
\end{align}
where $\lambda>0$ is a tuning parameter, $0<\underline{\rho}<\pi$ and $\pi<\overline{\rho}<1$ are constants ensuring that the assignment probability lies in $(0,1)$.
\end{remark}

\subsection{Theoretical Properties of Imbalance Measure} 
We now study the convergence rate of $\mathbb{E}[\|\Lambda_n\|^2]$ under IR-CAR procedures, which is specified in the following theorem. 
\begin{theorem}\label{Thm_RCAR}
Suppose that Assumptions \ref{assump_iid}-\ref{assump_boundSigma} hold. Then, \\
(i) if $q=o(n^{\gamma})$, $\mathbb{E}[\|\Lambda_n\|^2] = O(n^{\gamma}q + n^{\frac{\gamma\nu -2\gamma + 2}{2(\nu-1)}} q^{\frac{3\nu-4}{2(\nu-1)}}) = o(nq)$ and $\mathbb{E}[\|\Lambda_n\|^2] = \Omega(n^{\gamma}q)$; \\
(ii) if $q=\Omega(n^{\gamma})$ and $q=o(n)$, $\mathbb{E}[\|\Lambda_n\|^2] =  O(n^{\frac{1}{\nu-1}} q^{\frac{2\nu-3}{\nu-1}}) = o(nq)$ and $\mathbb{E}[\|\Lambda_n\|^2] = \Omega(q^2)$; \\ 
(iii) if $q = \Omega(n)$, $\mathbb{E}[\|\Lambda_n\|^2] \asymp nq$.\\
Specifically, if Assumption \ref{assump_phi1} holds with all $\nu\geq 4$, then \\
(i) if $q=O(n^{\gamma-\varepsilon_0})$ for some $\varepsilon_0>0$, $\mathbb{E}[\|\Lambda_n\|^2] \asymp n^{\gamma}q$;\\
(ii) if $q=\Omega(n^{\gamma})$ and $q=o(n)$, $\mathbb{E}[\|\Lambda_n\|^2] = O(n^{\varepsilon}q^{2-\varepsilon})$ for any $\varepsilon>0$. 
\end{theorem}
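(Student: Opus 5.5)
The proof rests on the one-step drift identity for $V_i=\|\Lambda_i\|^2$. Write $\phi_i=\phi(X_i)$, $D_i=\Lambda_{i-1}^\top\phi_i$, and $\mathcal F_{i-1}=\sigma(\Lambda_{i-1})$. Since $\Lambda_i=\Lambda_{i-1}+(T_i-\pi)\phi_i$,
\begin{align*}
\mathbb{E}[V_i\mid \mathcal F_{i-1},X_i]=V_{i-1}+2D_i\bigl(\ell(D_i/(i-1)^\gamma)-\pi\bigr)+\bigl(\pi(1-\pi)+(1-2\pi)(\ell-\pi)\bigr)\|\phi_i\|^2 .
\end{align*}
Because $\ell$ is non-increasing with $\ell(0)=\pi$, the middle term is nonpositive. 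Taking expectations and summing gives $\mathbb{E}V_n\le Cnq$, since $\mathbb{E}\|\phi\|^2\asymp q$. This is the upper bound in case (iii).

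For the matching lower bounds I would use a few facts. First, $\ell$ takes values in $[\underline c,\overline c]\subset(0,1)$ on any bounded region, and $\mathbb{E}\|\phi\|^2\asymp q$. Second, $|2D(\ell-\pi)|\le 2|D|\cdot|\ell(D/(i-1)^\gamma)-\pi|$, and near zero this is of order $D^2/(i-1)^\gamma$. Third, for the $\Omega(q^2)$ and $\Omega(n^\gamma q)$ bounds I would use a second-moment argument on the stationary-like level. If $V_{i-1}\ll q$, then $|D_i|$ is at most of order $\sqrt{V_{i-1}}$, because the projection moments are bounded. The negative drift is then at most $C\min\{V_{i-1},\,V_{i-1}/(i-1)^\gamma\}$, while the positive term is $\asymp q$. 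A drift comparison then forces $\mathbb{E}V_i\gtrsim \min(i,\,q\,(i-1)^\gamma)$ in the smooth regime, and $\gtrsim q^2$ overall. The second bound also uses $\mathbb{E}D^2=\mathbb{E}[\Lambda^\top\Sigma_{\phi\phi}\Lambda]\le C\,\mathbb{E}V$ together with a Taylor expansion of $\ell$. For the $q^2$ floor, the constant-drift reduction $|D(\ell-\pi)|\le c|D|$ and $\mathbb{E}|D_i|\le C\sqrt{\mathbb{E}V_{i-1}}$ give increments of at least $cq-C\sqrt{\mathbb{E}V_{i-1}}$. Hence $\mathbb{E}V$ cannot remain below $cq^2$ for long. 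In case (iii), where $q\gtrsim n$, the same inequality gives $\mathbb{E}V_n\gtrsim nq$, because the increments stay $\gtrsim q$ until $V\sim q^2\ge nq$.

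For the upper bounds in (i)--(ii), the key step is to lower-bound the negative drift $\mathbb{E}[D_i(\pi-\ell(D_i/(i-1)^\gamma))\mid\mathcal F_{i-1}]$ in terms of $V_{i-1}$. Conditionally on $\Lambda_{i-1}=\Lambda$, the variable $D=\Lambda^\top\phi$ has second moment $\Lambda^\top\Sigma_{\phi\phi}\Lambda$. This is $\asymp\|\Lambda\|^2$ on the range of $\Sigma_{\phi\phi}$ by Assumption \ref{assump_boundSigma}, and the component in the null space of $\Sigma_{\phi\phi}$ never moves, since $\phi\in\operatorname{range}$ a.s. Paley--Zygmund with the $\iota$-th projection moment then gives $\mathbb{P}(|D|\ge c\|\Lambda\|)\ge c'$. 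Because $\ell'(0)<0$ and $\ell$ is monotone, $x(\pi-\ell(x))\ge c\min(x^2,|x|)$. This yields
\begin{align*}
\mathbb{E}[V_i-V_{i-1}\mid\mathcal F_{i-1}]\le Cq-c\min\Bigl\{\frac{V_{i-1}}{(i-1)^\gamma},\ \sqrt{V_{i-1}}\Bigr\}+R_i ,
\end{align*}
where $R_i$ collects the cross term $(1-2\pi)(\ell-\pi)\|\phi_i\|^2$ and the correlation between $\|\phi_i\|^2$ and $D_i$. A recursion of the form $a_i\le a_{i-1}+Cq-c\,a_{i-1}/i^\gamma$ stabilizes at $n^\gamma q$, which is the first term of (i). The branch $\sqrt V$ stabilizes at $q^2$, which governs (ii). Since $x\mapsto\min(x/i^\gamma,\sqrt x)$ is concave, the passage from the conditional inequality to expectations is not immediate: Jensen goes in the wrong direction. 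So I would work with the pathwise quantity directly. I would truncate $\|\phi_i\|$ at level $K$, bounding the tail via $\mathbb{E}\|\phi\|^\nu\asymp q^{\nu/2}$ and Markov, which gives a tail contribution of order $q^{\nu/2}K^{2-\nu}$ per step. I would then use a Lyapunov/stopping argument on the event $\{V_{i-1}\ge$ threshold$\}$. Optimizing $K$ against the truncated drift is what produces the exponents $n^{1/(\nu-1)}q^{(2\nu-3)/(\nu-1)}$ and $n^{(\gamma\nu-2\gamma+2)/(2(\nu-1))}q^{(3\nu-4)/(2(\nu-1))}$. The same computation as in Theorem \ref{Thm_traditional_CAR} applies here, where the constant-drift branch corresponds to $\ell$ saturating. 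The refinements for all $\nu$ follow by letting $\nu\to\infty$. In particular, for $q=O(n^{\gamma-\varepsilon_0})$ the second term in (i) is $o(n^\gamma q)$, which gives $\asymp n^\gamma q$.

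I expect the main obstacle to be this truncation-and-optimization step: converting a conditional, concave drift bound into a bound on $\mathbb{E}V_n$ while controlling the heavy-tail contribution of $\|\phi_i\|^2$ and the coupling of $\|\phi_i\|$ with $D_i$. I would first try a moment bound on $V^{r}$ with a suitably chosen $r$, or a Hajek-type exponential-supermartingale argument, and fall back on the stopping-time truncation described above.
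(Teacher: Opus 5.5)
Your drift identity, the Paley--Zygmund small-ball step giving the pathwise negative drift $c\min\{V_{i-1}/(i-1)^\gamma,\sqrt{V_{i-1}}\}$, the bound $\mathbb{E}V_n\le n\,\mathbb{E}\|\phi(X)\|^2$, and your lower-bound mechanism all match what the paper does. There are a few slips. The negative drift is bounded above by $C\min\{\sqrt{V_{i-1}},V_{i-1}/(i-1)^\gamma\}$, not $C\min\{V_{i-1},\cdot\}$. The recursion $a_i\ge(1-Ci^{-\gamma})a_{i-1}+cq$ gives $a_n\gtrsim q\,n^\gamma$; your $\min(n,q\,n^\gamma)$ is too weak once $q>n^{1-\gamma}$. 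For the $q^2$ floor you must also show that $\mathbb{E}V$ never falls back below $A\asymp q^2$ once it gets there. This holds because $x\mapsto x-C\sqrt{x}+cq$ is increasing on $[A,\infty)$ and maps $A$ to at least $A$. Finally, your remainder $R_i$ is not an issue: $(T_i-\pi)^2\le 1$ bounds that term by $\mathbb{E}\|\phi(X)\|^2$ whatever its correlation with $D_i$.

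The genuine gap is the upper bound in (i)--(ii), which is the substance of the theorem. You correctly see that Jensen runs the wrong way. However, what must be controlled is the upper tail of $\|\Lambda\|$ itself. You never show how truncating $\|\phi_i\|$ turns into that control. Your claim that optimizing $K$ yields the stated exponents is not backed by any computation, and handling the discarded event $\{\max_i\|\phi_i\|>K\}$ in expectation would itself need a moment bound on $\Lambda_n$. The denominators $\nu-1$ come from interpolating between a first-moment-type functional of $\|\Lambda\|$ and its $\nu$-th moment, not from a tail bound on $\|\phi\|$.

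The paper uses two ingredients that are missing from your plan.

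First, it proves a crude high-moment bound on $\Lambda$. With $\beta_r=\mathbb{E}\|\phi(X)\|^r\asymp q^{r/2}$, apply $\|u+v\|^r-\|u\|^r\le r\,u^\top v\,\|u\|^{r-2}+c_r(\|u\|^{r-2}\|v\|^2+\|v\|^r)$ with $r=\nu$ and use the same small-ball drift. Then maximize $x\mapsto-\nu c_0x^{\nu-1}(x/m^\gamma\wedge1)+c_\nu(\beta_\nu+\beta_2x^{\nu-2})$ over $x\ge0$. Summing gives $\mathbb{E}\|\Lambda_m\|^\nu\lesssim m^{\gamma\nu/2-\gamma+1}q^{\nu/2}$ when $q=o(n^\gamma)$, and $\lesssim m\,q^{\nu-1}$ otherwise.

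Second, it sidesteps Jensen by working with $Q_m=\mathbb{E}[\|\Lambda_m\|(\|\Lambda_m\|/m^\gamma\wedge1)]$, the expectation of the pathwise drift, so that $\mathbb{E}V_{m+1}-\mathbb{E}V_m\le-2c_0Q_m+\beta_2$. Let $m^*$ be the last index at which the right side is nonnegative. Then $Q_{m^*}\lesssim q$ and $\mathbb{E}V_n\le\mathbb{E}V_{m^*}+\beta_2$. Splitting on $\{\|\Lambda_{m^*}\|\le (m^*)^\gamma\}$ and using H\"older gives
\[
\mathbb{E}V_{m^*}\le (m^*)^{\gamma}Q_{m^*}+Q_{m^*}^{\frac{\nu-2}{\nu-1}}\bigl(\mathbb{E}\|\Lambda_{m^*}\|^{\nu}\bigr)^{\frac{1}{\nu-1}}.
\]
Inserting the crude bound yields exactly $n^\gamma q+n^{\frac{\gamma\nu-2\gamma+2}{2(\nu-1)}}q^{\frac{3\nu-4}{2(\nu-1)}}$ and $n^{\frac{1}{\nu-1}}q^{\frac{2\nu-3}{\nu-1}}$.

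If you want to keep a truncation argument, truncate $\|\Lambda_{m^*}\|$ rather than $\|\phi_i\|$. The bound $\mathbb{E}V_{m^*}\le\{(m^*)^\gamma+K\}Q_{m^*}+K^{2-\nu}\mathbb{E}\|\Lambda_{m^*}\|^\nu$, optimized over $K$, is the same estimate. It still requires the crude $\nu$-th moment bound on $\Lambda$, which your plan never supplies.
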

Theorem \ref{Thm_RCAR} characterizes the growth of the covariate imbalance under IR-CAR across different regimes of the feature dimension $q$. When $q=o(n^\gamma)$, the imbalance satisfies $\mathbb{E}[\|\Lambda_n\|^2] = o(nq)$ with a lower bound of order $n^\gamma q$. In particular, under sufficiently strong moment conditions and $q=O(n^{\gamma-\varepsilon_0})$, $\mathbb{E}[\|\Lambda_n\|^2] \asymp n^{\gamma}q$. When \(q=\Omega(n^\gamma)\) but \(q=o(n)\), IR-CAR still achieves \(\mathbb{E}\|\Lambda_n\|^2=o(nq)\), although the imbalance grows at a faster rate as \(q\) increases. Finally, when \(q=\Omega(n)\), $\mathbb{E}[\|\Lambda_n\|^2] \asymp nq$, which is of the same order as under CR. Thus, \(q\asymp n^\gamma\) marks a change in the growth rate of the imbalance under IR-CAR, while \(q\asymp n\) marks the regime beyond which IR-CAR no longer improves the order of covariate imbalance relative to CR. Compared with IE-CAR, IR-CAR generally allows a larger imbalance of the specified covariates in the regime \(q=o(n^\gamma)\), whereas the two procedures exhibit the same order of upper and lower bounds when \(q=\Omega(n^\gamma)\) and \(q=o(n)\). 

\begin{remark}\label{remark_weaken_condition}
In Theorems \ref{Thm_traditional_CAR} and \ref{Thm_RCAR}, we assume that Assumption \ref{assump_iid}-\ref{assump_boundSigma} hold. If $\phi(X)$ satisfies the small-ball condition, that is, there exist constants $\kappa,\delta >0$ such that for all $u\in\mathbb{R}^q$, $\mathbb{P}(|u^\top \phi(X)|\geq \kappa\| u \| ) \geq \delta$, which is widely used in high-dimensional statistics and learning theory (\citealp{lecue2017regularization}; \citealp{10.1214/17-AOS1562}). Under this condition, we can relax Assumption \ref{assump_phi1} to $\nu>2$ and $\iota =2$, while the conclusions of Theorems \ref{Thm_traditional_CAR} and \ref{Thm_RCAR} remain valid. Further discussions are provided in Remark \ref{remark_further_detail} of the Supplementary Material. 
\end{remark}

\subsection{Theoretical Properties of Additional Covariates}
Let $Z_i$ and $W_i$ denote additional covariates for the $i$th unit, representing observed or unobserved covariates that are not used in the randomization procedure. We assume that $\{(X_i^\top,Z_i,W_i)^\top\}_{i=1}^n$ are i.i.d. copies of $(X^\top,Z,W)^\top$. The following theorem establishes a novel law of large numbers and a central limit theorem for IR-CAR under high-dimensional settings. 

\begin{theorem}\label{Thm_additional covariate}
Suppose that Assumptions \ref{assump_iid}-\ref{assump_boundSigma} hold. \\
(i) If $\gamma > \frac{2}{3\nu-2} $, $q=o( n^{\gamma - \frac{ 2(1-\gamma) }{3\nu-4}})$ and $\mathbb{E}[|Z|]<\infty$, then
\begin{align}\label{Thm_no_shift}
\frac{1}{n} \sum_{i=1}^n (T_i-\pi)Z_i \xrightarrow{P} 0. 
\end{align}
(ii) If $\gamma > 1/2$, $q = o(n^{\gamma-1/2})$, $\mathbb{E}[Z^2] < \infty$, $\mathbb{E}[W] = 0$, $0<\mathbb{E}[W^2] < \infty$ and there exists a constant $c_{\phi,Z}>0$ such that $\mathbb{E}[(Z-\Pi_{\phi(X)}Z)^2] \geq c_{\phi,Z}$, then 
\begin{align}\label{Thm_asymp_normality}
\left(\frac{1}{\sqrt{n}\widetilde{\sigma}_{Z}}\sum_{i=1}^n (T_i-\pi)Z_i, \frac{1}{\sqrt{n}\sigma_W} \sum_{i=1}^n W_i \right)^\top \xrightarrow{d} \mathcal{N}(\boldsymbol{0},\boldsymbol{I}_2),
\end{align}
where $\sigma_W^2 =\mathbb{E}[W^2]$, $\widetilde{\sigma}_Z^2 = \pi(1-\pi)(\mathbb{E}[Z^2] - \mathbb{E}[(\Pi_{\phi(X)}Z)^2]) =\pi(1-\pi)\mathbb{E}[(Z-\Pi_{\phi(X)}Z)^2]$ and $\boldsymbol{I}_2$ denotes the 2×2 identity matrix. \\
(iii) Specifically, if Assumption \ref{assump_phi1} holds for every $\nu\geq \iota\geq 4$, the allocation function is given by (\ref{prop_ellx}) and $q = o(n^{\gamma-\varepsilon})$ for some $\varepsilon>0$. Then, Equation (\ref{Thm_asymp_normality}) holds.
\end{theorem}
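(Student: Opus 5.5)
The plan is to split $Z_i$ into its projection onto $\phi(X_i)$ and the orthogonal residual. Write $b^*$ for the projection coefficient, so $\Pi_{\phi(X)}Z=\phi(X)^\top b^*$, and set $\epsilon_i = Z_i - \phi(X_i)^\top b^*$, so that $\mathbb{E}[\epsilon_i\phi(X_i)]=0$. Then
\begin{align*}
\sum_{i=1}^n (T_i-\pi)Z_i = b^{*\top}\Lambda_n + \sum_{i=1}^n (T_i-\pi)\epsilon_i .
\end{align*}
For the first term, $\|b^*\|^2$ is bounded: $\mathbb{E}[(\phi^\top b^*)^2]\le \mathbb{E}[Z^2]$, and Assumption \ref{assump_boundSigma} keeps the nonzero eigenvalues away from zero (restrict $b^*$ to the range of $\boldsymbol{\Sigma}_{\phi\phi}$). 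Hence $|b^{*\top}\Lambda_n|\le \|b^*\|\,\|\Lambda_n\|$, and Theorem \ref{Thm_RCAR}(i) gives $\mathbb{E}\|\Lambda_n\|^2 = O(n^\gamma q + n^{\frac{\gamma\nu-2\gamma+2}{2(\nu-1)}}q^{\frac{3\nu-4}{2(\nu-1)}})$. For (ii) this must be $o(n)$. The condition $q=o(n^{\gamma-1/2})$ with $\gamma>1/2$ gives $n^\gamma q=o(n^{2\gamma-1/2})$, which is not automatically $o(n)$, so I expect a sharper route. Instead of using the crude norm bound, I would treat $b^{*\top}\Lambda_n$ as a scalar imbalance in one direction and bound it via the drift recursion $\Lambda_i=\Lambda_{i-1}+(T_i-\pi)\phi(X_i)$ projected onto $b^*$. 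The conditional drift is $\ell'(0)(i-1)^{-\gamma}\,b^{*\top}\mathbb{E}[\phi\phi^\top]\Lambda_{i-1}$ plus a Taylor remainder of order $(i-1)^{-2\gamma}|\Lambda_{i-1}^\top\phi|^2$. The dimension condition is exactly what makes the remainder contribution, summed, $o(\sqrt n)$ (for (ii)) or $o(n)$ (for (i)); using the moment bounds $\mathbb{E}|\Lambda^\top\phi|^2\le M\|\Lambda\|^2$ together with Theorem \ref{Thm_RCAR} produces the exponents in the hypotheses. For (i) with only $\mathbb{E}|Z|<\infty$, I first truncate $Z$ at level $K$, handle the bounded part as above, and control the tail by $\mathbb{E}|Z|1\{|Z|>K\}$ uniformly in $n$.

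For the residual term I use the martingale decomposition
\begin{align*}
(T_i-\pi)\epsilon_i = (T_i-p_i)\epsilon_i + (p_i-\pi)\epsilon_i ,
\end{align*}
with $p_i=\ell(\Lambda_{i-1}^\top\phi(X_i)/(i-1)^\gamma)$. The first part is a martingale difference sequence with conditional variance $p_i(1-p_i)\epsilon_i^2$. Since $\Lambda_{i-1}^\top\phi(X_i)/(i-1)^\gamma\to0$ in $L^2$ on average (by Theorem \ref{Thm_RCAR}, $\mathbb{E}\|\Lambda_i\|^2/i^{2\gamma}\to0$ under the stated $q$ rate), we have $p_i\to\pi$, and the normalized variance tends to $\pi(1-\pi)\mathbb{E}[\epsilon^2]=\widetilde\sigma_Z^2$. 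For the second part, $p_i-\pi\approx\ell'(0)(i-1)^{-\gamma}\Lambda_{i-1}^\top\phi(X_i)$. Because $X_i$ is independent of the past and $\mathbb{E}[\epsilon_i\phi(X_i)]=0$, the linear term has conditional mean zero. It is therefore a martingale with variance $\sum i^{-2\gamma}\mathbb{E}[(\Lambda_{i-1}^\top\phi\,\epsilon)^2]$, and the quadratic Taylor remainder is bounded by $\sum i^{-2\gamma}\mathbb{E}|\Lambda_{i-1}^\top\phi|^2|\epsilon|$. Both are shown to be $o(n)$ using the moment assumptions; the fourth-order projection bound handles the mixed moment via Hölder, possibly after truncating $\epsilon$.

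For the joint CLT I apply the martingale CLT to the bivariate array $((T_i-p_i)\epsilon_i,W_i)$. The cross conditional covariance is $(p_i-\pi)\epsilon_iW_i$ summed, normalized by $n$; this is $o_P(1)$ since $\mathbb{E}[(T_i-p_i)\mid\mathcal F_{i-1},X_i,Z_i,W_i]=0$, so the covariance is zero. The Lindeberg condition follows from second moments and identical distribution. Part (iii) uses the same scheme. With the piecewise linear $\ell$ in (\ref{prop_ellx}), $p_i-\pi=-\lambda(i-1)^{-\gamma}\Lambda_{i-1}^\top\phi(X_i)$ exactly except on the event that clipping occurs, and that event has probability controlled by high moments via Markov. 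The nonlinear remainder thus disappears, and with all moments available the bound $\mathbb{E}\|\Lambda_n\|^2\asymp n^\gamma q$ from Theorem \ref{Thm_RCAR} suffices under $q=o(n^{\gamma-\varepsilon})$.

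The main obstacle is the projected term $b^{*\top}\Lambda_n$. Showing it is $o_P(\sqrt n)$ requires exploiting the contraction in the fixed direction $b^*$ rather than the full norm. I would first try the scalar recursion for $S_i=b^{*\top}\boldsymbol{\Sigma}_{\phi\phi}^{-}\!\cdot$-weighted imbalance, using a second-moment drift inequality. If that fails, I would fall back on showing $\|\Lambda_n\|^2$'s dominant part is spread over $q$ directions, so any fixed direction carries only $O(n^\gamma)$, which is $o(n)$.
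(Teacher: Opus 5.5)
Your plan for (i) is workable, and your martingale treatment of the residual in (ii) broadly parallels the paper's. The genuine gap is the step you flag yourself: $b^{*\top}\Lambda_n=o_P(\sqrt n)$. The proposal contains no working argument for it.

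The bound $\|b^*\|\,\|\Lambda_n\|$ is not enough in (ii) once $\gamma>3/4$. In (iii) it fails whenever $q$ is not $o(n^{1-\gamma})$, which the hypothesis allows when $\gamma>1/2$. So your remark that $\mathbb{E}\|\Lambda_n\|^2\asymp n^\gamma q$ ``suffices'' for (iii) is wrong for this term.

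Your scalar recursion does not close. The linearized drift of $b^{*\top}\Lambda_m$ is $\ell'(0)m^{-\gamma}b^{*\top}\boldsymbol{\Sigma}_{\phi\phi}\Lambda_m$. This is not a function of $b^{*\top}\Lambda_m$ unless $b^*$ is an eigenvector, and reweighting by $\boldsymbol{\Sigma}_{\phi\phi}^{-}$ does not change that. Splitting $b^*$ along eigenvectors does give closed scalar recursions, each with second moment $O(n^\gamma)$. But recombining them by Cauchy--Schwarz costs the squared $\ell_1$ norm of $b^*$ in the eigenbasis, which can be as large as $q\|b^*\|^2$, so you are back to the crude bound. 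Your fallback (``any fixed direction carries only $O(n^\gamma)$'') is the claim to be proved, not an argument for it.

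The missing idea is the paper's Lemma \ref{Lemma_Lambdaop}: run the recursion on the whole matrix $\mathbb{E}[\Lambda_m\Lambda_m^\top]$ and control its operator norm.

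\begin{itemize}
\item With $\boldsymbol{B}_m=\boldsymbol{I}+\ell'(0)m^{-\gamma}\boldsymbol{\Sigma}_{\phi\phi}$, one has $\mathbb{E}[\Lambda_{m+1}\Lambda_{m+1}^\top\mid\mathcal{F}_m]\preceq\boldsymbol{B}_m\Lambda_m\Lambda_m^\top\boldsymbol{B}_m+\boldsymbol{R}_m+\mathbb{E}[(T_{m+1}-\pi)^2\phi(X_{m+1})\phi(X_{m+1})^\top\mid\mathcal{F}_m]$, where $\boldsymbol{R}_m$ collects the Taylor remainder.
\item On the range of $\boldsymbol{\Sigma}_{\phi\phi}$, where $\Lambda_m$ lives, $\boldsymbol{B}_m$ contracts all directions simultaneously by the factor $1+\ell'(0)\lambda_0m^{-\gamma}$ for large $m$. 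This is where the lower eigenvalue bound is used.
\item The noise term adds only $\lambda_{\max}(\boldsymbol{\Sigma}_{\phi\phi})=O(1)$ in operator norm, rather than $O(q)$ in trace.
\item The remainder satisfies $\|\mathbb{E}\boldsymbol{R}_m\|_{\mathrm{op}}\lesssim m^{-2\gamma}(\mathbb{E}\|\Lambda_m\|^4)^{1/2}\|\mathbb{E}[\Lambda_m\Lambda_m^\top]\|_{\mathrm{op}}^{1/2}$.
\item AM--GM, the bound $\mathbb{E}\|\Lambda_m\|^4\lesssim m^{1+\gamma}q^2$ from (\ref{proof_RCAR_Er1}), and Lemma \ref{Lemma_sequence} then give $\|\mathbb{E}[\Lambda_n\Lambda_n^\top]\|_{\mathrm{op}}=O(n^{1-\gamma}q^2+n^\gamma)=O(n^\gamma)$. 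Hence $\mathbb{E}[(b^{*\top}\Lambda_n)^2]=o(n)$.
\end{itemize}

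This bound, together with the compensator bound $O_P(n^{1-\gamma}q)=o_P(\sqrt n)$, is where $q=o(n^{\gamma-1/2})$ is used.

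For (iii), the same recursion replaces the second-moment bound you cite. It now uses the remainder $|x|^K$ available for the clipped $\ell$, together with the moment bounds (\ref{proof_RCAR_Er1}) at $r=2K$ and Lemma \ref{Lemma1}. Your Markov control of the clipping event also needs these higher moments of $\|\Lambda_m\|$.

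A smaller point concerns the linear martingale $\sum_i\ell'(0)(i-1)^{-\gamma}\Lambda_{i-1}^\top\phi(X_i)\epsilon_i$. Isolating it requires $\mathbb{E}[(u^\top\phi(X))^2\epsilon^2]<\infty$, which $\mathbb{E}[Z^2]<\infty$ does not give. Naive truncation of $\epsilon$ does not rescue this, because it breaks $\mathbb{E}[\phi(X)\epsilon]=0$. The paper sidesteps the issue by taking $(T_i-\pi)\widetilde Z_i-\mathbb{E}[(\ell_i-\pi)\widetilde Z_i\mid\Lambda_{i-1}]$ as the martingale difference, so only the quadratic remainder appears in the compensator.
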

Theorem \ref{Thm_additional covariate} gives the asymptotic behavior of the imbalance of additional covariates under IR-CAR. Under the dimensionality conditions in (i), the imbalance divided by \(n\) converges to zero in probability, including under unequal allocation. Thus, the ``shift problem" discussed in \cite{liu2025properties} for IE-CAR with continuous covariates under unequal allocation is avoided under IR-CAR. Under the conditions in (ii) and (iii), the imbalance admits an asymptotic normal approximation, with approximate asymptotic variance $\pi(1-\pi)\mathbb{E}\big[(Z-\Pi_{\phi(X)}Z)^2\big]$, which is no larger than \(\pi(1-\pi)\mathbb{E}[Z^2]\), the corresponding variance under CR. Hence, IR-CAR does not increase the asymptotic variability of the additional-covariate imbalance relative to CR.

\begin{remark}
Theorem \ref{Thm_additional covariate} shows that a larger \(\gamma\) allows the feature dimension \(q\) to grow at a faster rate. On the other hand, Theorem \ref{Thm_RCAR} implies that increasing \(\gamma\) leads to a slower rate of covariate balance. Hence, the choice of \(\gamma\) involves a trade-off between accommodating a larger feature dimension and controlling the imbalance rate.
\end{remark}

\subsection{Estimation of Treatment Effect}
For unit $i$, let $Y_i(1)$ and $Y_i(0)$ denote the potential outcome under treatment and control, respectively. Then, the observed outcome $Y_i =T_iY_i(1) + (1-T_i)Y_i(0)$ and the treatment effect is defined as $\tau = \mathbb{E}[Y_i(1)] - \mathbb{E}[Y_i(0)]$. To estimate $\tau$, we consider the estimator based on the difference of observed average outcome between the treatment and control, that is 
\begin{align*}
	\widehat{\tau} = \Bar{Y}_1 - \Bar{Y}_0 = \frac{1}{n_1} \sum_{i=1}^n T_iY_i - \frac{1}{n_0} \sum_{i=1}^n (1-T_i)Y_i,
\end{align*}
where $n_1=\sum_{i=1}^n T_i$ and $n_0=\sum_{i=1}^n (1-T_i)$ denote the numbers of units in the treatment and control groups, respectively. 

\begin{assumption}\label{assump_Yi}
$\{(X_i^\top,Y_i(1),Y_i(0))^\top\}_{i=1}^n$ are independent and identically distributed as $(X^\top,Y(1),Y(0))^\top$ with $\mathbb{E}[Y^2(a)] < \infty$, $a=0,1$. 
\end{assumption}

The following theorem establishes the asymptotic normality of the difference-in-means estimator for the treatment effect.
\begin{theorem}\label{Thm_ATE}
Suppose that Assumptions \ref{assump_iid}-\ref{assump_Yi} and that one of the following conditions is satisfied:  
(i) $\gamma > 1/2$ and $q = o(n^{\gamma-1/2})$; \\
(ii) Assumption \ref{assump_phi1} holds for every $\nu\geq \iota\geq 4$, the allocation function is given by (\ref{prop_ellx}) and $q = o(n^{\gamma-\varepsilon})$ for some $\varepsilon>0$.\\
Then, 
\begin{align*}
\sqrt{n}\left(\widehat{\tau} - \tau\right) \Big/ \sqrt{\varsigma_{\widetilde{Y}}^2(\pi) + \varsigma_{H\widetilde{Y}}^2} \xrightarrow{d} \mathcal{N}(0,1),
\end{align*} 
	where
	\begin{align*}
		\varsigma_{\widetilde{Y}}^2(\pi) = \frac{\widetilde{\sigma}_{\widetilde{Y}(1)}^2}{\pi} + \frac{\widetilde{\sigma}_{\widetilde{Y}(0)}^2}{1-\pi},  \quad \varsigma_{H\widetilde{Y}}^2 = \mathbb{E}\left[\left(\Pi_{\phi(X)}\left\{\widetilde{Y}(1) - \widetilde{Y}(0)\right\}\right)^2 \right], 
	\end{align*}
	with $\widetilde{Y}(a) = Y(a) -\mathbb{E}[Y(a)]$ and $\widetilde{\sigma}_{\widetilde{Y}(a)}^2 = \mathbb{E}[\widetilde{Y}^2(a)] - \mathbb{E}[(\Pi_{\phi(X)}\{\widetilde{Y}(a)\})^2]$ for $a=0,1$. 
\end{theorem}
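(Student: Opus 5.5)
The plan is to reduce Theorem \ref{Thm_ATE} to the joint central limit theorem of Theorem \ref{Thm_additional covariate}(ii)--(iii). Conditions (i) and (ii) of Theorem \ref{Thm_ATE} are exactly the dimensionality conditions of Theorem \ref{Thm_additional covariate}(ii) and (iii), respectively.

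\textbf{Linearization.} By Theorem \ref{Thm_RCAR}, $n_1/n-\pi = O_P(\cdot)=o_P(1)$. This holds either because the intercept is among the features, or directly by Theorem \ref{Thm_additional covariate}(i) with $Z\equiv 1$. Write $\mu_a=\mathbb{E}[Y(a)]$ and $\widetilde Y_i(a)=Y_i(a)-\mu_a$. Then
\begin{align*}
\bar Y_1-\mu_1=\frac{n}{n_1}\cdot\frac1n\sum_{i=1}^n T_i\widetilde Y_i(1),\qquad \bar Y_0-\mu_0=\frac{n}{n_0}\cdot\frac1n\sum_{i=1}^n (1-T_i)\widetilde Y_i(0).
\end{align*}
Split $T_i=(T_i-\pi)+\pi$ and $1-T_i=(1-\pi)-(T_i-\pi)$. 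This gives
\begin{align*}
\sqrt n(\widehat\tau-\tau)=\frac{1}{\sqrt n}\sum_{i=1}^n (T_i-\pi)Z_i+\frac{1}{\sqrt n}\sum_{i=1}^n W_i+o_P(1),
\end{align*}
with $Z_i=\widetilde Y_i(1)/\pi+\widetilde Y_i(0)/(1-\pi)$ and $W_i=\widetilde Y_i(1)-\widetilde Y_i(0)$. The $o_P(1)$ remainder comes from replacing $n/n_1$ by $1/\pi$. Both normalized sums are $O_P(1)$ by Theorem \ref{Thm_additional covariate}(ii)/(iii), and the replacement error is $o_P(1)$ by the law of large numbers for $n_1/n$.

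\textbf{Applying the joint CLT.} By Assumption \ref{assump_Yi}, $\mathbb{E}[Z^2]<\infty$, $\mathbb{E}[W]=0$ and $\mathbb{E}[W^2]<\infty$. Theorem \ref{Thm_additional covariate} then gives a joint limit in which the two sums are asymptotically independent normals, with variances $\widetilde\sigma_Z^2$ and $\sigma_W^2$. Hence $\sqrt n(\widehat\tau-\tau)$ is asymptotically normal with variance $\widetilde\sigma_Z^2+\sigma_W^2$, and it remains to identify this sum with the stated variance. Write $P=\Pi_{\phi(X)}$ and use its linearity:
\begin{align*}
\widetilde\sigma_Z^2&=\pi(1-\pi)\big(\mathbb{E}[Z^2]-\mathbb{E}[(PZ)^2]\big),\\
\mathbb{E}[W^2]&=\mathbb{E}[\widetilde Y(1)^2]+\mathbb{E}[\widetilde Y(0)^2]-2\mathbb{E}[\widetilde Y(1)\widetilde Y(0)].
\end{align*}
Expanding $\pi(1-\pi)\mathbb{E}[Z^2]+\mathbb{E}[W^2]$ gives $\mathbb{E}[\widetilde Y(1)^2]/\pi+\mathbb{E}[\widetilde Y(0)^2]/(1-\pi)$, because the cross terms cancel. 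The same algebra applied to the projected variables shows
\begin{align*}
\pi(1-\pi)\mathbb{E}[(PZ)^2]=\frac{\mathbb{E}[(P\widetilde Y(1))^2]}{\pi}+\frac{\mathbb{E}[(P\widetilde Y(0))^2]}{1-\pi}-\mathbb{E}\big[(P\{\widetilde Y(1)-\widetilde Y(0)\})^2\big].
\end{align*}
Subtracting the two identities yields $\widetilde\sigma_Z^2+\sigma_W^2=\varsigma^2_{\widetilde Y}(\pi)+\varsigma^2_{H\widetilde Y}$, as required.

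\textbf{Main obstacle: degenerate cases.} Theorem \ref{Thm_additional covariate}(ii) requires $\mathbb{E}[(Z-PZ)^2]\ge c_{\phi,Z}>0$ and $\sigma_W^2>0$, and neither is assumed in Theorem \ref{Thm_ATE}. I would handle this with a subsequence argument.
\begin{itemize}
\item Along any subsequence on which $\mathbb{E}[(Z-PZ)^2]\to0$, a second-moment bound shows $n^{-1/2}\sum_{i=1}^n (T_i-\pi)Z_i\to0$ in probability. The bound decomposes $Z=PZ+(Z-PZ)$ and controls the $PZ$ part through $|b^{*\top}\Lambda_n|\le\|b^*\|\|\Lambda_n\|$. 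Here $\|b^*\|$ is bounded by the eigenvalue lower bound in Assumption \ref{assump_boundSigma}, and $\|\Lambda_n\|$ is controlled by Theorem \ref{Thm_RCAR}. The remaining part has second moment of order $n\,\mathbb{E}[(Z-PZ)^2]$, using martingale-difference-type bounds from the proof of Theorem \ref{Thm_additional covariate}.
\item Along subsequences bounded away from zero, Theorem \ref{Thm_additional covariate} applies directly.
\item The $W$ term is handled in the same way.
\end{itemize}
A subsequence argument then combines the cases. The main obstacle is verifying that $\|b^*\|\,\mathbb{E}^{1/2}\|\Lambda_n\|^2=o(\sqrt n)$ under both conditions (i) and (ii). Since $\|b^*\|\le C$, this reduces to $\mathbb{E}\|\Lambda_n\|^2=o(n)$, which needs to be checked against the rates in Theorem \ref{Thm_RCAR} when $q=o(n^{\gamma-1/2})$. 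If this fails, I would instead normalize by the standard deviation and restrict to the nondegenerate case.
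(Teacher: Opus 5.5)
Your core argument matches the paper's proof. It uses the same linearization
$\sqrt{n}(\widehat{\tau}-\tau)=n^{-1/2}\sum_{i=1}^n(T_i-\pi)Z_i+n^{-1/2}\sum_{i=1}^n W_i+o_P(1)$ with the same $Z$ and $W$, then the joint CLT of Theorem~\ref{Thm_additional covariate}, then the same variance identity. The paper organizes that algebra through $\breve{Y}(a)=\widetilde{Y}(a)-\Pi_{\phi(X)}\widetilde{Y}(a)$, which is equivalent to your expansion.

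Your degenerate-case discussion goes beyond the paper, which simply invokes Theorem~\ref{Thm_additional covariate} without checking $c_{\phi,Z}>0$ or $\mathbb{E}[W^2]>0$. The step you flagged as the obstacle does fail as written. Theorem~\ref{Thm_RCAR} gives $\mathbb{E}\|\Lambda_n\|^2=\Omega(n^{\gamma}q)$, and this is not $o(n)$ when, for instance, $\gamma>3/4$ and $q$ is close to $n^{\gamma-1/2}$. So the bound $|b^{*\top}\Lambda_n|\le\|b^*\|\,\|\Lambda_n\|$ is too crude.

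Replace it by the operator-norm bound of Lemma~\ref{Lemma_Lambdaop}:
$\mathbb{E}[(\Lambda_n^\top b^*)^2]\le\|b^*\|^2\,\|\mathbb{E}[\Lambda_n\Lambda_n^\top]\|_{\mathrm{op}}=O(n^{\gamma})=o(n)$,
exactly as in the proof of Theorem~\ref{Thm_additional covariate}(ii). Two further bounds from that same proof are needed, and neither uses nondegeneracy:
\begin{itemize}
\item the martingale bound $\mathbb{E}[(\sum_{i}\Delta\widetilde{M}_i)^2]\le n\,\mathbb{E}[(Z-\Pi_{\phi(X)}Z)^2]$;
\item the drift bound, which relies only on the orthogonality $\mathbb{E}[\phi(X)(Z-\Pi_{\phi(X)}Z)]=0$.
\end{itemize}
With these, your subsequence argument goes through, provided $\varsigma^2_{\widetilde{Y}}(\pi)+\varsigma^2_{H\widetilde{Y}}$ stays bounded away from zero.
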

Next, we consider the estimation of $\varsigma_{\widetilde{Y}}^2(\pi)$ and $\varsigma_{H\widetilde{Y}}^2$. Let $\widehat{\beta}_{\phi}(1)$ and $\widehat{\beta}_{\phi}(0)$ denote the OLS estimators of the coefficients of $\phi(X_i)$ when regressing centered outcomes $Y_i-\Bar{Y}_1$ and $Y_i-\Bar{Y}_0$ on $\phi(X_i)$ without an additional intercept in the treatment and control groups, respectively, and let $\widehat{e}_{i,1} = Y_i(1) - \Bar{Y}_1 - \phi(X_i)^\top \widehat{\beta}_{\phi}(1)$ and $\widehat{e}_{i,0} = Y_i(0) - \Bar{Y}_0 - \phi(X_i)^\top \widehat{\beta}_{\phi}(0)$ denote the corresponding residuals. Then, the estimators for $\varsigma_{\widetilde{Y}}^2(\pi)$ and $\varsigma_{H\widetilde{Y}}^2$ are defined as 
\begin{align*}
	\widehat{\varsigma}_{\widetilde{Y}}^2(\pi) =&\ \frac{1}{n_1-q} \sum_{i=1}^n \frac{T_i}{\pi}\widehat{e}_{i,1}^2 + \frac{1}{n_0-q} \sum_{i=1}^n \frac{1-T_i}{1-\pi} \widehat{e}_{i,0}^2,\\
	\text{and }\,\, \widehat{\varsigma}_{H\widetilde{Y}}^2 =&\ \frac{1}{n} \sum_{i=1}^n \left[\phi(X_i)^\top \left\{\widehat{\beta}_{\phi}(1) - \widehat{\beta}_{\phi}(0) \right\}  \right]^2. 
\end{align*}

\begin{theorem}\label{Thm_ATE_test}
Suppose that the conditions of Theorem \ref{Thm_ATE} hold. In addition, suppose that $\mathbb{E}[Y^4(a)]<\infty$ for $a=0,1$. Then, $| \widehat{\varsigma}_{\widetilde{Y}}^2(\pi) - \varsigma_{\widetilde{Y}}^2(\pi) | = o_P(1)$ and $| \widehat{\varsigma}_{H\widetilde{Y}}^2 - \varsigma_{H\widetilde{Y}}^2| = o_P(1)$. Additionally, $\sqrt{n}\left(\widehat{\tau} - \tau\right)/ \sqrt{\widehat{\varsigma}_{\widetilde{Y}}^2(\pi) + \widehat{\varsigma}_{H\widetilde{Y}}^2} \xrightarrow{d} \mathcal{N}(0,1)$. 
\end{theorem}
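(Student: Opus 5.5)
Theorem \ref{Thm_ATE_test} has two parts: consistency of each variance estimator, and then the studentized limit, which follows from Theorem \ref{Thm_ATE} and Slutsky's lemma. The key fact behind both estimators is that under IR-CAR the treatment group behaves, for the purpose of sample moments, like an i.i.d.\ sample. Fix $a\in\{0,1\}$. For any integrable function $f$, Theorem \ref{Thm_additional covariate}(i) applied with $Z=f(X,Y(1),Y(0))$ gives $\frac1n\sum_i T_i f_i = \pi\mathbb{E}[f]+o_P(1)$. With $f\equiv 1$ this gives $n_1/n\to\pi$ in probability, and similarly $n_0/n\to 1-\pi$.

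The difficulty is that $\phi$ is high dimensional, so I cannot apply this entrywise to the Gram matrix. Instead I would write the OLS quantities in terms of
\[
\widehat{\boldsymbol\Sigma}_1=\frac1{n_1}\sum_i T_i\phi(X_i)\phi(X_i)^\top,\qquad
\widehat{\boldsymbol\Sigma}_1-\boldsymbol\Sigma_{\phi\phi}=\frac1{n_1}\sum_i (T_i-\pi)\phi_i\phi_i^\top+\frac{\pi n}{n_1}\Bigl(\frac1n\sum_i\phi_i\phi_i^\top-\boldsymbol\Sigma_{\phi\phi}\Bigr)+o_P(1).
\]
The i.i.d.\ term has operator norm $O_P(\sqrt{q\log q/n})$ by a matrix Bernstein/Rudelson-type bound under Assumptions \ref{assump_phi1}--\ref{assump_boundSigma}. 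For the martingale-type term I would decompose $T_i-\pi=(T_i-p_i)+(p_i-\pi)$, where $p_i=\ell(\Lambda_{i-1}^\top\phi_i/(i-1)^\gamma)$:
\begin{itemize}
\item $T_i-p_i$ is a martingale difference, so a matrix martingale inequality controls it at the same rate $O_P(\sqrt{q\log q/n})$, using bounded projections of $\phi_i$.
\item $p_i-\pi$ is bounded by $|\ell'(0)|\,|\Lambda_{i-1}^\top\phi_i|/(i-1)^\gamma$ up to second-order terms. By Theorem \ref{Thm_RCAR}, $\mathbb{E}\|\Lambda_{i-1}\|^2=O(i^\gamma q)$, so each quadratic form $u^\top(\cdot)u$ is of order $n^{-1}\sum_i i^{-\gamma/2}q^{1/2}$. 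Summing and using $q=o(n^{\gamma-1/2})$ (or $o(n^{\gamma-\varepsilon})$ with the truncated $\ell$ and all moments) makes this $o(1)$.
\end{itemize}
Hence $\|\widehat{\boldsymbol\Sigma}_a-\boldsymbol\Sigma_{\phi\phi}\|_{\mathrm{op}}=o_P(1)$ on the range of $\boldsymbol\Sigma_{\phi\phi}$, and its smallest nonzero eigenvalue stays bounded away from zero with probability tending to one.

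Next, let $\beta(a)$ be the population projection coefficient of $\widetilde Y(a)$ on $\phi(X)$, and write $\widetilde Y_i(a)=\phi_i^\top\beta(a)+\epsilon_i(a)$ with $\mathbb{E}[\phi\,\epsilon(a)]=0$. Then
\[
\widehat\beta_\phi(a)-\beta(a)=\widehat{\boldsymbol\Sigma}_a^{-}\,\frac1{n_a}\sum_i T_i^{(a)}\phi_i\bigl(\epsilon_i(a)-(\bar Y_a-\mathbb{E}Y(a))\bigr),
\]
where $T_i^{(1)}=T_i$ and $T_i^{(0)}=1-T_i$. The squared norm of the cross term is $\|n_a^{-1}\sum T_i^{(a)}\phi_i\epsilon_i\|^2$. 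Its i.i.d.\ part has expectation $O(q/n)$, using $\mathbb{E}[Y^4(a)]<\infty$ and Cauchy--Schwarz with $\mathbb{E}\|\phi\|^4\asymp q^2$. Its adaptive part is handled by the same $T_i-p_i$ / $p_i-\pi$ decomposition as above. The centering term contributes $|\bar Y_a-\mathbb{E}Y(a)|\cdot\|n_a^{-1}\sum T^{(a)}_i\phi_i\|$. Here $\bar Y_a-\mathbb{E}Y(a)=O_P(n^{-1/2})$ by Theorem \ref{Thm_ATE}, and $\|n_a^{-1}\sum_i T_i^{(a)}\phi_i\|\le\|\Lambda_n\|/n_a+\|n^{-1}\sum_i\phi_i\|=O_P(\sqrt{q/n}+\sqrt{q}\,n^{\gamma/2-1}+\|\mathbb{E}\phi\|)$, which is $O_P(1)$ under Assumption \ref{assump_phi1}. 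Altogether $\|\widehat\beta_\phi(a)-\beta(a)\|^2=O_P(q/n)+O_P(n^{-1})=o_P(1)$, since $q=o(n)$.

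With this, $\widehat\varsigma^2_{H\widetilde Y}$ equals $(\widehat\beta_1-\widehat\beta_0)^\top\widehat{\boldsymbol\Sigma}(\widehat\beta_1-\widehat\beta_0)$ with $\widehat{\boldsymbol\Sigma}$ the full-sample Gram matrix, and this converges to $(\beta_1-\beta_0)^\top\boldsymbol\Sigma_{\phi\phi}(\beta_1-\beta_0)=\varsigma^2_{H\widetilde Y}$ by the operator-norm bounds. For the residual variance I would use
\[
\frac1{n_a}\sum_i T_i^{(a)}\widehat e_{i,a}^2=\frac1{n_a}\sum_i T^{(a)}_i\bigl(Y_i-\bar Y_a\bigr)^2-(\widehat\beta_\phi(a))^\top\widehat{\boldsymbol\Sigma}_a\widehat\beta_\phi(a),
\]
which follows from the normal equations. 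The first term tends to $\mathbb{E}\widetilde Y^2(a)$ by Theorem \ref{Thm_additional covariate}(i) applied to $Z=Y^2(a)$ and $Z=Y(a)$; this needs only second moments. The second term tends to $\mathbb{E}[(\Pi_{\phi}\widetilde Y(a))^2]$ by the previous step. The factor $n_a/(n_a-q)\to1$ because $q=o(n)$. Together these give $\widehat\varsigma^2_{\widetilde Y}(\pi)\to\varsigma^2_{\widetilde Y}(\pi)$.

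The main obstacle is the operator-norm control of the treatment-weighted Gram matrix and of the weighted cross term when $q\to\infty$. Theorem \ref{Thm_additional covariate} only gives scalar laws of large numbers, so the adaptive weights $p_i-\pi$ must be bounded uniformly over directions. I would first try the Cauchy--Schwarz bound above, $\sup_{\|u\|=1}n^{-1}\sum_i|p_i-\pi|(u^\top\phi_i)^2\lesssim n^{-1}\sum_i i^{-\gamma}|\Lambda_{i-1}^\top\phi_i|\,\|\phi_i\|^2/\sqrt q$-type estimates combined with Theorem \ref{Thm_RCAR}. If that loses too much, I would fall back on the Lipschitz truncated form (\ref{prop_ellx}) under condition (ii).
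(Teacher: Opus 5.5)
\paragraph{Verdict.} Your overall structure matches the paper's: scalar laws of large numbers from Theorem \ref{Thm_additional covariate}(i), operator-norm consistency of the group Gram matrices, consistency of $\widehat\beta_\phi(a)$, the normal-equation identity for the residual variance, and then Slutsky. Your regression-error representation of $\widehat\beta_\phi(a)-\beta(a)$ is an equivalent alternative to the paper's $\widehat g_a-g_a$ split. The trouble is the step you yourself flag as the main obstacle: it is a genuine gap, and your proposed remedies do not close it.

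\paragraph{Where the argument fails.} With the split $T_i-\pi=(T_i-p_i)+(p_i-\pi)$, the matrix $n^{-1}\sum_i(p_i-\pi)\phi(X_i)\phi(X_i)^\top$ is not predictable, because $p_i$ depends on the fresh $X_i$. Your estimate of $\mathbb{E}|u^\top(\cdot)u|$ is valid for each fixed unit vector $u$. However, the supremum over $u$ cannot be moved inside the expectation of a random matrix, so $\|\widehat{\boldsymbol{\Sigma}}_{\phi\phi,a}-\boldsymbol{\Sigma}_{\phi\phi}\|_{\mathrm{op}}=o_P(1)$ does not follow. The natural pathwise bound is $n^{-1}\sum_i|p_i-\pi|\,\|\phi(X_i)\|^2$. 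Its expectation is of order $n^{-1}\sum_i i^{-\gamma}\,q\,\mathbb{E}\|\Lambda_{i-1}\|\asymp q^{3/2}n^{-\gamma/2}$, which is $o(1)$ only when $q=o(n^{\gamma/3})$. That fails under condition (i) of Theorem \ref{Thm_ATE} whenever $\gamma>3/4$, and under condition (ii) in general. The ``$\|\phi_i\|^2/\sqrt q$'' estimate in your last paragraph has no justification. Switching to the truncated $\ell$ in (\ref{prop_ellx}) does not help either, because the loss comes from $\sup_u(u^\top\phi(X_i))^2=\|\phi(X_i)\|^2$, not from $\ell$.

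\paragraph{The missing idea.} You need to subtract the conditional expectation given $\Lambda_{i-1}$ before taking the supremum over directions. Since $X_i$ is independent of $\Lambda_{i-1}$, the symmetric matrix $\mathbb{E}[(\ell_i-\pi)(\phi(X_i)\phi(X_i)^\top-\boldsymbol{\Sigma}_{\phi\phi})\mid\Lambda_{i-1}]$ satisfies $|u^\top(\cdot)u|\lesssim (i-1)^{-\gamma}\|\Lambda_{i-1}\|$ uniformly in $u$. This follows from $|\ell(x)-\pi|\le\overline{\lambda}_{\ell,1}|x|$ and Cauchy--Schwarz over $X_i$ alone, using $M_2$ and $M_4$. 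It is therefore a pathwise operator-norm bound, and summing it gives $O_P(n^{-\gamma/2}q^{1/2})=o_P(1)$.

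What remains is $(T_i-\pi)(\phi(X_i)\phi(X_i)^\top-\boldsymbol{\Sigma}_{\phi\phi})$ minus this compensator. It is a single matrix martingale difference whose predictable variance is $O(nq)$ in operator norm. The paper controls it with the noncommutative Burkholder--Rosenthal inequality in Schatten-$r$ norm (Lemma \ref{Lemma_Burkholder}), taking $r$ large under condition (ii). Matrix Bernstein or Freedman would need a truncation step, since Assumption \ref{assump_phi1} bounds only moments of projections, not the projections themselves.

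\paragraph{The cross term.} The same compensation is needed for $n_a^{-1}\sum_iT_i^{(a)}\phi(X_i)\epsilon_i(a)$. The naive pathwise bound $n^{-1}\sum_i|p_i-\pi|\,\|\phi(X_i)\|\,|\epsilon_i(a)|$ costs a factor $\sqrt q$ and is of order $qn^{-\gamma/2}$. That is fine under (i), but not for $q$ up to $n^{\gamma-\varepsilon}$ under (ii). The compensator, by contrast, has norm $\lesssim(i-1)^{-\gamma}\|\Lambda_{i-1}\|$; this is the content of Lemma \ref{Lemma_vector}. A consequence is that this drift contributes order $qn^{-\gamma}$, not $q/n$, to $\|\widehat\beta_\phi(a)-\beta(a)\|^2$. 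Your stated rate is therefore too optimistic, although $o_P(1)$ still holds.
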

Based on this theorem, we can construct the nominal $(1 - \alpha) \times 100\%$ large-sample Wald confidence interval as
\begin{align*}
	\text{CI}(\alpha) = \left[ \widehat{\tau} - z_{1-\alpha/2} \sqrt{\frac{\widehat{\varsigma}_{\widetilde{Y}}^2(\pi) + \widehat{\varsigma}_{H\widetilde{Y}}^2}{n}}, \widehat{\tau} + z_{1-\alpha/2} \sqrt{\frac{\widehat{\varsigma}_{\widetilde{Y}}^2(\pi) + \widehat{\varsigma}_{H\widetilde{Y}}^2}{n}}\ \right], 
\end{align*}
which has asymptotically correct coverage probability.

\section{Examples}\label{Section_ex}
\subsection{Discrete Covariates}
When the covariates are discrete, CAR procedures are typically designed to balance strata, marginals, or both. However, when the number of covariates $p$ diverges with $n$, the number of strata grows exponentially in $p$, making it infeasible for CAR to achieve balance across all strata. Therefore, we focus on balancing marginal imbalances instead. \\
Suppose that $X_{i,j}$ have $m_j$ levels, denoted by $x_{j}^{k}$, $k=1,\cdots,m_{j}$, $j=1,\cdots,p$, where $m_j$ is fixed and does not vary with the sample size. We let $\phi(X_i)$ be the following feature map, 
\begin{align*}
\phi_{\mathrm{dis}}(X_i) = \left(\sqrt{w_{m,j}}\ \mathbb{I}\{X_{i,j} = x_j^{k_j}\} \right)_{j=1,\cdots,p;\, k_j=1,\cdots,m_j}^\top \in\mathbb{R}^{\sum_{j=1}^p m_j},
\end{align*}
where $w_{m,j}\geq 0$ denotes the corresponding weights. Let $D_n(j;k_j) = \sum_{i=1}^n (T_i-\pi)\mathbb{I}\{X_{i,j} = x_j^{k_j}\}$ denote the marginal imbalance for the $j$th covariate at level $x_j^{k_j}$ for $j=1,\cdots,p$ and $k_j=1,\cdots,m_j$. Accordingly, the imbalance measure takes the form 
\begin{align*}
\text{Imb}_n = \sum_{j=1}^p w_{m,j} \sum_{k_j=1}^{m_j} D_n^2(j;k_j). 
\end{align*}

\subsection{Continuous Covariates}
For continuous covariates, we consider balancing both the overall imbalance and the covariate means, that is, 
\begin{align*}
\phi_{\mathrm{cont},1}(X_i) = (\sqrt{w_0},\sqrt{w_1}X_i^\top)^\top\in \mathbb{R}^{p+1},
\end{align*}
where $w_0$ and $w_1$ denote the corresponding non-negative weights. Furthermore, when the number of covariates is relatively small, we may additionally balance second-order moments (i.e., covariate matrices), by taking
\begin{align*}
\phi_{\mathrm{cont},2}(X_i) = (\sqrt{w_0},\sqrt{w_1}X_i^\top,\sqrt{w_2} \operatorname{vech}(X_iX_i^\top)^\top )^\top \in\mathbb{R}^{(p^2/2 + 3p/2 + 1)},
\end{align*}
where $w_0$, $w_1$ and $w_2$ denote the corresponding non-negative weights and $\operatorname{vech}(\cdot)$ denotes the half-vectorization operator that extracts the lower triangular part (including the diagonal) of a symmetric matrix and stacks it into a column vector. In this case, the resulting procedure is closely related to the COV procedure proposed in \cite{ma2024new}. 

\subsection{Both Discrete and Continuous Covariates}
When covariates include both discrete and continuous components, we write $X_i=(X_{D,i}^\top,X_{C,i}^\top)^\top$, where $X_{D,i}$ and $X_{C,i}$ denote the discrete and continuous parts, respectively. A natural feature map is
\begin{align}\label{Eq_phi_mix}
\phi_{\mathrm{mix}}(X_i) = (\sqrt{w_0},\phi_{\mathrm{dis}}(X_{D,i})^\top,\sqrt{w_1} X_{C,i}^\top)^\top, 
\end{align}
With this feature map, the resulting CAR procedure simultaneously controls the overall imbalance, the imbalance in the means of the continuous covariates, and the marginal imbalance of the discrete covariates.

\section{Simulation}\label{Section_simulation}
\subsection{Convergence Rate of Imbalance Measure}\label{subsec_Simu_1}
We first evaluate the convergence rates of imbalance measures under different randomization procedures, including CR, IE-CAR and IR-CAR with $\gamma\in\{0.3,0.4,0.5,0.6,0.7,0.8\}$. For each exponent $d\in\{0.3,0.4,\cdots,1.1\}$, we set the covariate dimension to $p = \lfloor n^{d}\rfloor$. For $d\in\{0.3,0.4,\cdots,0.8 \}$, the sample size ranges over $n\in\{100,200,400,800,1600,3200\}$ and for $d\in\{0.9,1,1.1\}$, we set $n\in\{100,200,300,400,500,600\}$. We consider both continuous and discrete covariate settings under equal allocation ($\pi =1/2$) and unequal allocation ($\pi=2/3$). The biased coin probability for IE-CAR is set to 0.9 and the allocation function for IR-CAR is set to $\ell(x) = (2\pi-0.9) \vee (\pi-0.5x)\wedge 0.9$. The corresponding data-generating processes are described below. \\
\textbf{Model 1: Continuous covariates. } The covariates $X_i \sim \mathcal{N}_p(\boldsymbol{0}, \boldsymbol{\Sigma}_{XX})$, where $\boldsymbol{\Sigma}_{XX}$ is a Toeplitz matrix with $[\boldsymbol{\Sigma}_{XX}]_{i,j} = 2^{-|i-j|}$. The feature map is $\phi(X_i) = (1,X_i^\top)^\top$. \\
\textbf{Model 2: Discrete covariates. } The covariates $X_{i,j} = \mathbb{I}\{\Xi_{i,j} \leq z_{0.6} \}$ with $\Xi_i=(\Xi_{i,1},\cdots,\Xi_{i,p})^\top \sim \mathcal{N}_p(\boldsymbol{0}, \boldsymbol{\Sigma}_{XX})$. The feature map is $\phi(X_i) = (\mathbb{I}\{X_{i,1}=0\},\mathbb{I}\{X_{i,1}=1\},\cdots,\mathbb{I}\{X_{i,p}=0\}, \mathbb{I}\{X_{i,p}=1\})^\top$. 

To evaluate the scaling behavior, we plot \(\log(\mathrm{Imb}_n)\) against \(\log(n)\) under Model 1 and Model 2 with \(\pi=1/2\), as shown in Figures \ref{fig1} and \ref{fig2}, respectively. All results are based on 5,000 Monte Carlo replications. Additional simulation results for \(\pi=2/3\) are provided in the Supplementary Material.

Several conclusions can be drawn from Figures \ref{fig1} and \ref{fig2}. First, when the covariate dimension is relatively low, the imbalance measure under CR is the largest, while that under IE-CAR is the smallest; moreover, under IR-CAR, the imbalance measure increases with $\gamma$. As the covariate dimension increases, the gap in imbalance measures between IR-CAR and IE-CAR gradually decreases. When the dimension is sufficiently large, the imbalance measures under IR-CAR and IE-CAR become nearly identical for all values of $\gamma$, while still remaining substantially smaller than that under CR. In addition, larger values of $\gamma$ require higher covariate dimensions for the imbalance measures of IE-CAR and IR-CAR to become close. Finally, the curves in Figures \ref{fig1} and \ref{fig2} are nearly linear on the log–log scale, suggesting that $\text{Imb}_n$ follows an approximate polynomial rate in $n$. The slope of each curve thus serves as an empirical estimate of the rate exponent. To make this connection explicit, we further plot the estimated slopes as a function of $d$. The horizontal axis represents $d$, and the vertical axis corresponds to the estimated slope from the log–log regression. Each curve in this plot represents a different randomization procedure, allowing for a direct comparison of their scaling behaviors across different dimensional regimes. The results are shown in Figure \ref{fig3}.

From Figure \ref{fig3}, we can clearly observe that when $d<1$, the slope in the log–log plot under IE-CAR is approximately $2d$, suggesting a polynomial growth rate of order $2d$. The results indicate that $\mathbb{E}[\|\Lambda_n\|^2]$ is approximately of order $n^{2d}$ when $p\asymp n^d$. For IR-CAR, the slope is approximately $d+\gamma$ when $d\leq \gamma$, and becomes approximately $2d$ when $\gamma<d<1$. Under CR, the slope remains approximately $d+1$ across all values of $d$. When $d \geq 1$, the growth rate of $\text{Imb}_n$ under all randomization procedures is approximately $d+1$, indicating that CAR procedures no longer offer an advantage over CR in terms of growth rate. Overall, the convergence rates observed in Figure \ref{fig3} are highly consistent with our theoretical predictions, providing strong empirical support for our theoretical results.

\begin{figure}[htbp]
    \centering
    \includegraphics[width=\textwidth]{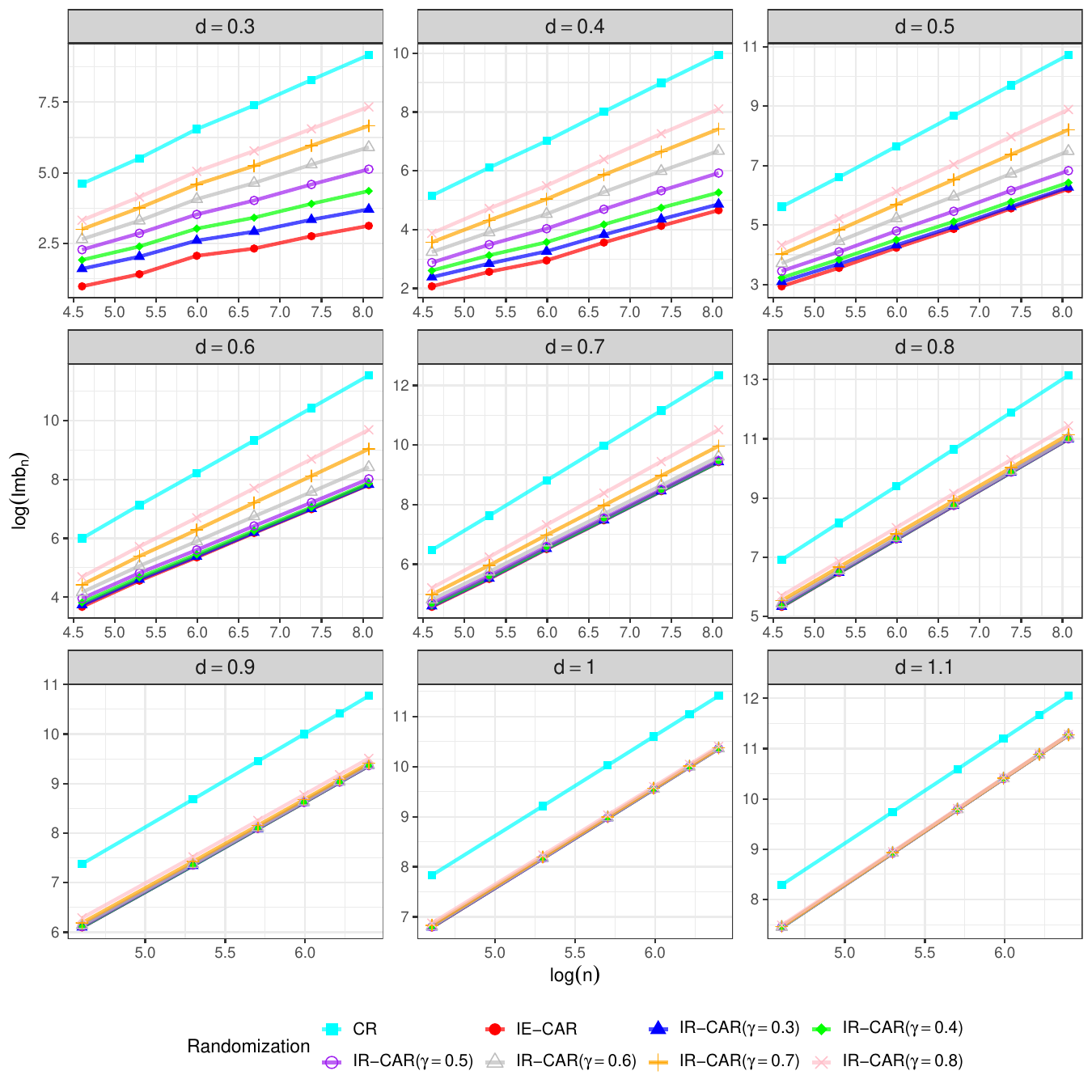}
    \caption{Log–log plot of $\text{Imb}_n$ versus $n$ with continuous covariates under $\pi=1/2$.}
    \label{fig1}
\end{figure}

\begin{figure}[htbp]
    \centering
    \includegraphics[width=\textwidth]{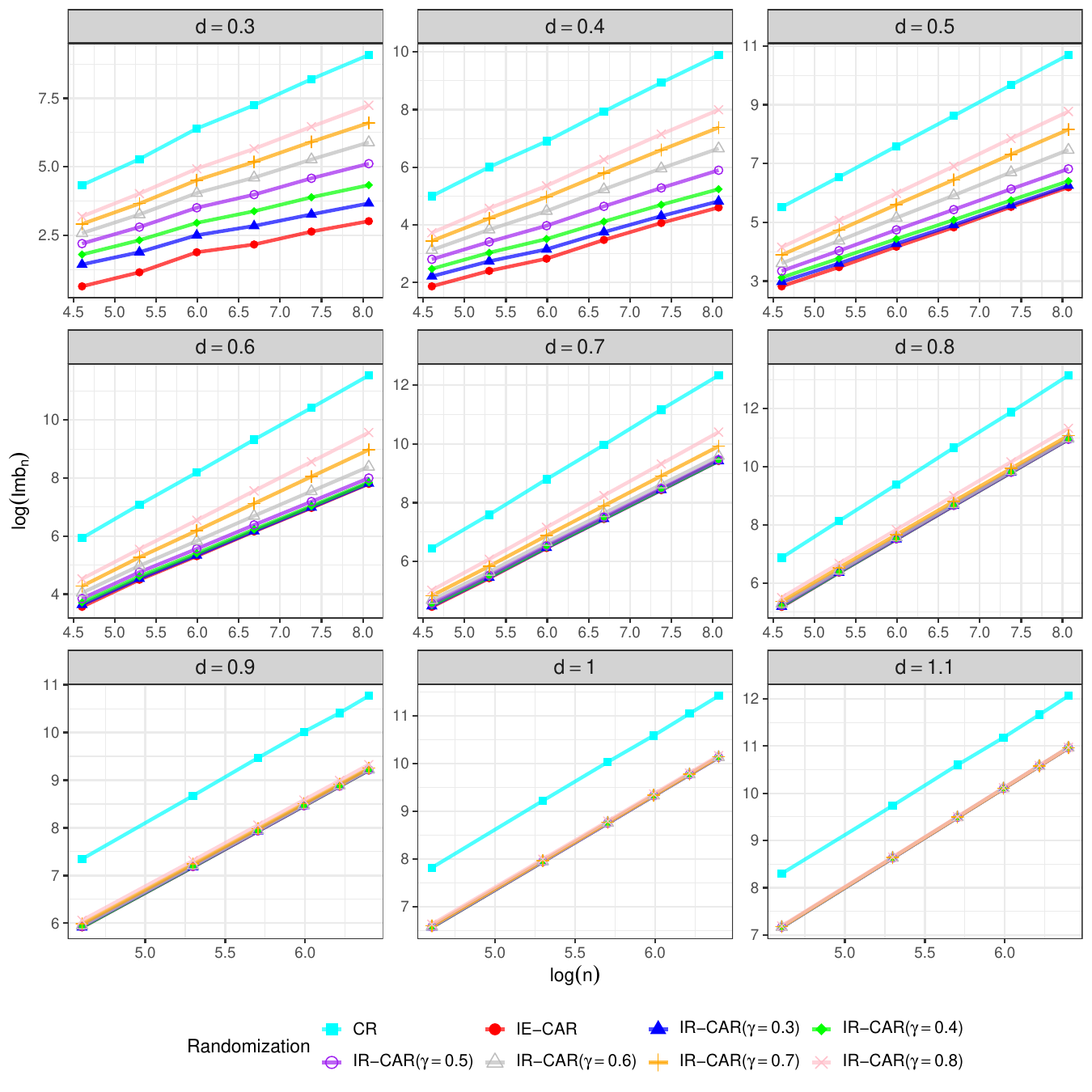}
    \caption{Log–log plot of $\text{Imb}_n$ versus $n$ with discrete covariates under $\pi=1/2$.}
    \label{fig2}
\end{figure}

\begin{figure}[htbp]
    \centering
    \includegraphics[width=\textwidth]{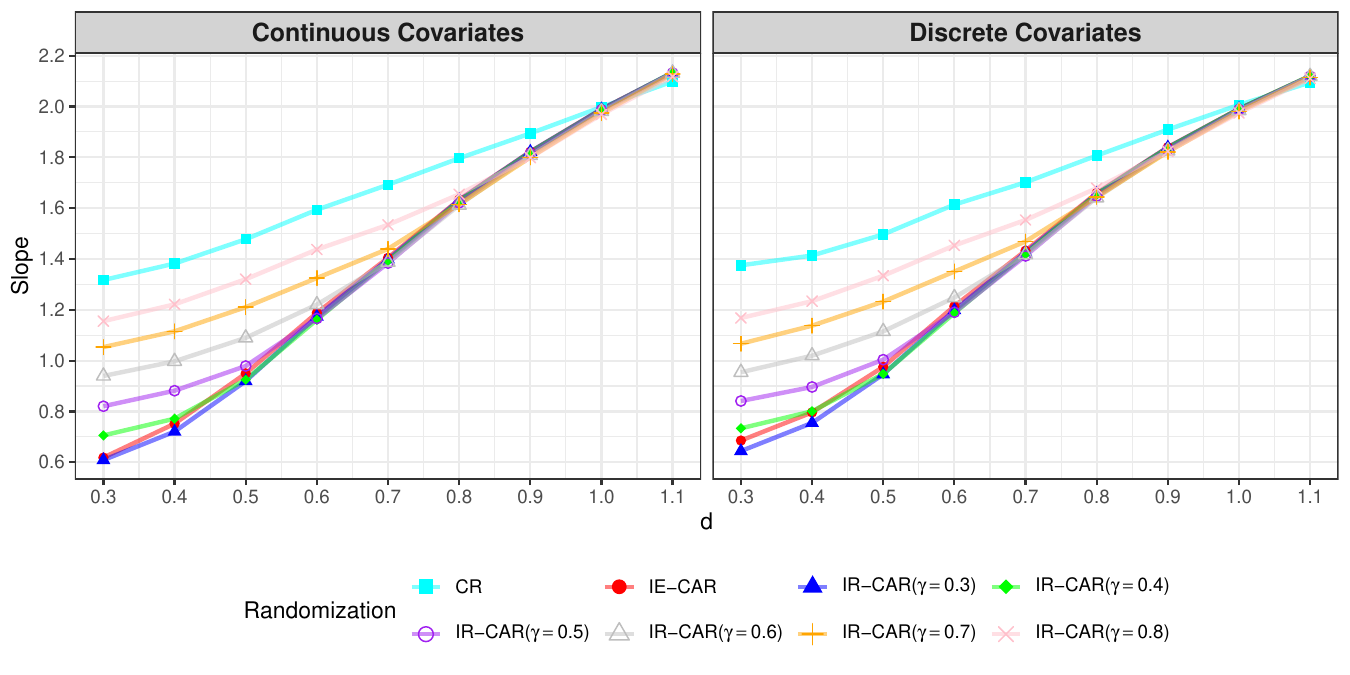}
    \caption{Estimated slopes from log–log plots under various randomization procedures with $\pi=1/2$.}
    \label{fig3}
\end{figure}

\subsection{Estimation of Treatment Effect}
To confirm and complement our theory, in this section, we examine the finite-sample performance of the difference-in-means estimator, the variance estimator, and the corresponding 95\% confidence intervals under IR-CAR. We consider both discrete and continuous covariate settings, with the data-generating processes specified as follows.\\
\textbf{Model 1: Continuous covariates. } The covariates are generated as in Model 1 of Section \ref{subsec_Simu_1} and the potential outcomes are generated by
\begin{align*}
Y_i(1) =&\ \frac{1}{p}\left[\left(\sum_{j=1}^p X_{i,j} \right)^2 - \boldsymbol{1}_p^\top \boldsymbol{\Sigma}_{XX} \boldsymbol{1}_p\right] + \left[1 + \frac{1}{p} \left(\sum_{j=1}^p X_{i,j}\right)^2 \right] \epsilon_i(1), \\
Y_i(0) =&\ \frac{1}{\sqrt{p}} \sum_{j=1}^p \left( e^{X_{i,j}} - \sqrt{e}\right) + \frac{1}{p} \sum_{j=1}^p X_{i,j}^2  \epsilon_i(0),
\end{align*}
where $\epsilon_i(1),\epsilon_i(0)\sim \mathcal{N}(0,4)$. \\
\textbf{Model 2: Discrete covariates. } The covariates are generated as in Model 2 of Section \ref{subsec_Simu_1} and the potential outcomes are generated by
\begin{align*}
Y_i(1) = Y_i(0) = \frac{1}{\sqrt{p}} \sum_{j=1}^p \left(X_{i,j} + 2 e^{X_{i,j}}\right) + \left(2+\frac{1}{p}\sum_{j=1}^p X_{i,j}\right) \epsilon_i,
\end{align*}
where $\epsilon_i \sim \mathcal{N}(0,4)$. 

We consider the sample size $n=500$ and let $p=\lfloor n^d\rfloor$ with $d\in\{0.25,0.35,0.45,0.55,0.65\}$ (i.e., $p \in\{ 4,8,16, 30, 56\}$). Due to theoretical constraints, for each choice of $d$,  we only consider IR-CAR procedures with $\gamma>d$, where $\gamma\in\{0.3,0.4,\cdots,0.8\}$. The allocation function is set to $\ell(x) = (2\pi-0.9) \vee (\pi-0.5x)\wedge 0.9$. 

Table \ref{ATE_est} presents the bias, standard deviation (SD), standard error (SE) estimators, and empirical coverage probabilities (CP) of the 95\% confidence intervals for $\widehat{\tau}$ for Model 1 and Model 2 with equal and unequal allocation under IR-CAR. Each scenario is evaluated using 5,000 Monte Carlo replications.  

Under all settings, $\widehat{\tau}$ exhibits negligible bias. The variance estimators, or equivalently the average estimated standard deviations, are generally accurate, leading to adequate coverage probabilities of the confidence intervals. When $p$ is small, the variance estimators are highly accurate; when $p$ is large, they tend to exhibit some upward bias, but the coverage probabilities of the confidence intervals are still well maintained at the nominal 95\% level.

\begin{table}[htbp]
	\centering
	\footnotesize 
	\setlength{\tabcolsep}{2pt} 
	\renewcommand{\arraystretch}{1}
	\caption{Simulated biases, SDs, SEs, and CPs for $\widehat{\tau}$ under IR-CAR procedures}
	\begin{tabular}{*{18}{c}}
		\toprule
		&  & \multicolumn{8}{c}{Continuous Covariates} & \multicolumn{8}{c}{Discrete Covariates}\\
		\cmidrule(lr{0pt}){3-10} \cmidrule(lr{0pt}){11-18}
		&  & \multicolumn{4}{c}{$\pi = 1/2$} & \multicolumn{4}{c}{$\pi = 2/3$} & \multicolumn{4}{c}{$\pi = 1/2$} & \multicolumn{4}{c}{$\pi = 2/3$}\\ 
		\cmidrule(lr{0pt}){3-6} \cmidrule(lr{0pt}){7-10} \cmidrule(lr{0pt}){11-14} \cmidrule(lr{0pt}){15-18}
		$p$ & $\gamma$ & Bias & SD & SE & CP & Bias & SD & SE & CP & Bias & SD & SE & CP & Bias & SD & SE & CP \\
		\midrule
		4  & 0.3 &-0.006&0.598&0.603&0.955&-0.001&0.546&0.552&0.956&-0.003&0.467&0.470&0.953&-0.006&0.489&0.499&0.953\\
		& 0.4 &-0.002&0.612&0.602&0.944&0.002&0.561&0.551&0.948&-0.003&0.474&0.470&0.947&0.004&0.502&0.499&0.949\\
		& 0.5 &-0.003&0.605&0.603&0.955&-0.015&0.552&0.552&0.953&-0.012&0.470&0.470&0.950&0.003&0.498&0.499&0.951\\
		& 0.6 &0.003&0.607&0.603&0.948&-0.007&0.554&0.551&0.950&-0.008&0.483&0.470&0.941&0.003&0.499&0.499&0.945\\
		& 0.7 &0.004&0.605&0.603&0.949&0.002&0.561&0.551&0.945&-0.010&0.482&0.470&0.944&0.001&0.513&0.499&0.942\\
		& 0.8 &0.000&0.604&0.602&0.952&-0.003&0.562&0.552&0.946&-0.009&0.480&0.470&0.944&-0.009&0.526&0.498&0.939\\
		\midrule
		8  & 0.4 &-0.006&0.699&0.697&0.951&-0.002&0.631&0.631&0.953&0.000&0.473&0.471&0.945&-0.002&0.486&0.499&0.952\\
		& 0.5 &-0.003&0.689&0.700&0.952&-0.009&0.628&0.630&0.951&-0.001&0.463&0.471&0.954&0.002&0.500&0.499&0.950\\
		& 0.6 &0.001&0.714&0.698&0.945&0.012&0.636&0.630&0.951&0.004&0.475&0.471&0.949&-0.002&0.503&0.499&0.947\\
		& 0.7 &-0.019&0.704&0.697&0.948&-0.007&0.635&0.629&0.950&0.015&0.480&0.471&0.943&-0.005&0.502&0.499&0.948\\
		& 0.8 &-0.004&0.706&0.699&0.947&-0.003&0.631&0.630&0.950&-0.001&0.479&0.471&0.945&-0.002&0.508&0.500&0.946\\
		\midrule
		16 & 0.5 &-0.007&0.763&0.759&0.946&-0.017&0.681&0.679&0.946&-0.006&0.478&0.473&0.946&0.002&0.497&0.502&0.950\\
		& 0.6 &-0.005&0.762&0.759&0.947&-0.010&0.683&0.680&0.948&-0.002&0.473&0.473&0.947&0.006&0.495&0.502&0.958\\
		& 0.7 &-0.005&0.758&0.756&0.951&-0.008&0.678&0.679&0.953&-0.009&0.482&0.473&0.947&0.009&0.510&0.502&0.948\\
		& 0.8 &-0.003&0.761&0.759&0.946&-0.014&0.677&0.679&0.951&0.001&0.487&0.473&0.942&-0.006&0.513&0.502&0.943\\
		\midrule
		30 & 0.6 &0.000&0.783&0.797&0.951&0.002&0.695&0.711&0.957&-0.003&0.470&0.480&0.954&-0.007&0.505&0.510&0.951\\
		& 0.7 &0.004&0.778&0.796&0.957&-0.002&0.697&0.712&0.955&0.003&0.478&0.480&0.951&-0.009&0.503&0.510&0.954\\
		& 0.8 &-0.004&0.789&0.795&0.951&-0.010&0.700&0.712&0.954&0.006&0.478&0.480&0.953&-0.002&0.513&0.510&0.948\\
		\midrule
		56 & 0.7 &-0.006&0.791&0.832&0.959&0.004&0.704&0.742&0.958&-0.010&0.475&0.494&0.958&-0.005&0.514&0.527&0.952\\
		& 0.8 &-0.006&0.793&0.832&0.958&-0.012&0.711&0.742&0.960&0.008&0.490&0.495&0.951&0.008&0.515&0.527&0.954\\
		\bottomrule
	\end{tabular}
	
	\label{ATE_est}
\end{table}

\section{Real Data Example}\label{Section_realdata}

In this section, we use the primary biliary cirrhosis (PBC) dataset from the Mayo Clinic as a real data example to compare the covariate balance and estimation precision under CR and various CAR procedures with both discrete and continuous covariates. This dataset includes 424 patients, encompassing both individuals who were actively enrolled in a randomized clinical trial and additional cases who consented to provide basic measurements. The original study was designed to investigate whether D-penicillamine is effective in improving clinical outcomes for patients with PBC, a chronic autoimmune liver disease characterized by progressive damage to the intrahepatic bile ducts (\cite{murtaugh1994primary}).

In our analysis, we restrict attention to the 312 patients who participated in the randomized trial, for whom treatment assignment, outcomes, and baseline covariates are systematically recorded. The outcome of interest is the square root of the survival time, where survival time is defined as the number of days from registration to death, transplantation, or the end of follow-up. The dataset contains 16 baseline covariates, including 7 discrete variables and 9 continuous variables. Missing values in these covariates are handled via median imputation, and continuous covariates are normalized accordingly. Detailed descriptions of all covariates are provided in the Supplementary Material. 

We redesign the treatment assignment under CR, IE-CAR, and IR-CAR with $\gamma\in\{0.4,0.5,0.6,0.7\}$ to obtain the treatment assignments for both the equal ($\pi=1/2$) and unequal ($\pi=2/3$) allocations. The feature map $\phi(X)$ is specified according to Equation (\ref{Eq_phi_mix}) with equal weights, and all other parameters are chosen in accordance with Section \ref{Section_simulation}. For each randomization scheme, we repeat the randomization procedure 5,000 times and report the average imbalance measure $\text{Imb}_n$. 

In addition, to evaluate treatment effect estimation after redesign, we construct a synthetic dataset by imputing the missing potential outcomes using a causal forest estimator trained on the observed data. Based on this synthetic dataset, we estimate the average treatment effect (ATE) under each assignment mechanism and construct corresponding 95\% confidence intervals. For IE-CAR and IR-CAR with $\gamma=0.4$, due to the lack of theoretical justification for valid variance estimation, we do not report confidence intervals. The results are presented in Table \ref{realdata}. 

\begin{table}[htbp]
\centering
\setlength{\tabcolsep}{3pt} 
\caption{Means of imbalance measure, ATE Estimates and 95\% confidence intervals (95\% CI) for $\widehat{\tau}$ under various randomization procedures for the synthetic PBC dataset. }
\begin{tabular}{*{7}{c}}
\toprule
& \multicolumn{3}{c}{$\pi = 1/2$} & \multicolumn{3}{c}{$\pi = 2/3$}\\
\cmidrule(lr{0pt}){2-4} \cmidrule(lr{0pt}){5-7}
Randomization & $\text{Imb}_n$ & Estimate & 95\% CI  & $\text{Imb}_n$ & Estimate & 95\% CI\\
\midrule 
CR  & 308.53 & -1.47 & (-5.76, 2.81) & 272.71 & -1.60 & (-5.97, 2.77)   \\
IE-CAR &  12.66 & 0.51 & \textemdash & 22.91 & -0.83 & \textemdash  \\
IR-CAR($\gamma=0.4$) & 29.53 & -0.84 & \textemdash & 30.98 & -0.50 & \textemdash \\
IR-CAR($\gamma=0.5$) & 41.15 & 0.11 & (-2.26, 2.49) & 39.02 & 0.58 & (-1.87, 3.03)  \\
IR-CAR($\gamma=0.6$) & 56.91 & 0.42 & (-1.95, 2.80) & 51.80 & -0.92 & (-3.39, 1.56)  \\
IR-CAR($\gamma=0.7$) & 75.86 & 0.76 & (-1.64, 3.15) & 68.35 & 1.30 & (-1.21, 3.81) \\
\bottomrule
\end{tabular}
\label{realdata}
\end{table}

As shown in Table \ref{realdata}, the imbalance measure under CR is the largest, while that under IE-CAR is the smallest. In addition, under IR-CAR, the imbalance measure increases with $\gamma$. Moreover, the confidence intervals constructed based on $\widehat{\tau}$ under IR-CAR are noticeably shorter than those under CR, demonstrating improved efficiency in ATE estimation. Across different randomization procedures, all confidence intervals cover zero, suggesting no statistically significant treatment effect. This finding is consistent with existing evidence based on the PBC data from the Mayo Clinic trial, where D-penicillamine was not found to provide a clear survival benefit.

\section{Conclusion}\label{Section_conclusion}

In this paper, we establish the theoretical properties of IE-CAR and IR-CAR, proposed by \cite{ma2024new} and \cite{zhang2026covariate}, respectively, under high-dimensional settings. The convergence rates of imbalance vectors are evaluated both theoretically and by numerical studies. In addition, we establish the asymptotic behavior of the imbalance of additional covariates and of the difference-in-means estimator for the ATE under IR-CAR. Table \ref{conclusion} provides a summary of the properties of the imbalance measure and additional covariates under CR, IE-CAR and IR-CAR. 

\begin{table}[h]
	\centering
	\footnotesize 
	\setlength{\tabcolsep}{3pt} 
	\renewcommand{\arraystretch}{1.5}
	\caption{Summary of theoretical properties for various randomization procedures. }
	\begin{tabular}{|c|c|c|c|c|}
	\hline
	\multirow{2}{*}{Procedure}	&  \multicolumn{3}{c|}{Imbalance Bounds} & Additional-Covariate  \\
		\cline{2-4} 
	 	& $q=O(n^{\gamma})$  & $q=\Omega(n^{\gamma})$ and $o(n)$ & $q=\Omega(n)$ &  Imbalance   \\
		\hline
	CR & \multicolumn{3}{c|}{$\asymp nq$} & Asymptotic normal \\
	\hline
	IE-CAR & \multicolumn{2}{c|}{$O\left(n^{\frac{1}{\nu-1}} q^{\frac{2\nu-3}{\nu-1}}\right)$ and $\Omega(q^2)$}  & $\asymp nq$ & Not established  \\
	\hline
	\multirow{2}{*}{IR-CAR} & $O\left(n^{\gamma}q + n^{\frac{\gamma\nu -2\gamma + 2}{2(\nu-1)}} q^{\frac{3\nu-4}{2(\nu-1)}}\right)$  & $O\left(n^{\frac{1}{\nu-1}} q^{\frac{2\nu-3}{\nu-1}}\right)$  & \multirow{2}{*}{$\asymp nq$} & Asymptotic normal  \\
	& $\Omega(n^{\gamma}q)$ & $\Omega(q^2)$ & & Smaller asymptotic variance \\
	\hline
	\end{tabular}
\label{conclusion}
\end{table}

Our results can be extended in several directions. First, multi-armed experiments are ubiquitous in clinical trials, which motivates the study of theoretical properties of multi-treatment CAR procedures in high-dimensional settings. Second, when $q=\Omega(n)$, the CAR procedures covered by our framework no longer improve the order of the imbalance measure relative to CR. Whether there exists a procedure that can achieve satisfactory performance in this regime remains an open problem. Third, \cite{fang2026general} proposed a general non-Markovian CAR framework to address the shift problem under unequal allocation, which was further extended to covariate-adjusted response-adaptive settings by \cite{fang2026cbara}. Establishing theoretical properties for these two procedures under high-dimensional settings remains an important research problem.

\section*{Acknowledgments}
This work was supported by grants from National Key R\&D Program of China (No. 2024YFA1013502), NSF of China (Grant No. U23A2064) and the Summit Advancement Disciplines of Zhejiang Province (Zhejiang Gongshang University - Statistics).

\nocite{*}
\bibliographystyle{chicago} 
\bibliography{ref.bib}

@article{ma2024new,
  title={A new and unified family of covariate adaptive randomization procedures and their properties},
  author={Ma, Wei and Li, Ping and Zhang, Li-Xin and Hu, Feifang},
  journal={Journal of the American Statistical Association},
  volume={119},
  number={545},
  pages={151--162},
  year={2024},
  publisher={Taylor \& Francis}
}

@article{liu2025properties,
  title={The properties of covariate-adaptive randomization procedures with possibly unequal allocation ratio},
  author={Liu, Xiao and Hu, Feifang and Ma, Wei},
  journal={The Annals of Applied Statistics},
  volume={19},
  number={2},
  pages={907--925},
  year={2025},
  publisher={Institute of Mathematical Statistics}
}

@article{zhang2023asymptotic,
  title={Asymptotic Properties of Multi-Treatment Covariate Adaptive Randomization Procedures for Balancing Observed and Unobserved Covariates},
  author={Zhang, Li-Xin},
  journal={arXiv preprint arXiv:2305.13842},
  year={2023}
}

@article{HuHu,
author = {Yanqing Hu and Feifang Hu},
title = {{Asymptotic properties of covariate-adaptive randomization}},
volume = {40},
journal = {The Annals of Statistics},
number = {3},
publisher = {Institute of Mathematical Statistics},
pages = {1794 -- 1815},
year = {2012},
}

@article{taves1974minimization,
  title={Minimization: a new method of assigning patients to treatment and control groups},
  author={Taves, Donald R},
  journal={Clinical Pharmacology \& Therapeutics},
  volume={15},
  number={5},
  pages={443--453},
  year={1974},
  publisher={Wiley Online Library}
}

@article{hu2020theory,
  title={On the theory of covariate-adaptive designs},
  author={Hu, Feifang and Zhang, Li-Xin},
  journal={arXiv preprint arXiv:2004.02994},
  year={2020}
}

@article{ZELEN1974365,
title = {The randomization and stratification of patients to clinical trials},
journal = {Journal of Chronic Diseases},
volume = {27},
number = {7},
pages = {365-375},
year = {1974},
author = {M. Zelen}
}

@article{pocock1975sequential,
  title={Sequential treatment assignment with balancing for prognostic factors in the controlled clinical trial},
  author={Pocock, Stuart J and Simon, Richard},
  journal={Biometrics},
  pages={103--115},
  year={1975},
  publisher={JSTOR}
}

@article{hu2023multi,
  title={Multi-arm covariate-adaptive randomization},
  author={Hu, Feifang and Ye, Xiaoqing and Zhang, Li-Xin},
  journal={Science China Mathematics},
  volume={66},
  number={1},
  pages={163--190},
  year={2023},
  publisher={Springer}
}

@article{baldi2011covariate,
  title={The covariate-adaptive biased coin design for balancing clinical trials in the presence of prognostic factors},
  author={Baldi Antognini, Alessandro and Zagoraiou, Maroussa},
  journal={Biometrika},
  volume={98},
  number={3},
  pages={519--535},
  year={2011},
  publisher={Oxford University Press}
}

@article{10.1214/17-AOS1562,
author = {Guillaume Lecu{\'e} and Shahar Mendelson},
title = {{Regularization and the small-ball method I: Sparse recovery}},
volume = {46},
journal = {The Annals of Statistics},
number = {2},
publisher = {Institute of Mathematical Statistics},
pages = {611 -- 641},
year = {2018},
doi = {10.1214/17-AOS1562},
URL = {https://doi.org/10.1214/17-AOS1562}
}

@article{lecue2017regularization,
  title={Regularization and the small-ball method II: complexity dependent error rates},
  author={Lecu{\'e}, Guillaume and Mendelson, Shahar},
  journal={Journal of Machine Learning Research},
  volume={18},
  number={146},
  pages={1--48},
  year={2017}
}

@article{zhang2026covariate,
  title={Covariate-Adaptive Randomization in Clinical Trials Without Inflated Variances},
  author={Zhang, Li-Xin},
  journal={arXiv preprint arXiv:2602.10760},
  year={2026}
}

@article{guo2025covariate,
  title={Covariate selection for optimizing balance with an innovative adaptive randomization approach},
  author={Guo, Ziqing and Liu, Yang and Xia, Lucy},
  journal={Statistical Methods in Medical Research},
  volume={34},
  number={9},
  pages={1751--1779},
  year={2025},
  publisher={SAGE Publications Sage UK: London, England}
}

@article{zhang2022covariate,
  title={Covariate-adaptive randomization with variable selection in clinical trials},
  author={Zhang, Hao and Hu, Feifang and Yin, Jianxin},
  journal={Stat},
  volume={11},
  number={1},
  pages={e461},
  year={2022},
  publisher={Wiley Online Library}
}

@article{jiang2025adjustments,
  title={Adjustments with many regressors under covariate-adaptive randomizations},
  author={Jiang, Liang and Li, Liyao and Miao, Ke and Zhang, Yichong},
  journal={Journal of Econometrics},
  volume={249},
  pages={105991},
  year={2025},
  publisher={Elsevier}
}

@article{liu2023lasso,
  title={Lasso-adjusted treatment effect estimation under covariate-adaptive randomization},
  author={Liu, Hanzhong and Tu, Fuyi and Ma, Wei},
  journal={Biometrika},
  volume={110},
  number={2},
  pages={431--447},
  year={2023},
  publisher={Oxford University Press}
}

@book{wainwright2019high,
  title={High-dimensional statistics: A non-asymptotic viewpoint},
  author={Wainwright, Martin J},
  volume={48},
  year={2019},
  publisher={Cambridge university press}
}

@article{lei2021regression,
  title={Regression adjustment in completely randomized experiments with a diverging number of covariates},
  author={Lei, Lihua and Ding, Peng},
  journal={Biometrika},
  volume={108},
  number={4},
  pages={815--828},
  year={2021},
  publisher={Oxford University Press}
}

@article{bugni2018inference,
  title={Inference under covariate-adaptive randomization},
  author={Bugni, Federico A and Canay, Ivan A and Shaikh, Azeem M},
  journal={Journal of the American Statistical Association},
  volume={113},
  number={524},
  pages={1784--1796},
  year={2018},
  publisher={Taylor \& Francis}
}

@article{ma2022regression,
  title={Regression analysis for covariate-adaptive randomization: a robust and efficient inference perspective},
  author={Ma, Wei and Tu, Fuyi and Liu, Hanzhong},
  journal={Statistics in Medicine},
  volume={41},
  number={29},
  pages={5645--5661},
  year={2022},
  publisher={Wiley Online Library}
}

@article{bugni2019inference,
  title={Inference under covariate-adaptive randomization with multiple treatments},
  author={Bugni, Federico A and Canay, Ivan A and Shaikh, Azeem M},
  journal={Quantitative Economics},
  volume={10},
  number={4},
  pages={1747--1785},
  year={2019},
  publisher={Wiley Online Library}
}

@article{ye2022inference,
  title={Inference on the average treatment effect under minimization and other covariate-adaptive randomization methods},
  author={Ye, Ting and Yi, Yanyao and Shao, Jun},
  journal={Biometrika},
  volume={109},
  number={1},
  pages={33--47},
  year={2022},
  publisher={Oxford University Press}
}

@article{gu2023regression,
  title={Regression-based multiple treatment effect estimation under covariate-adaptive randomization},
  author={Gu, Yujia and Liu, Hanzhong and Ma, Wei},
  journal={Biometrics},
  volume={79},
  number={4},
  pages={2869--2880},
  year={2023},
  publisher={Wiley Online Library}
}

@article{ma2015testing,
  title={Testing hypotheses of covariate-adaptive randomized clinical trials},
  author={Ma, Wei and Hu, Feifang and Zhang, Lixin},
  journal={Journal of the American Statistical Association},
  volume={110},
  number={510},
  pages={669--680},
  year={2015},
  publisher={Taylor \& Francis}
}

@article{shao2010theory,
  title={A theory for testing hypotheses under covariate-adaptive randomization},
  author={Shao, Jun and Yu, Xinxin and Zhong, Bob},
  journal={Biometrika},
  volume={97},
  number={2},
  pages={347--360},
  year={2010},
  publisher={Oxford University Press}
}

@article{zhang2020quantile,
  title={Quantile treatment effects and bootstrap inference under covariate-adaptive randomization},
  author={Zhang, Yichong and Zheng, Xin},
  journal={Quantitative Economics},
  volume={11},
  number={3},
  pages={957--982},
  year={2020},
  publisher={Wiley Online Library}
}

@article{liu2024testing,
  title={Testing heterogeneous treatment effect with quantile regression under covariate-adaptive randomization},
  author={Liu, Yang and Xia, Lucy and Hu, Feifang},
  journal={Journal of Econometrics},
  volume={249},
  pages={105808},
  year={2025},
  publisher={Elsevier}
}

@article{jiang2023regression,
  title={Regression-adjusted estimation of quantile treatment effects under covariate-adaptive randomizations},
  author={Jiang, Liang and Phillips, Peter CB and Tao, Yubo and Zhang, Yichong},
  journal={Journal of Econometrics},
  volume={234},
  number={2},
  pages={758--776},
  year={2023},
  publisher={Elsevier}
}

@article{ye2020robust,
  title={Robust tests for treatment effect in survival analysis under covariate-adaptive randomization},
  author={Ye, Ting and Shao, Jun},
  journal={Journal of the Royal Statistical Society Series B: Statistical Methodology},
  volume={82},
  number={5},
  pages={1301--1323},
  year={2020},
  publisher={Oxford University Press}
}

@article{bannick2025general,
  title={A general form of covariate adjustment in clinical trials under covariate-adaptive randomization},
  author={Bannick, Marlena S and Shao, Jun and Liu, Jingyi and Du, Yu and Yi, Yanyao and Ye, Ting},
  journal={Biometrika},
  volume={112},
  number={3},
  pages={asaf029},
  year={2025},
  publisher={Oxford University Press}
}

@article{wang2023model,
  title={Model-robust inference for clinical trials that improve precision by stratified randomization and covariate adjustment},
  author={Wang, Bingkai and Susukida, Ryoko and Mojtabai, Ramin and Amin-Esmaeili, Masoumeh and Rosenblum, Michael},
  journal={Journal of the American Statistical Association},
  volume={118},
  number={542},
  pages={1152--1163},
  year={2023},
  publisher={Taylor \& Francis}
}

@article{frane1998method,
  title={A method of biased coin randomization, its implementation, and its validation},
  author={Frane, James W},
  journal={Drug information journal: DIJ/Drug Information Association},
  volume={32},
  number={2},
  pages={423--432},
  year={1998},
  publisher={Springer}
}

@article{endo2006minimization,
  title={Minimization method for balancing continuous prognostic variables between treatment and control groups using Kullback-Leibler divergence},
  author={Endo, Akira and Nagatani, Fumio and Hamada, Chikuma and Yoshimura, Isao},
  journal={Contemporary Clinical Trials},
  volume={27},
  number={5},
  pages={420--431},
  year={2006},
  publisher={Elsevier}
}

@article{stigsby2010rank,
  title={Rank-minimization for balanced assignment of subjects in clinical trials},
  author={Stigsby, Bjarne and Taves, Donald R},
  journal={Contemporary clinical trials},
  volume={31},
  number={2},
  pages={147--150},
  year={2010},
  publisher={Elsevier}
}

@article{atkinson1982optimum,
  title={Optimum biased coin designs for sequential clinical trials with prognostic factors},
  author={Atkinson, Anthony C},
  journal={Biometrika},
  volume={69},
  number={1},
  pages={61--67},
  year={1982},
  publisher={Oxford University Press}
}

@article{lin2012balancing,
  title={Balancing continuous and categorical baseline covariates in sequential clinical trials using the area between empirical cumulative distribution functions},
  author={Lin, Yunzhi and Su, Zheng},
  journal={Statistics in medicine},
  volume={31},
  number={18},
  pages={1961--1971},
  year={2012},
  publisher={Wiley Online Library}
}

@article{ma2013balancing,
  title={Balancing continuous covariates based on kernel densities},
  author={Ma, Zhenjun and Hu, Feifang},
  journal={Contemporary clinical trials},
  volume={34},
  number={2},
  pages={262--269},
  year={2013},
  publisher={Elsevier}
}

@article{qin2022adaptive,
  title={Adaptive randomization via mahalanobis distance},
  author={Qin, Yichen and Li, Yang and Ma, Wei and Yang, Haoyu and Hu, Feifang},
  journal={Statistica Sinica},
  volume={1},
  pages={1--40},
  year={2022}
}

@article{ma2020statistical,
  title={Statistical inference for covariate-adaptive randomization procedures},
  author={Ma, Wei and Qin, Yichen and Li, Yang and Hu, Feifang},
  journal={Journal of the American Statistical Association},
  volume={115},
  number={531},
  pages={1488--1497},
  year={2020},
  publisher={Taylor \& Francis}
}

@article{murtaugh1994primary,
  title={Primary biliary cirrhosis: prediction of short-term survival based on repeated patient visits},
  author={Murtaugh, Paul A and Dickson, E Rolland and Van Dam, Gooitzen M and Malinchoc, Michael and Grambsch, Patricia M and Langworthy, Alice L and Gips, Chris H},
  journal={Hepatology},
  volume={20},
  number={1},
  pages={126--134},
  year={1994},
  publisher={Wiley Online Library}
}

@article{zhang2019theory,
  title={Theory on covariate-adaptive randomized clinical trials: efficiency, selection bias and randomization methods},
  author={Zhang, Li-Xin},
  journal={arXiv preprint arXiv:1912.03636},
  year={2019}
}

@article{fang2026general,
  title={A General (Non-Markovian) Framework for Covariate Adaptive Randomization: Achieving Balance While Eliminating the Shift},
  author={Fang, Hengjia and Ma, Wei},
  journal={arXiv preprint arXiv:2602.22648},
  year={2026}
}

@article{fang2026cbara,
  title={CBARA: Covariate-Balanced-and-Adjusted Response-Adaptive Randomization},
  author={Fang, Hengjia and Ma, Wei},
  journal={arXiv preprint arXiv:2604.25565},
  year={2026}
}

@article{junge2003noncommutative,
  title={Noncommutative burkholder/rosenthal inequalities},
  author={Junge, Marius and Xu, Quanhua},
  journal={The Annals of Probability},
  volume={31},
  number={2},
  pages={948--995},
  year={2003},
  publisher={Institute of Mathematical Statistics}
}

\appendix
\newpage
\bigskip
\begin{center}
{\large\bf SUPPLEMENTARY MATERIAL}
\end{center}

\setcounter{equation}{0}
\numberwithin{equation}{section}
\numberwithin{table}{section}
\numberwithin{figure}{section}



\section{Proofs of Theoretical Properties for IE-CAR Procedures}
First, we prove an essential lemma, which plays an essential role in the subsequent proofs of Theorems \ref{Thm_traditional_CAR} and \ref{Thm_RCAR}. 
\begin{lemma}\label{Lemma_qusi_smallball}
	Suppose that Assumptions \ref{assump_iid}-\ref{assump_boundSigma} hold. Then, there exists constants $\kappa$, $\delta>0$ such that for $m=0,\cdots,n-1$,
	\begin{align}\label{Eq_qusi_smallball}
		\mathbb{P}\left( \left|\Lambda_{m}^\top \phi(X_{m+1})\right| \geq \kappa \|\Lambda_m \| \middle| \Lambda_m  \right)
		\geq \delta.
	\end{align}
\end{lemma}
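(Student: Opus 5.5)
The plan is to condition on $\Lambda_m$, use the independence of $X_{m+1}$ from the past, and apply the Paley--Zygmund inequality to the scalar random variable $\lambda^\top\phi(X_{m+1})$ for fixed $\lambda$. The lower second moment will come from Assumption \ref{assump_boundSigma}, and the upper fourth moment from Assumption \ref{assump_phi1}.

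\textbf{Step 1 (reduce to a fixed vector).} $\Lambda_m=\sum_{i=1}^m (T_i-\pi)\phi(X_i)$ is measurable with respect to $\{X_i,T_i\}_{i\le m}$. Under Assumption \ref{assump_iid}, $X_{m+1}$ is independent of this $\sigma$-field, since the assignments $T_i$, $i\le m$, use only earlier covariates and independent randomization. Hence the conditional probability in (\ref{Eq_qusi_smallball}) equals $g(\Lambda_m)$, where
\begin{align*}
g(\lambda)=\mathbb{P}\big(|\lambda^\top\phi(X)|\ge\kappa\|\lambda\|\big).
\end{align*}
It therefore suffices to show $g(\lambda)\ge\delta$ for every $\lambda$ in the set of possible values of $\Lambda_m$. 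If $\lambda=0$, then $g(\lambda)=1$ and the claim is trivial.

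\textbf{Step 2 (the key point: $\Lambda_m$ lives in the range of $\boldsymbol{\Sigma}_{\phi\phi}$).} Take $u$ in the null space of $\boldsymbol{\Sigma}_{\phi\phi}$. Then $\mathbb{E}[(u^\top\phi(X))^2]=u^\top\boldsymbol{\Sigma}_{\phi\phi}u=0$, so $u^\top\phi(X)=0$ almost surely. It follows that $\phi(X_i)\in\operatorname{range}(\boldsymbol{\Sigma}_{\phi\phi})$ almost surely for every $i$, and hence $\Lambda_m\in\operatorname{range}(\boldsymbol{\Sigma}_{\phi\phi})$ almost surely.

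Let $c_0>0$ be the lower bound on the nonzero eigenvalues from Assumption \ref{assump_boundSigma}. For $\lambda$ in this range,
\begin{align*}
\mathbb{E}[(\lambda^\top\phi(X))^2]=\lambda^\top\boldsymbol{\Sigma}_{\phi\phi}\lambda\ge c_0\|\lambda\|^2 .
\end{align*}
This is the step I expect to be the real obstacle. Without it, the rank deficiency of $\boldsymbol{\Sigma}_{\phi\phi}$ would break any small-ball bound for directions in its null space.

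\textbf{Step 3 (upper fourth moment).} Since $\iota\ge4$, Lyapunov's inequality and (\ref{eq_assump_phi}) give
\begin{align*}
\mathbb{E}[(\lambda^\top\phi(X))^4]\le M_\iota^{4/\iota}\|\lambda\|^4 .
\end{align*}

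\textbf{Step 4 (Paley--Zygmund).} Fix $\theta\in(0,1)$, say $\theta=1/2$, and apply Paley--Zygmund to $\xi=(\lambda^\top\phi(X))^2$:
\begin{align*}
\mathbb{P}\big(\xi\ge\theta\,\mathbb{E}\xi\big)\ge(1-\theta)^2\frac{(\mathbb{E}\xi)^2}{\mathbb{E}\xi^2}\ge(1-\theta)^2\frac{c_0^2}{M_\iota^{4/\iota}} .
\end{align*}
On the event $\{\xi\ge\theta\,\mathbb{E}\xi\}$, Step 2 gives $|\lambda^\top\phi(X)|\ge\sqrt{\theta c_0}\,\|\lambda\|$. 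Therefore (\ref{Eq_qusi_smallball}) holds with
\begin{align*}
\kappa=\sqrt{\theta c_0},\qquad \delta=(1-\theta)^2c_0^2M_\iota^{-4/\iota}.
\end{align*}
Both constants are free of $m$, $n$ and $q$, as the lemma requires.
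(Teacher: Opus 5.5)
Your proposal is correct and follows essentially the same route as the paper: the paper also shows that $\Lambda_m$ lies almost surely in the range of $\boldsymbol{\Sigma}_{\phi\phi}$, obtaining $\Lambda_m^\top\boldsymbol{\Sigma}_{\phi\phi}\Lambda_m\ge\lambda_0\|\Lambda_m\|^2$. It does this by orthogonally diagonalizing $\boldsymbol{\Sigma}_{\phi\phi}$ and noting that the null-space components of $\phi(X)$ vanish almost surely, and then applies Paley--Zygmund to $(\Lambda_m^\top\phi(X_{m+1}))^2$ with the fourth-moment bound from Assumption \ref{assump_phi1}. Your version is somewhat more careful: you spell out why $X_{m+1}$ is independent of $\Lambda_m$, handle the case $\Lambda_m=0$, and give the Lyapunov step that produces $M_\iota^{4/\iota}$. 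Your constant $\kappa=\sqrt{\theta c_0}$ is also the natural one; the paper writes $\theta\sqrt{\lambda_0}$, which is smaller and therefore also valid.
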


\begin{remark}[Further Details on Remark \ref{remark_weaken_condition}]\label{remark_further_detail}
	In Assumption \ref{assump_phi1}, the requirement $\nu\geq \iota\geq 4$ is imposed only to establish (\ref{Eq_qusi_smallball}). However, if $\phi(X)$ satisfies the small-ball condition, the conclusion follows directly. Therefore, in the subsequent proofs of Theorems \ref{Thm_traditional_CAR} and \ref{Thm_RCAR}, the moment conditions can be weakened to $\nu>2$ and $\iota =2$. 
\end{remark}

\begin{proof}[Proof of Lemma \ref{Lemma_qusi_smallball}: ]
	First, we prove that there exists a constant $\lambda_0>0$ such that for $m=0,1,\cdots,n-1$, 
	\begin{align*}
		\Lambda_m^\top \boldsymbol{\Sigma}_{\phi\phi} \Lambda_m \geq \lambda_0 \left\| \Lambda_m\right\|^2. 
	\end{align*}
	By Assumption \ref{assump_boundSigma}, there exists an orthogonal matrix $\boldsymbol{U}$ such that
	\begin{align*}
		\boldsymbol{U} \boldsymbol{\Sigma}_{\phi\phi} \boldsymbol{U}^\top = \operatorname{diag}\{\lambda_1,\cdots,\lambda_{q'},0,\cdots,0\},
	\end{align*}
	where $q'\leq q$ denotes the number of positive eigenvalues of $\boldsymbol{\Sigma}_{\phi\phi}$ and $\lambda_1,\cdots,\lambda_{q'}$ denote the corresponding positive eigenvalues. Let $\widetilde{\phi}(X) = (\widetilde{\phi}_{(1)}(X)^\top,\widetilde{\phi}_{(2)}(X)^\top)^\top = \boldsymbol{U}\phi(X)$, $\widetilde{\Lambda}_m = (\widetilde{\Lambda}_{m(1)}^\top,\widetilde{\Lambda}_{m(2)}^\top)^\top = \boldsymbol{U}\Lambda_m$, where $\widetilde{\phi}_{(1)}(X)$ and $\widetilde{\Lambda}_{m(1)}$ are $q'$-dimensional, and $\widetilde{\phi}_{(2)}(X)$ and $\widetilde{\Lambda}_{m(2)}$ are $(q-q')$-dimensional. Then, $\mathbb{E}[\widetilde{\phi}(X)\widetilde{\phi}(X)^\top] = \operatorname{diag}\{\lambda_1,\cdots,\lambda_{q'},0,\cdots,0\}$. Note that $\mathbb{E}[\widetilde{\phi}_{(2)}(X)\widetilde{\phi}_{(2)}(X)^\top] = \boldsymbol{0}$, which implies that $\widetilde{\phi}_{(2)}(X) =0$ almost surely and then $\widetilde{\Lambda}_{m(2)}=0$ almost surely. Therefore, let $\lambda_0 = \inf_{1\leq i \leq q'} \lambda_i > 0$,
	\begin{align*}
		\Lambda_m^\top \boldsymbol{\Sigma}_{\phi\phi} \Lambda_m = \widetilde{\Lambda}_{m(1)}^\top \operatorname{diag}\{\lambda_1,\cdots,\lambda_{q'}\} \widetilde{\Lambda}_{m(1)} \geq \lambda_0 \left\| \widetilde{\Lambda}_{m(1)} \right\|^2 = \lambda_0 \left\| \widetilde{\Lambda}_{m} \right\|^2 = \lambda_0 \left\| \Lambda_{m} \right\|^2. 
	\end{align*}
	By Paley-Zygmund inequality, for $\theta\in(0,1)$, 
	\begin{align*}
		\mathbb{P}\left( \left|\Lambda_{m}^\top \phi(X_{m+1})\right|^2 \geq \theta \Lambda_m^\top \boldsymbol{\Sigma}_{\phi\phi} \Lambda_m \middle| \Lambda_m  \right) \geq (1-\theta)^2 \frac{\left(\Lambda_m^\top \boldsymbol{\Sigma}_{\phi\phi} \Lambda_m\right)^2}{\mathbb{E}\left[\left(\Lambda_{m}^\top \phi(X_{m+1})\right)^4\middle| \Lambda_m \right] } \geq (1-\theta)^2 \frac{\lambda_0^2 }{M_4}.
	\end{align*}
	Besides,
	\begin{align*}
		\mathbb{P}\left( \left|\Lambda_{m}^\top \phi(X_{m+1})\right|^2 \geq \theta \Lambda_m^\top \boldsymbol{\Sigma}_{\phi\phi} \Lambda_m \middle| \Lambda_m  \right) \leq&\ \mathbb{P}\left( \left|\Lambda_{m}^\top \phi(X_{m+1})\right|^2 \geq \theta \lambda_0\|\Lambda_m\|^2 \middle| \Lambda_m  \right) \\
		=&\ \mathbb{P}\left( \left|\Lambda_{m}^\top \phi(X_{m+1})\right| \geq \theta \sqrt{\lambda_0}\|\Lambda_m\| \middle| \Lambda_m  \right).
	\end{align*}
	Then, let $\kappa = \theta\sqrt{\lambda_0}$ and $\delta = (1-\theta)^2 \lambda_0^2/M_4$, we prove the desired results. 
\end{proof}

\begin{proof}[Proof of Theorem \ref{Thm_traditional_CAR}:] 
	First, we consider an elementary inequality
	\begin{align}\label{proof_element_eq}
		\| u + v \|^r - \| u\|^r \leq r (u^\top v) \| u \|^{r-2} + c_r \left( \| u \|^{r-2}\| v\|^2 +  \| v\|^r \right),
	\end{align}
	where $c_{r}$ is a constant that only depends on $r$. Note that $\Lambda_{m+1} = \Lambda_{m} + (T_{m+1}-\pi)\phi(X_{m+1})$, which implies that
	\begin{align*}
		\| \Lambda_{m+1} \|^r - \| \Lambda_{m} \|^r  \leq &\ r (T_{m+1}-\pi) \Lambda_{m}^\top \phi(X_{m+1}) \| \Lambda_{m} \|^{r-2} \\
		&\ + c_{r} \left( \|\phi(X_{m+1}) \|^{r} + \|\phi(X_{m+1}) \|^2 \| \Lambda_{m} \|^{r-2} \right).
	\end{align*}
	Let \(\mathcal{F}_m\) be the history \(\sigma\)-field generated by the covariates \(\{X_i\}_{i=1}^m\) and the assignments \(\{T_i\}_{i=1}^m\), then
	\begin{align*}
		\mathbb{E}\left[\|\Lambda_{m+1}\|^{r} \middle| \mathcal{F}_{m}, X_{m+1}\right] - \|\Lambda_{m}\|^r 
		\leq&\  -r(\rho-\pi)\left|\Lambda_{m}^\top\phi(X_{m+1})\right| \|\Lambda_{m}\|^{r-2} \\
		&\ + c_r\left\{\|\phi(X_{m+1})\|^r + \|\phi(X_{m+1})\|^2\|\Lambda_{m}\|^{r-2} \right\},
	\end{align*}
	which implies that
	\begin{align*}
		\mathbb{E}\left[\|\Lambda_{m+1}\|^r\middle| \Lambda_{m} \right] - \| \Lambda_{m} \|^r
		&\leq -r(\rho-\pi)\| \Lambda_{m} \|^{r-2} \mathbb{E}\left[\left|  \Lambda_{m}^\top\phi(X_{m+1})\right| \middle|  \Lambda_{m}   \right] \\
		&\quad + c_r\left(\beta_r + \beta_2 \| \Lambda_{m}\|^{r-2}\right),
	\end{align*} 
	where $\beta_r = \mathbb{E}[\|\phi(X)\|^r]$ for $r\geq 2$. \\
	By Lemma \ref{Lemma_qusi_smallball}, we obtain that 
	\begin{align*}
		\mathbb{E}\left[\left|  \Lambda_{m}^\top\phi(X_{m+1})\right| \middle|  \Lambda_{m}   \right] 
		\geq&\ \mathbb{E}\left[\left|  \Lambda_{m}^\top\phi(X_{m+1})\right| \mathbb{I}\left\{ \left|\Lambda_{m}^\top \phi(X_{m+1})\right| \geq \kappa \|\Lambda_m \|\right\} \middle| \Lambda_m \right] \\
		\geq&\ \kappa\|\Lambda_m\| \mathbb{P}\left( \left|\Lambda_{m}^\top \phi(X_{m+1})\right| \geq \kappa \|\Lambda_m \| \middle| \Lambda_m  \right)
		\geq \kappa\delta \|  \Lambda_{m} \|,
	\end{align*}
	which implies that 
	\begin{align*}
		\mathbb{E}\left[\|\Lambda_{m+1}\|^r\middle|  \Lambda_{m}  \right] - \| \Lambda_{m} \|^r
		&\leq -c_* r(\rho-\pi)\| \Lambda_{m} \|^{r-1} + c_r\left(\beta_r + \beta_2 \| \Lambda_{m} \|^{r-2}\right),
	\end{align*}
	where $c_* = \kappa\delta$. Let $g(x) = -c_* r(\rho-\pi)x^{r-1} + c_r\left(\beta_r + \beta_2 x^{r-2}\right)$, we have
	\begin{align*}
		g^\prime (x) = -c_*r(r-1)(\rho-\pi)x^{r-2} + c_r\beta_2(r-2) x^{r-3} = 0,
	\end{align*}
	has only one solution $x^* =  \frac{c_r\beta_2(r-2)}{c_*r(r-1)(\rho-\pi)}$ when $x>0$. Hence, we have 
	\begin{align*}
		\mathbb{E}\left[\|\Lambda_{m+1}\|^r\middle|  \Lambda_{m}  \right] - \| \Lambda_{m} \|^r \leq g(x^*) = \frac{c_{r} \beta_2}{r-1} \left(\frac{c_r\beta_2(r-2)}{c_*r(r-1)(\rho-\pi)} \right)^{r-2} + c_{r}\beta_r := H_{r}.  
	\end{align*}
	Note that $\beta_r \asymp q^{r/2}$, then we have $H_r \asymp q^{r-1}$. Thus, we have
	\begin{align}
		\mathbb{E}\left[\|\Lambda_{m+1}\|^r\right] - \mathbb{E}\left[\| \Lambda_{m} \|^r\right] \leq H_{r},
	\end{align}
	which implies that
	\begin{align*}
		\mathbb{E}\left[\|\Lambda_{m}\|^r\right] \leq mH_{r}. 
	\end{align*}
	On the other hand, note that $\| \Lambda_{m+1} \|^2 = \| \Lambda_{m} \|^2 + 2(T_{m+1}-\pi) \Lambda_{m}^\top \phi(X_{m+1}) + (T_{m+1}-\pi)^2 \| \phi(X_{m+1}) \|^2$. Then, 
	\begin{align*}
		\mathbb{E}[\| \Lambda_{m+1} \|^2 | \Lambda_{m} ] - \| \Lambda_{m} \|^2  \leq& -2(\rho-\pi) \mathbb{E}\left[ \left|\Lambda_{m}^\top\phi(X_{m+1})\right| \middle| \Lambda_{m} \right] + \beta_2 \\
		\leq& -2(\rho-\pi)c_* \|\Lambda_{m} \|+ \beta_2,
	\end{align*} 
	which implies that
	\begin{align*}
		\mathbb{E}[\| \Lambda_{m+1} \|^2] - \mathbb{E}[\| \Lambda_{m} \|^2 ]  \leq -2(\rho-\pi)c_* \mathbb{E}[\|\Lambda_{m} \| ] + \beta_2,
	\end{align*}
	Let $m^* \in\{0,1,\cdots,m\}$ be the last one for which $-2(\rho-\pi)c_* \mathbb{E}[\|\Lambda_{m^*} \|]+\beta_2\geq 0$. Then, we have
	\begin{align}
		\mathbb{E}[\|\Lambda_{m^*} \|] \leq \frac{\beta_2}{2(\rho-\pi)c_*} \quad \text{and} \quad
		\mathbb{E}[\|\Lambda_{m} \|^2] \leq \mathbb{E}[\|\Lambda_{m^*} \|^2] + \beta_2. 
	\end{align}
	By H\"{o}lder inequality, we have
	\begin{align}
		\mathbb{E}[\|\Lambda_{m^*} \|^2] \leq \left(\mathbb{E}[\| \Lambda_{m^*} \|]\right)^{\frac{r-2}{r-1}} \left(\mathbb{E}[\| \Lambda_{m^*} \|^r]\right)^{\frac{1}{r-1}} \leq \left( \frac{\beta_2}{2(\rho-\pi)c_*} \right)^{\frac{r-2}{r-1}} \left(kH_{r}\right)^{\frac{1}{r-1}},
	\end{align}
	which implies that
	\begin{align}
		\mathbb{E}[\|\Lambda_{m} \|^2] \leq&\ \left( \frac{\beta_2}{2(\rho-\pi)c_*} \right)^{\frac{r-2}{r-1}} \left(mH_{r}\right)^{\frac{1}{r-1}} + \beta_2 = m^{\frac{1}{r-1}} \widetilde{H}_r + \beta_2,
	\end{align}
	where
	\begin{align*}
		\widetilde{H}_r = \left( \frac{\beta_2}{2(\rho-\pi)c_*} \right)^{\frac{r-2}{r-1}} H_{r}^{\frac{1}{r-1}} = O\left(q^{\frac{2r-3}{r-1}}\right).
	\end{align*}
	Let $r=\nu$, we obtain that for $m=1,\cdots,n$, 
	\begin{align*}
		\mathbb{E}[\|\Lambda_{m} \|^2] = O\left(m^{\frac{1}{\nu-1}} q^{\frac{2\nu-3}{\nu-1}} \right). 
	\end{align*}
	In addition, note that
	\begin{align*}
		\mathbb{E}[\| \Lambda_{m+1} \|^2 | \Lambda_{m} ] - \| \Lambda_{m} \|^2  \leq -2(\rho-\pi) \mathbb{E}\left[ \left|\Lambda_{m}^\top\phi(X_{m+1})\right| \middle| \Lambda_{m} \right] +  \beta_2 \leq \beta_2,
	\end{align*} 
	which implies that
	\begin{align*}
		\mathbb{E}[\| \Lambda_{m} \|^2] \leq m\beta_2.
	\end{align*}
	Therefore, we can conclude that
	\begin{align*}
		\mathbb{E}[\| \Lambda_{n} \|^2] \lesssim \min\left\{ n^{\frac{1}{\nu-1}} q^{\frac{2\nu-3}{\nu-1}}, n\beta_2\right\} = \begin{cases}
			O\left(n^{\frac{1}{\nu-1}} q^{\frac{2\nu-3}{\nu-1}}\right) &\text{if } q = o(n), \\
			O(nq) &\text{if } q = \Omega(n).
		\end{cases}
	\end{align*}
	Next, we turn to the lower bound for $\mathbb{E}[\| \Lambda_{n} \|^2]$. By Cauchy-Schwarz inequality, we have
	\begin{align}\label{eq_cs}
		\mathbb{E}\left[ \left|\Lambda_{m}^\top\phi(X_{m+1})\right| \middle| \Lambda_{m} \right] \leq \left(\mathbb{E}\left[ \left(\Lambda_{m}^\top\phi(X_{m+1})\right)^2  \middle| \Lambda_{m}  \right]\right)^{\frac{1}{2}}  \leq M_2^{1/2} \|\Lambda_{m} \|,
	\end{align}
	which implies that
	\begin{align}\label{eq_IECAR_lower} 
		\mathbb{E}[\| \Lambda_{m+1} \|^2 | \Lambda_{m} ] - \| \Lambda_{m} \|^2 &\geq   -2(\rho-\pi) \mathbb{E}\left[ \left|\Lambda_{m}^\top\phi(X_{m+1})\right| \middle| \Lambda_{m} \right] + (\pi \wedge (1-\pi))^2 \beta_2 \notag \\
		&\geq  -2(\rho-\pi)M_2^{1/2} \|\Lambda_{m}\| + (\pi \wedge (1-\pi))^2 \beta_2.
	\end{align}
	Let
	\begin{align*}
		A = \frac{(\pi \wedge (1-\pi))^4 \beta_2^2}{16(\rho-\pi)^2 M_2}.
	\end{align*}
	Then, if $\mathbb{E}[\| \Lambda_{m} \|^2] \leq A$, 
	\begin{align*}
		\mathbb{E}[\| \Lambda_{m+1} \|^2] - \mathbb{E}[\| \Lambda_{m} \|^2] &\geq -2(\rho-\pi)M_2^{1/2} \mathbb{E}[\|\Lambda_{m}\|] + (\pi \wedge (1-\pi))^2 \beta_2 \\
		&\geq -2(\rho-\pi)M_2^{1/2} \left(\mathbb{E}[\| \Lambda_{m} \|^2] \right)^{1/2} + (\pi \wedge (1-\pi))^2 \beta_2 \\
		&\geq \frac{1}{2} (\pi \wedge (1-\pi))^2 \beta_2.
	\end{align*}
	On the other hand, if $\mathbb{E}[\| \Lambda_{m} \|^2] \geq A$. Note that for sufficiently large $n$, we have $A \geq (\rho-\pi)^2 M_2$. Then,
	\begin{align*}
		\mathbb{E}[\| \Lambda_{m+1} \|^2] \geq A -2(\rho-\pi)M_2^{1/2}A^{1/2} + (\pi \wedge (1-\pi))^2 \beta_2  \geq A. 
	\end{align*}
	From above discussion, we can conclude that for sufficiently large $n$ and $m=0,\cdots,n-1$, 
	\begin{align*}
		\mathbb{E}[\| \Lambda_{m+1} \|^2] \geq \begin{cases}
			\mathbb{E}[\| \Lambda_{m} \|^2] + \frac{1}{2} (\pi \wedge (1-\pi))^2 \beta_2  & \text{if } \mathbb{E}[\|\Lambda_m \|^2] \leq A, \\
			A & \text{if } \mathbb{E}[\|\Lambda_m \|^2] \geq A.
		\end{cases}
	\end{align*}
	Therefore, if $\mathbb{E}[\|\Lambda_n\|^2] \leq A$, then for $1\leq m \leq n$, $\mathbb{E}[\|\Lambda_m \|^2] \leq A$. Thus, 
	\begin{align*}
		\mathbb{E}[\| \Lambda_n \|^2] \geq \mathbb{E}[\| \Lambda_{n-1} \|^2] + \frac{1}{2}(\pi \wedge (1-\pi))^2 \beta_2 \geq \cdots \geq \frac{1}{2}(\pi \wedge (1-\pi))^2 n\beta_2,
	\end{align*}
	which implies that
	\begin{align*}
		\mathbb{E}[\| \Lambda_{n} \|^2] \geq \min\left\{A, \frac{1}{2}(\pi \wedge (1-\pi))^2 n\beta_2 \right\} = \begin{cases}
			\Omega(q^2) &  \text{if } q = o(n), \\
			\Omega(nq) &  \text{if } q = \Omega(n). \\
		\end{cases}
	\end{align*}
\end{proof}
\section{Proofs of Theoretical Properties for IR-CAR Procedures}
\subsection{Theoretical Properties of Imbalance Measure}
\begin{proof}[Proof of Theorem \ref{Thm_RCAR}:]
	Note that for $1\leq m \leq n-1$, $\Lambda_{m+1} = \Lambda_m + (T_{m+1}-\pi)\phi(X_{m+1})$. Then, by (\ref{proof_element_eq}), 
	\begin{align*}
		\| \Lambda_{m+1} \|^r - \| \Lambda_{m} \|^r  \leq &\ r (T_{m+1}-\pi) \Lambda_{m}^\top \phi(X_{m+1}) \| \Lambda_{m} \|^{r-2} \\
		&\ + c_{r} \left( \|\phi(X_{m+1}) \|^{r} + \|\phi(X_{m+1}) \|^2 \| \Lambda_{m} \|^{r-2} \right),
	\end{align*} 
	which further implies that
	\begin{align*}
		&\ \mathbb{E}\left[\|\Lambda_{m+1}\|^r\middle| \mathcal{F}_{m},X_{m+1} \right] - \| \Lambda_{m} \|^r \\
		\leq &\  r \left( \ell_{m+1} - \pi  \right) \Lambda_{m}^\top\phi(X_{m+1}) \| \Lambda_{m} \|^{r-2}  + c_{r} \left( \|\phi(X_{m+1}) \|^{r} + \|\phi(X_{m+1}) \|^2 \| \Lambda_{m} \|^{r-2} \right)  \\ 
		= &\ -r \left| \ell_{m+1} - \pi \right|\cdot \left|  \Lambda_{m}^\top\phi(X_{m+1})\right| \cdot  \| \Lambda_{m} \|^{r-2}  + c_{r} \left( \|\phi(X_{m+1}) \|^{r} + \|\phi(X_{m+1}) \|^2 \| \Lambda_{m} \|^{r-2} \right),
	\end{align*}
	where
	\begin{align*}
		\ell_{m+1} = \ell\left( \frac{\Lambda_{m}^\top\phi(X_{m+1})}{m^{\gamma}}  \right). 
	\end{align*}
	First, we prove that there exists a constant $c_0>0$ such that
	\begin{align}\label{Proof1_eq1}
		\mathbb{E}\left[ \left| \ell_{m+1} - \pi \right|\cdot \left|  \Lambda_{m}^\top\phi(X_{m+1})\right| \middle| \mathcal{F}_{m}  \right] \geq c_0 \left(\frac{\|\Lambda_m \|^2 }{m^{\gamma}} \wedge \|\Lambda_m \| \right). 
	\end{align}
	Note that there exists a constant $\underline{\lambda}_{\ell} >0$ such that 
	\begin{align*}
		\left| \ell_{m+1} - \pi \right| \geq \underline{\lambda}_{\ell} \left(\frac{\left| \Lambda_{m}^\top \phi(X_{m+1}) \right| }{m^{\gamma}}\wedge 1 \right),
	\end{align*}
	which implies that 
	\begin{align*}
		\mathbb{E}\left[ \left| \ell_{m+1} - \pi \right|\cdot \left|  \Lambda_{m}^\top\phi(X_{m+1})\right| \middle| \mathcal{F}_{m}  \right] \geq \underline{\lambda}_{\ell} \mathbb{E}\left[\frac{\left|\Lambda_{m}^\top \phi(X_{m+1})\right|^2}{m^{\gamma}}  \wedge \left|\Lambda_{m}^\top \phi(X_{m+1})\right| \middle| \Lambda_m \right].
	\end{align*}
	By Lemma \ref{Lemma_qusi_smallball}, we have 
	\begin{align*}
		&\ \mathbb{E}\left[\frac{\left|\Lambda_{m}^\top \phi(X_{m+1})\right|^2}{m^{\gamma}}  \wedge \left|\Lambda_{m}^\top \phi(X_{m+1})\right| \middle| \Lambda_m \right] \\
		\geq&\  \mathbb{E}\left[\left(\frac{\left|\Lambda_{m}^\top \phi(X_{m+1})\right|^2}{m^{\gamma}}  \wedge \left|\Lambda_{m}^\top \phi(X_{m+1})\right|\right) \mathbb{I}\left\{ \left|\Lambda_{m}^\top \phi(X_{m+1})\right| \geq \kappa \|\Lambda_m \|\right\} \middle| \Lambda_m \right]  \\
		\geq&\ \left(\kappa^2 \frac{\|\Lambda_m \|^2}{m^{\gamma}} \wedge  \kappa \|\Lambda_m \|\right) \mathbb{P}\left(\left|\Lambda_{m}^\top \phi(X_{m+1})\right|
		\geq \kappa \|\Lambda_m \|\middle| \Lambda_m \right) \\
		\geq&\ \delta \left(\kappa^2 \frac{\|\Lambda_m \|^2}{m^{\gamma}} \wedge  \kappa \|\Lambda_m \|\right) \geq \delta \left(\kappa^2 \wedge \kappa \right) \left( \frac{\|\Lambda_m \|^2}{m^{\gamma}} \wedge \|\Lambda_m \|\right). 
	\end{align*}
	Then, let $c_0 = \delta \left(\kappa^2 \wedge \kappa \right)\underline{\lambda}_{\ell}$, we obtain (\ref{Proof1_eq1}).
	Therefore,
	\begin{align*}
		\mathbb{E}\left[\|\Lambda_{m+1}\|^r\middle| \Lambda_{m} \right] - \| \Lambda_{m} \|^r
		&\leq -rc_0  \| \Lambda_{m} \|^{r-1}\left(\frac{\|\Lambda_m \|}{m^{\gamma}} \wedge 1  \right)   + c_r\left(\beta_r + \beta_2 \| \Lambda_{m}\|^{r-2}\right). 
	\end{align*} 
	We define
	\begin{align*}
		f_m(x) = -rc_0 x^{r-1} \left(\frac{x}{m^{\gamma}} \wedge 1 \right) + c_r\left(\beta_r + \beta_2 x^{r-2}\right),\quad x\geq 0, r>2. 
	\end{align*}
	Then, we obtain that
	\begin{align*}
		f_m^\prime (x) = 
		\begin{cases}
			-m^{-\gamma}r^2 c_0 x^{r-1} + c_r\beta_2(r-2)x^{r-3} \quad &(0\leq x\leq m^{\gamma}), \\
			-r c_0(r-1) x^{r-2} + c_r\beta_2(r-2)x^{r-3} \quad &(x > m^{\gamma}),
		\end{cases}
	\end{align*}
	which implies that the maximizer of $f_m(x)$ can only occur at
	\begin{align*}
		x_1(m) = \sqrt{\frac{(r-2) c_r\beta_2 }{r^2c_0}} m^{\frac{\gamma}{2}} \asymp m^{\frac{\gamma}{2}}q^{\frac{1}{2}} \quad \text{or}\quad  x_2(m) = \frac{(r-2) c_r\beta_2 }{r(r-1)c_0} \asymp q.
	\end{align*}
	If $q=o(n^\gamma)$, there exists a sufficient large $n^*$ such that $x_1(m) \leq m^\gamma$ and $x_2(m) < m^\gamma$ for every $m\geq n^*$. Hence, $f_m(x)$ is increasing on $(0,x_1(m))$ and decreasing on $(x_1(m),+\infty)$, which implies that
	\begin{align*}
		\mathbb{E}\left[\|\Lambda_{m+1}\|^r\middle| \Lambda_{m} \right] - \| \Lambda_{m} \|^r \leq f_m(x_1(m)) = \frac{2c_r\beta_2}{r} \left(\frac{(r-2) c_r\beta_2 }{r^2c_0}\right)^{\frac{r-2}{2}} m^{\frac{\gamma r}{2}} + c_r\beta_r. 
	\end{align*}
	Therefore, for $m\geq n^*$,
	\begin{align*}
		\mathbb{E}\left[\|\Lambda_{m}\|^r \right] - \mathbb{E}\left[\|\Lambda_{n^*}\|^r \right] \leq \frac{2c_r\beta_2}{r} \left(\frac{(r-2) c_r\beta_2 }{r^2c_0}\right)^{\frac{r-2}{2}} \sum_{k=n^*}^{m-1} k^{\frac{\gamma r}{2}} + (m-n^*) c_r\beta_r \asymp m^{\frac{\gamma r}{2}-\gamma+1} q^{\frac{r}{2}}.
	\end{align*}
	Hence, for $r>2$ and every $m$, there exists a constant $C_r>0$ such that if $q=o(n^{\gamma})$,
	\begin{align}\label{proof_RCAR_Er1}
		\mathbb{E}\left[\|\Lambda_{m}\|^r \right] \leq C_r m^{\frac{\gamma r}{2}-\gamma+1} q^{\frac{r}{2}}, \quad 1\leq m \leq n. 
	\end{align}
	If $n^{\gamma}=o(q)$, there exists sufficient large $n'$ such that $x_1(m) > m^\gamma$ and $x_2(m) \geq m^\gamma$ for every $m\geq n'$. Hence, $f_m(x)$ is increasing on $(0,x_2(m))$ and decreasing on $(x_2(m),+\infty)$, which implies that
	\begin{align*}
		\mathbb{E}\left[\|\Lambda_{m+1}\|^r\middle| \Lambda_{m} \right] - \| \Lambda_{m} \|^r \leq f_m(x_2(m)) = \frac{c_r\beta_2}{r-1}\left(\frac{(r-2)c_r\beta_2}{r(r-1)c_0} \right)^{r-2} + c_r\beta_r \asymp q^{r-1}.
	\end{align*}
	Therefore, for $m\geq n'$,
	\begin{align*}
		\mathbb{E}\left[\|\Lambda_{m}\|^r \right] - \mathbb{E}\left[\|\Lambda_{n'}\|^r \right] \leq (m-n')\left[ \frac{c_r\beta_2}{r-1}\left(\frac{(r-2)c_r\beta_2}{r(r-1)c_0} \right)^{r-2} + c_r\beta_r \right] \asymp mq^{r-1}.
	\end{align*}
	Hence, for any $r>2$ there exists a constant $C_r>0$ such that if $n^{\gamma}=o(q)$,
	\begin{align}\label{proof_RCAR_Er2}
		\mathbb{E}\left[\|\Lambda_{m}\|^r \right] \leq C_r mq^{r-1}, \quad 1\leq m \leq n.
	\end{align}
	If $n^{\gamma}\asymp q$, note that $f_k(x_1(k))>0$ and $f_k(x_2(k)) >0$ hold for every $1\leq k \leq n$. Then, 
	\begin{align*}
		\mathbb{E}[\| \Lambda_{m} \|^r] \leq \sum_{k=1}^{m-1} \left(f_k(x_1(k)) + f_k(x_2(k)) \right) \asymp m q^{r-1}, 
	\end{align*}
	which implies that (\ref{proof_RCAR_Er1}) and (\ref{proof_RCAR_Er2}) also hold when $n^{\gamma}\asymp q$. \\
	Let $E_m = \{\|\Lambda_m \| \leq m^{\gamma} \}$ and $Q_{m,r} = \mathbb{E}\left[\| \Lambda_{m} \|^{r-1}\left(m^{-\gamma}\|\Lambda_m \| \wedge 1  \right)  \right]$. Then, we obtain that
	\begin{align}
		\mathbb{E}\left[\|\Lambda_m \|^r \mathbb{I}_{E_m}  \right] \leq m^{\gamma}Q_{m,r} \quad \text{and} \quad \mathbb{E}\left[\|\Lambda_m \|^{r-1} \mathbb{I}_{E_m^c}  \right] \leq Q_{m,r}. 
	\end{align}
	Let $m^*\in\{0,1,\cdots,m\}$ denotes the last one for which 
	\begin{align*}
		-rc_0Q_{m^*,r} + c_r\left(\beta_r + \beta_2 \mathbb{E}\left[\| \Lambda_{m^*}  \|^{r-2}   \right]   \right) \geq 0.
	\end{align*}
	Then, we conclude that 
	\begin{align}
		Q_{m^*,r} \leq&\ \frac{c_r}{rc_0} \left( \beta_r + \beta_2  \mathbb{E}\left[\| \Lambda_{m^*}  \|^{r-2}   \right] \right), \label{proof_RCAR_eq1} \\
		\mathbb{E}\left[\|\Lambda_m \|^r \right] \leq&\ \mathbb{E}\left[\|\Lambda_{m^*+1} \|^r \right] \leq \mathbb{E}\left[\|\Lambda_{m^*} \|^r \right] + c_r\left(\beta_r + \beta_2 \mathbb{E}\left[\| \Lambda_{m^*}  \|^{r-2}   \right]   \label{proof_RCAR_eq2}\right). 
	\end{align}
	By H\"{o}lder inequality, for any $s$, $s'>1$ such that $\frac{1}{s} + \frac{1}{s'} =1$, we have 
	\begin{align}\label{proof_RCAR_eq3}
		&\ \mathbb{E}\left[\|\Lambda_{m^*} \|^r \mathbb{I}_{E_{m^*}^c}  \right] = \mathbb{E}\left[\|\Lambda_{m^*} \|^\frac{r-1}{s} \|\Lambda_{m^*} \|^\frac{r-1+s'}{s'} \mathbb{I}_{E_{m^*}^c}  \right]\notag  \\
		\leq &\ \left(\mathbb{E}\left[\|\Lambda_m \|^{r-1} \mathbb{I}_{E_{m^*}^c}  \right] \right)^{1/s} \left(\mathbb{E}\left[\|\Lambda_{m^*} \|^{r-1+s'} \right]\right)^{1/s'} \leq Q_{m^*,r}^{1/s} \left(\mathbb{E}\left[\|\Lambda_{m^*} \|^{r-1+s'} \right]\right)^{1/s'}. 
	\end{align}
	Combining (\ref{proof_RCAR_eq1}), (\ref{proof_RCAR_eq2}), (\ref{proof_RCAR_eq3}) and letting $r-1+s' = \nu$, that is, $s' = \nu+1-r$ and $s=(\nu+1-r)/(\nu-r)$, we obtain that
	\begin{align}\label{proof_RCAR_E^r}		
		\mathbb{E}\left[\|\Lambda_m \|^r \right] \leq&\ c_r\left(\frac{m^{\gamma} }{rc_0} + 1\right)\left(\beta_r + \beta_2 \mathbb{E}\left[\| \Lambda_{m^*}  \|^{r-2}   \right]  \right) \notag \\
		&\ + \left(\frac{c_r}{rc_0}\right)^{\frac{\nu-r}{\nu+1-r}} \left(\beta_r + \beta_2 \mathbb{E}\left[\| \Lambda_{m^*}  \|^{r-2}   \right]  \right)^{\frac{\nu-r}{\nu+1-r}}\left(\mathbb{E}\left[\|\Lambda_{m^*} \|^{\nu} \right]\right)^{\frac{1}{\nu+1-r}}. 
	\end{align}
	Let $r=2$, we have 
	\begin{align*}
		\mathbb{E}\left[\|\Lambda_m \|^2 \right] \leq&\ 2c_2\beta_2 \left(\frac{m^{\gamma} }{2c_0} + 1\right) + \left(\frac{c_r\beta_2}{c_0}\right)^{\frac{\nu-2}{\nu-1}}\left(\mathbb{E}\left[\|\Lambda_{m^*} \|^{\nu} \right]\right)^{\frac{1}{\nu-1}}. 
	\end{align*}
	By (\ref{proof_RCAR_Er1}) and (\ref{proof_RCAR_Er2}), we obtain that 
	\begin{align}\label{RCAR_L2bound}
		\mathbb{E}\left[ \|\Lambda_n \|^2 \right] = 
		\begin{cases}
			O\left(n^{\gamma}q + n^{\frac{\gamma\nu -2\gamma + 2}{2(\nu-1)}} q^{\frac{3\nu-4}{2(\nu-1)}} \right) \quad &\text{if } q=o(n^\gamma), \\
			O\left(n^{\gamma}q + n^{\frac{1}{\nu-1}} q^{\frac{2\nu-3}{\nu-1}} \right) = O\left(n^{\frac{1}{\nu-1}} q^{\frac{2\nu-3}{\nu-1}} \right) \quad  &\text{if }  q = \Omega(n^\gamma).
		\end{cases}
	\end{align}
	In addition, note that
	\begin{align*}
		&\ \mathbb{E}\left[\|\Lambda_{m+1}\|^2\middle| \mathcal{F}_{m},X_{m+1} \right] - \| \Lambda_{m} \|^2 \\
		= &\  2 \left( \ell_{m+1} - \pi  \right) \Lambda_{m}^\top\phi(X_{m+1}) + \mathbb{E}\left[(T_{m+1}-\pi)^2\middle| \mathcal{F}_{m},X_{m+1}\right]  \|\phi(X_{m+1}) \|^2 \\
		\leq &\ - 2 \left|\ell_{m+1} - \pi \right| \left|\Lambda_{m}^\top\phi(X_{m+1}) \right| +  \|\phi(X_{m+1}) \|^2 \leq  \|\phi(X_{m+1}) \|^2, 
	\end{align*}
	which implies that 
	\begin{align}\label{Proof_RCAR_largedim_1}
		\mathbb{E}\left[\|\Lambda_{m}\|^2\right] \leq \mathbb{E}\left[\|\Lambda_{m-1}\|^2 \right] +  \beta_2 \leq \cdots \leq m  \beta_2. 
	\end{align}
	Therefore, we obtain that $\mathbb{E}[\|\Lambda_{n} \|^2] = O(nq)$ if $q=\Omega(n)$. \\
	Next, we turn to the lower bound for $\mathbb{E}[\|\Lambda_{n} \|^2]$. Note that
	\begin{align*}
		&\ \mathbb{E}\left[\|\Lambda_{m+1}\|^2\middle| \mathcal{F}_{m},X_{m+1} \right] - \| \Lambda_{m} \|^2 \\
		= &\  2 \left( \ell_{m+1} - \pi  \right) \Lambda_{m}^\top\phi(X_{m+1}) + \mathbb{E}\left[(T_{m+1}-\pi)^2\middle| \mathcal{F}_{m},X_{m+1}\right]  \|\phi(X_{m+1}) \|^2 \\
		\geq & - 2 \left|\ell_{m+1} - \pi \right| \left|\Lambda_{m}^\top\phi(X_{m+1}) \right| + \left(\pi\wedge (1-\pi)\right)^2 \|\phi(X_{m+1}) \|^2 \\
		\geq & -2\overline{\lambda}_{\ell,1} \left(m^{-\gamma}\left|\Lambda_{m}^\top\phi(X_{m+1}) \right|^2 \wedge \left|\Lambda_{m}^\top\phi(X_{m+1}) \right| \right)+ \left(\pi\wedge (1-\pi)\right)^2 \|\phi(X_{m+1}) \|^2,
	\end{align*}
	where the last inequality follows from the fact that $|\ell(x) -\pi | \leq \overline{\lambda}_{\ell,1}(|x|\wedge 1)$ for some constant $\overline{\lambda}_{\ell,1}>0$.\\
	Besides, by (\ref{eq_cs}), we have 
	\begin{align*}
		&\ \mathbb{E}\left[ \left(m^{-\gamma}\left|\Lambda_{m}^\top\phi(X_{m+1}) \right|^2\right) \wedge  \left|\Lambda_{m}^\top\phi(X_{m+1}) \right| \middle| \Lambda_{m} \right] \\
		\leq&\ \left(m^{-\gamma}\mathbb{E}\left[\left|\Lambda_{m}^\top\phi(X_{m+1}) \right|^2\middle| \Lambda_{m} \right]\right) \wedge \mathbb{E}\left[ \left|\Lambda_{m}^\top\phi(X_{m+1}) \right| \middle|\Lambda_{m} \right] \\
		\leq&\ \left(M_2m^{-\gamma} \left\|\Lambda_{m} \right\|^2\right) \wedge \left( M_2^{1/2} \left\|\Lambda_{m} \right\|\right).
	\end{align*}
	Therefore, 
	\begin{align*}
		&\ \mathbb{E}\left[\|\Lambda_{m+1}\|^2 \right] - \mathbb{E}\left[\| \Lambda_{m} \|^2\right] \\
		\geq&\ -2 \overline{\lambda}_{\ell,1}\left( \left\{M_2 m^{-\gamma} \mathbb{E}\left[\| \Lambda_{m} \|^2\right]\right\} \wedge \left\{ M_2^{1/2}\left(\mathbb{E}\left[\| \Lambda_{m} \|^2\right]\right)^{1/2} \right\} \right)  + \left(\pi\wedge (1-\pi)\right)^2\beta_2.
	\end{align*}
	On the one hand, 
	\begin{align*}
		\mathbb{E}\left[\|\Lambda_{m+1}\|^2 \right] - \mathbb{E}\left[\| \Lambda_{m} \|^2\right] \geq -2 \overline{\lambda}_{\ell,1}M_2^{1/2}\left(\mathbb{E}\left[\| \Lambda_{m} \|^2\right]\right)^{1/2} + \left(\pi\wedge (1-\pi)\right)^2\beta_2,
	\end{align*}
	which is of a similar form to the inequality in (\ref{eq_IECAR_lower}). Then, by applying the same technique used in the proof of Theorem \ref{Thm_traditional_CAR}, we obtain
	\begin{align*}
		\mathbb{E}[\| \Lambda_{n} \|^2] = \begin{cases}
			\Omega(q^2) &  \text{if } q = o(n), \\
			\Omega(nq) &  \text{if } q = \Omega(n). \\
		\end{cases}
	\end{align*}
	On the other hand, 
	\begin{align*}
		\mathbb{E}\left[\|\Lambda_{m+1}\|^2 \right]  \geq \left(1-2 \overline{\lambda}_{\ell,1}M_2 m^{-\gamma}\right) \mathbb{E}\left[\| \Lambda_{m} \|^2\right] + \left(\pi\wedge (1-\pi)\right)^2\beta_2. 
	\end{align*}
	Note that $\displaystyle \lim_{n\to \infty} [(n+1)^\gamma - n^{\gamma}] = 0$. Thus, there exists $n_0$ such that for $m\geq n_0$, $(m+1)^\gamma - m^{\gamma} \leq 2 \overline{\lambda}_{\ell,1}M_2$. Let $A^* = \min\{ (\pi\wedge (1-\pi))^2/(4\overline{\lambda}_{\ell,1}M_2), \mathbb{E}[\|\Lambda_{n_0}\|^2]/(n_0^{\gamma}\beta_2) \}$. Then, we claim that for $m\geq n_0$, $\mathbb{E}[\|\Lambda_{m} \|^2] \geq A^* m^{\gamma}\beta_2$. We prove this claim by induction. \\
	For \(m=n_0\), the result is immediate. Suppose that the claim holds for \(m\). Then, for \(m+1\), we have
	\begin{align*}
		\mathbb{E}\left[\|\Lambda_{m+1}\|^2 \right]  &\geq \left(1-2 \overline{\lambda}_{\ell,1}M_2 m^{-\gamma}\right) A^* m^{\gamma}\beta_2 + \left(\pi\wedge (1-\pi)\right)^2\beta_2 \\
		&\geq A^* m^{\gamma}\beta_2 + \left[\left(\pi\wedge (1-\pi)\right)^2 - 2 \overline{\lambda}_{\ell,1}M_2 A^* \right] \beta_2 \geq A^* (m+1)^{\gamma}\beta_2. 
	\end{align*}
	Thus, the claim holds for \(m+1\), completing the induction. Therefore, $\mathbb{E}[\|\Lambda_n \|^2] = \Omega(n^{\gamma}q)$. \\
	Based on the above discussion, we conclude that
	\begin{align*}
		\mathbb{E}[\|\Lambda_n  \|^2] = \begin{cases}
			\Omega(n^{\gamma}q) & \text{if } q = O(n^{\gamma}), \\
			\Omega(q^2) & \text{if } q = \Omega(n^{\gamma}) \text{ and } q =O(n),  \\
			\Omega(nq) & \text{if } q = \Omega(n).
		\end{cases}
	\end{align*}
\end{proof}

\subsection{Theoretical Properties of Additional Covariates}
\begin{lemma}\label{Lemma1}
	Suppose that Assumptions \ref{assump_iid}, \ref{assump_boundSigma} hold and Assumption \ref{assump_phi1} holds for all $\nu\geq 4$. Then, under IR-CAR, for $r \geq 1$, $\mathbb{E}[\|\Lambda_n \|^r]= O(n^{\frac{\gamma r}{2}}q^{\frac{r}{2}})$ if $q = O(n^{\gamma-\varepsilon_0})$ for some $\varepsilon_0>0$. 
\end{lemma}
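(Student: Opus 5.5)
By Jensen/Lyapunov's inequality it suffices to prove the bound for arbitrarily large $r$. We would fix a large $r$ and use the moment recursion already derived in the proof of Theorem \ref{Thm_RCAR}:
\begin{align*}
\mathbb{E}\left[\|\Lambda_m\|^r\right] \leq c_r\Big(\frac{m^{\gamma}}{rc_0}+1\Big)\Big(\beta_r+\beta_2\mathbb{E}\big[\|\Lambda_{m^*}\|^{r-2}\big]\Big) + C\Big(\beta_r+\beta_2\mathbb{E}\big[\|\Lambda_{m^*}\|^{r-2}\big]\Big)^{\frac{\nu-r}{\nu+1-r}}\Big(\mathbb{E}\big[\|\Lambda_{m^*}\|^{\nu}\big]\Big)^{\frac{1}{\nu+1-r}},
\end{align*}
which holds for any $\nu>r$ because Assumption \ref{assump_phi1} is now assumed for every $\nu\ge4$. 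The plan is an induction on $r$ in steps of $2$, which is possible because the recursion links $\mathbb{E}\|\Lambda\|^r$ to $\mathbb{E}\|\Lambda\|^{r-2}$. The inductive claim is $\mathbb{E}[\|\Lambda_m\|^{s}]\le C_s m^{\gamma s/2}q^{s/2}$, uniformly in $m\le n$, for all smaller $s$. The base case $s\le 2$ is Theorem \ref{Thm_RCAR}(i) in its sharpened form, which uses $q=O(n^{\gamma-\varepsilon_0})$. In the base-case argument we work with $\mathbb{E}\|\Lambda_m\|^2$ for a general $m$ rather than only at $m=n$. Since $m^*\le m$, the induction hypothesis gives $\beta_2\mathbb{E}\|\Lambda_{m^*}\|^{r-2}\lesssim q\,m^{\gamma(r-2)/2}q^{(r-2)/2}$. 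Also $\beta_r\asymp q^{r/2}$, and this is dominated by $q^{r/2}m^{\gamma(r-2)/2}$. Hence the first term is $O(m^{\gamma r/2}q^{r/2})$, exactly the target order.

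The main obstacle is the second, Hölder, term. There we plug in the crude bound (\ref{proof_RCAR_Er1}), $\mathbb{E}\|\Lambda_{m^*}\|^\nu\le C_\nu m^{\gamma\nu/2-\gamma+1}q^{\nu/2}$, which is valid since $q=o(n^\gamma)$. Writing $t=1/(\nu+1-r)$, the second term is of order
\begin{align*}
\big(m^{\gamma(r-2)/2}q^{r/2}\big)^{1-t}\big(m^{\gamma\nu/2-\gamma+1}q^{\nu/2}\big)^{t}.
\end{align*}
Relative to $m^{\gamma r/2}q^{r/2}$, the excess $m$-exponent is $t(1-\gamma)+O(t)\cdot\gamma$-terms, and the excess $q$-exponent is $t(\nu-r)/2$. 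Both are of size $O(1)$ times $t$, so choosing $\nu$ huge makes the excess at most $Cn^{\eta}$ for arbitrarily small $\eta$.

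To remove this $n^{\eta}$ loss entirely rather than just make it small, we would bootstrap. The plan is to first obtain $\mathbb{E}\|\Lambda_m\|^\nu\lesssim m^{\gamma\nu/2+\eta}q^{\nu/2}$ for all $\nu$ with arbitrarily small $\eta$, and then re-insert this improved $\nu$-moment bound into the Hölder term. This gives an excess of order $t\eta$, and iterating drives it to zero. Alternatively, a single re-insertion with $\nu$ large enough that $t\eta$ is absorbed may suffice. For that absorption we use $q\le n^{\gamma-\varepsilon_0}$, which gives the slack factor $(m^\gamma/q)^{\cdot}\ge n^{\varepsilon_0\cdot}$ that beats the residual power of $m$ in the regime $m\ge n^*$. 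The remaining small-$m$ range, $m<n^*$ where $n^*$ is a fixed multiple that depends only on constants, is handled by the trivial bound $\mathbb{E}\|\Lambda_m\|^r\le m^{r}\beta_r$.

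We expect the bookkeeping of the exponent at this absorption step to be the delicate point. Specifically, one has to verify that the gain $q\le m^{\gamma}n^{-\varepsilon_0}$ in the $q$-excess controls the $m$-excess once $\nu$ is chosen large relative to $r/\varepsilon_0$. If this verification fails, we would instead fall back on the repeated bootstrap described above.
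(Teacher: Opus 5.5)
Your route is the paper's: induct on $r$ through the recursion you quote, bound $\beta_r+\beta_2\mathbb{E}\|\Lambda_{m^*}\|^{r-2}$ by the inductive rate, bound $\mathbb{E}\|\Lambda_{m^*}\|^\nu$ by the crude estimate (\ref{proof_RCAR_Er1}), and take $\nu$ large. The step you flag as delicate closes immediately and needs no bootstrap. Set $A=n^{\gamma(r-2)/2}q^{r/2}$, $B=n^{\gamma\nu/2-\gamma+1}q^{\nu/2}$ and $t=1/(\nu+1-r)$. Then the H\"older term relative to the target is
\[
\frac{A^{1-t}B^{t}}{n^{\gamma r/2}q^{r/2}}=\Big(\frac{q}{n^{\gamma}}\Big)^{\frac{\nu-r}{2(\nu+1-r)}}n^{\frac{1-\gamma}{\nu+1-r}}\lesssim n^{\frac{1-\gamma-\varepsilon_0(\nu-r)/2}{\nu+1-r}},
\]
which is $O(1)$ once $\nu\ge r+2(1-\gamma)/\varepsilon_0$. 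This is what the paper means by ``for sufficiently large $n$ and $\nu$''. Your exponent count is off here. The $q$-excess $t(\nu-r)/2$ tends to $1/2$, not $O(t)$, and it is paid for by the factor $n^{-\gamma}$ in $A$; only the net loss is $n^{t(1-\gamma)}$. Also, the $\varepsilon_0$-slack is essential, not an alternative. Iterating without it turns a loss $n^{\eta}$ into $n^{t\eta}$, which never reaches $n^{0}$ in finitely many steps. So your fallback alone yields only $O(n^{\gamma r/2+\eta}q^{r/2})$ for every $\eta>0$.

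The actual error is your inductive hypothesis $\mathbb{E}\|\Lambda_m\|^s\le C_s m^{\gamma s/2}q^{s/2}$ uniformly in $m\le n$. Since $q=q_n$ is fixed while $m$ ranges below $n$, this is false for $m\lesssim q^{1/\gamma}$. The lower-bound argument in the proof of Theorem \ref{Thm_RCAR}, run up to time $m$, gives $\mathbb{E}\|\Lambda_m\|^2\gtrsim\min\{q^2,mq\}$; for example, at $m=q$ this is $\gtrsim q^2\gg q^{1+\gamma}$. That bad range grows with $n$, so it is not a fixed initial segment. On it the trivial bound $m^r\beta_r$ does not give the target order, and the slack $m^\gamma/q\ge n^{\varepsilon_0}$ you invoke holds only for $m$ close to $n$. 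The repair is what the paper does: prove $\max_{m\le n}\mathbb{E}\|\Lambda_m\|^s\lesssim n^{\gamma s/2}q^{s/2}$, bounding every $m$- and $m^*$-dependent quantity by its value at $n$. The crude bound also holds in this form: summing the drift bound gives $\mathbb{E}\|\Lambda_k\|^\nu\lesssim k^{\gamma(\nu-2)/2+1}q^{\nu/2}+kq^{\nu-1}\lesssim n^{\gamma\nu/2-\gamma+1}q^{\nu/2}$ for all $k\le n$ when $q\le n^\gamma$. With this, the slack $q/n^\gamma\lesssim n^{-\varepsilon_0}$ is available uniformly, and the small-$m$ case disappears.
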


\begin{proof}[Proof of Lemma \ref{Lemma1}: ]
It follows from Equation (\ref{RCAR_L2bound}) that $\mathbb{E}[\|\Lambda_n\|^2] = O(n^\gamma q)$. Then, for $1\leq r < 2$, 
\begin{align*}
\mathbb{E}\left[\|\Lambda_n \|^r \right] \leq \left(\mathbb{E}\left[\|\Lambda_n \|^2 \right]\right)^{\frac{r}{2}} = O(n^{\frac{\gamma r}{2}}q^{\frac{r}{2}}). 
\end{align*}
For $2\leq r \leq 3$, by (\ref{proof_RCAR_E^r}) we obtain that 
\begin{align*}
\mathbb{E}\left[\|\Lambda_n \|^r \right] \lesssim &\ n^{\gamma}\left(\beta_r + \beta_2 n^{\frac{\gamma (r-2)}{2}}q^{\frac{r-2}{2}} \right) + \left(\beta_r + \beta_2 n^{\frac{\gamma (r-2)}{2}}q^{\frac{r-2}{2}} \right)^{\frac{\nu-r}{\nu+1-r}}\left(n^{\frac{\gamma \nu}{2}-\gamma+1} q^{\frac{\nu}{2}} \right)^{\frac{1}{\nu+1-r}}\\
\lesssim &\ n^{\frac{\gamma r}{2}} q^{\frac{r}{2}} + \left(n^{\frac{\gamma (r-2)}{2}}q^{\frac{r}{2}} \right) \left(n^{\frac{\gamma(\nu-r)}{2}+1} q^{\frac{\nu-r}{2}}\right)^{\frac{1}{\nu+1-r}} \lesssim n^{\frac{\gamma r}{2}} q^{\frac{r}{2}},
\end{align*}
for sufficiently large $n$ and $\nu$. Hence, by induction, $\mathbb{E}[\|\Lambda_n \|^r]= O(n^{\frac{\gamma r}{2}}q^{\frac{r}{2}})$ for any $r\geq 1$, if $q=O(n^{\gamma-\varepsilon_0})$. 
\end{proof}

\begin{lemma}\label{Lemma_sequence}
	Let $\{x_m\}_{m\geq 1}$ be a non-negative sequence and let $h_j(q)$, $j=1,\cdots,J$ be non-negative functions of $q$. Suppose that there exist constants \(d>0\) and \(m_0\ge1\) such that $1-dm_0^{-\gamma} >0$ and, for all \(m\ge m_0\), 
	\begin{align*}
		x_{m+1} \leq (1-dm^{-\gamma})x_m + \sum_{j=1}^J b_{m,j} h_j(q),
	\end{align*}
	where $0<\gamma<1$, $b_{m,j} = O(m^{a_j})$, $j=1,\cdots, J$ for some constants \(a_j\) satisfying $a_j + \gamma\geq 0$, $j=1,\cdots, J$. Then, for $m\geq m_0$, 
	\begin{align*}
		x_m	=O\left(x_{m_0} + \sum_{j=1}^Jh_j(q)m^{a_j+\gamma} \right).
	\end{align*}
\end{lemma}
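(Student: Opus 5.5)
Since every term on the right is non-negative, the recursion is linear, and the claim is only an $O(\cdot)$ bound, I would handle each forcing term separately by comparison with an explicit supersolution. Concretely, I would look for constants $K_0,K_1,\dots,K_J$, independent of $m$ and $q$, such that
\begin{align*}
y_m = K_0 x_{m_0} + \sum_{j=1}^J K_j h_j(q) m^{a_j+\gamma}
\end{align*}
satisfies $y_{m_0}\ge x_{m_0}$ and $y_{m+1}\ge (1-dm^{-\gamma})y_m + \sum_j b_{m,j}h_j(q)$ for all $m\ge m_0$. An induction on $m$ then gives $x_m\le y_m$. The induction uses that $1-dm^{-\gamma}\ge 1-dm_0^{-\gamma}>0$ for $m\ge m_0$, so the map $x\mapsto(1-dm^{-\gamma})x+c$ is nondecreasing. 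This is exactly the conclusion.

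\textbf{The initial term.} Take $K_0=1$. The sequence $x_{m_0}$ is itself a supersolution of the homogeneous recursion, because $(1-dm^{-\gamma})x_{m_0}\le x_{m_0}$.

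\textbf{The forcing terms.} Let $B_j$ be such that $b_{m,j}\le B_j m^{a_j}$, and write $s=a_j+\gamma\ge 0$. I need
\begin{align*}
K_j(m+1)^{s} \ge (1-dm^{-\gamma})K_j m^{s} + B_j m^{s-\gamma},
\end{align*}
which is equivalent to
\begin{align*}
K_j\big[(m+1)^s-m^s\big] + K_j d\, m^{s-\gamma} \ge B_j m^{s-\gamma}.
\end{align*}
Since $s\ge0$, the bracket is non-negative, so it suffices to take $K_j = B_j/d$. The exponent condition $a_j+\gamma\ge0$ is used precisely here, to guarantee monotonicity of $m^s$. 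If $s<0$, one would instead need $(m+1)^s-m^s\approx s m^{s-1}$ to be dominated by $d m^{s-\gamma}$; this also holds for large $m$ because $\gamma<1$, but it is not needed under the stated hypothesis.

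\textbf{Main obstacle.} The only delicate point is keeping the constants uniform in $q$. The comparison above handles this because the $h_j(q)$ enter linearly and the constants $K_j$ depend only on $B_j$ and $d$. If $b_{m,j}=O(m^{a_j})$ holds only for $m$ beyond some threshold, I would enlarge $m_0$ and absorb the finitely many early steps into the $x_{m_0}$ term. Those early steps contribute at most a constant multiple of $x_{m_0}+\sum_j h_j(q)$, which is bounded by the claimed quantity since $m^{a_j+\gamma}\ge1$.
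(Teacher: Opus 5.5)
Your proposal is correct and is essentially the paper's argument: the paper proves by induction that $x_m \le x_{m_0} + \sum_{j=1}^J K_j h_j(q)\, m^{a_j+\gamma}$ with $dK_j \ge K_{b,j}$, using $a_j+\gamma\ge 0$ exactly as you do to get $m^{a_j+\gamma}\le (m+1)^{a_j+\gamma}$. Your choice $K_0=1$, $K_j=B_j/d$ is the same comparison sequence. You also state explicitly that $1-dm^{-\gamma}>0$ is needed to carry the induction hypothesis through the recursion, which the paper uses only implicitly.
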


\begin{proof}[Proof of Lemma \ref{Lemma_sequence}: ]
	(i) Since \(b_{m,j}=O(m^{a_j})\), for each \(j=1,\ldots,J\), there exists a constant \(K_{b,j}>0\) such that $b_{m,j} \leq K_{b,j} m^{a_j}$. Choose constants $K_j>0$ such that $dK_j \geq K_{b,j}$, \(j=1,\ldots,J\). We claim that
	\begin{align}\label{Lemma_sequence_eq1}
		x_m\leq x_{m_0} + \sum_{j=1}^J K_jh_j(q)m^{a_j+\gamma},
		\qquad m\ge m_0.
	\end{align}
	We prove (\ref{Lemma_sequence_eq1}) by induction. The claim is immediate for \(m=m_0\), since \(x_{m_0}\ge0\) and \(h_j(q)\ge0\). Suppose that (\ref{Lemma_sequence_eq1}) holds for some \(m\ge m_0\). Then,
	\begin{align*}
		x_{m+1} &\leq (1-dm^{-\gamma})x_{m_0} + \sum_{j=1}^J (1-dm^{-\gamma})K_j h_j(q)m^{a_j+\gamma}+ \sum_{j=1}^J K_{b,j} h_j(q)m^{a_j}  \\
		&= (1-dm^{-\gamma})x_{m_0} + \sum_{j=1}^J K_j h_j(q)m^{a_j+\gamma} - \sum_{j=1}^J (dK_j - K_{b,j})h_j(q) m^{a_j} \\
		&\leq x_{m_0} + \sum_{j=1}^J K_j h_j(q) (m+1)^{a_j+\gamma}.
	\end{align*}
	Thus, (\ref{Lemma_sequence_eq1}) also holds for \(m+1\), and the induction is complete.
\end{proof}

\begin{lemma}\label{Lemma_Lambdaop}
Suppose that Assumptions \ref{assump_iid}-\ref{assump_boundSigma} hold. Under IR-CAR, \\
(i) If $\gamma > 1/2$ and $q = o(n^{\gamma-1/2})$, $\| \mathbb{E}[\Lambda_n\Lambda_n^\top]\|_{\mathrm{op}} = O(n^{1-\gamma}q^2) = O(n^\gamma)$. \\
(ii) Furthermore, if Assumption \ref{assump_phi1} holds for all $\iota \geq 4$ and the allocation function is given by (\ref{prop_ellx}). Then, if $q = o(n^{\gamma-\varepsilon})$ for some $\varepsilon>0$, $\| \mathbb{E}[\Lambda_n\Lambda_n^\top]\|_{\mathrm{op}} = O(n^{\gamma})$. 
\end{lemma}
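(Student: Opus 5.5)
The plan is to set up a matrix recursion for $\boldsymbol{S}_m := \mathbb{E}[\Lambda_m\Lambda_m^\top]$ and then control its operator norm with Lemma \ref{Lemma_sequence}. From $\Lambda_{m+1}=\Lambda_m+(T_{m+1}-\pi)\phi(X_{m+1})$, for any fixed unit vector $u$ we have
\begin{align*}
\mathbb{E}[(u^\top\Lambda_{m+1})^2] = \mathbb{E}[(u^\top\Lambda_m)^2] + 2\mathbb{E}\big[(\ell_{m+1}-\pi)(u^\top\Lambda_m)(u^\top\phi(X_{m+1}))\big] + \mathbb{E}\big[(T_{m+1}-\pi)^2(u^\top\phi(X_{m+1}))^2\big].
\end{align*}
The last term is at most $M_2$, uniformly in $u$. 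For the cross term, Taylor expand $\ell$ at $0$: $\ell(x)=\pi+\ell'(0)x+R(x)$, with $|R(x)|\le C(x^2\wedge 1)$ (this uses twice differentiability at $0$ and boundedness of $\ell$). The linear part contributes
\begin{align*}
2\ell'(0)m^{-\gamma}\,\mathbb{E}\big[(u^\top\Lambda_m)\,\Lambda_m^\top\boldsymbol{\Sigma}_{\phi\phi}u\big] = 2\ell'(0)m^{-\gamma}\,u^\top \boldsymbol{S}_m\boldsymbol{\Sigma}_{\phi\phi}u .
\end{align*}
This is not a quadratic form in $u$ alone, so I will work with the whole matrix. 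The recursion is $\boldsymbol{S}_{m+1}=\boldsymbol{S}_m-\lambda m^{-\gamma}(\boldsymbol{S}_m\boldsymbol{\Sigma}_{\phi\phi}+\boldsymbol{\Sigma}_{\phi\phi}\boldsymbol{S}_m)+\boldsymbol{B}_m+\boldsymbol{R}_m$, where $\lambda=-\ell'(0)>0$, $\|\boldsymbol{B}_m\|_{\mathrm{op}}\le M_2$, and $\boldsymbol{R}_m$ comes from the remainder. Work in the eigenbasis of $\boldsymbol{\Sigma}_{\phi\phi}$ restricted to its range; $\Lambda_m$ lives there almost surely by the argument of Lemma \ref{Lemma_qusi_smallball}. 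The map $\boldsymbol{S}\mapsto\boldsymbol{S}-\lambda m^{-\gamma}(\boldsymbol{S}\boldsymbol{\Sigma}+\boldsymbol{\Sigma}\boldsymbol{S})$ is the linear operator $(I-\lambda m^{-\gamma}\boldsymbol{\Sigma})\boldsymbol{S}(I-\lambda m^{-\gamma}\boldsymbol{\Sigma})-\lambda^2m^{-2\gamma}\boldsymbol{\Sigma}\boldsymbol{S}\boldsymbol{\Sigma}$. Since the eigenvalues of $\boldsymbol{\Sigma}_{\phi\phi}$ on its range lie in $[\lambda_0,M_2]$, for $m\ge m_0$ its operator norm is at most $(1-\lambda\lambda_0m^{-\gamma})^2+O(m^{-2\gamma})\le 1-dm^{-\gamma}$. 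Hence
\begin{align*}
x_{m+1}\le(1-dm^{-\gamma})x_m+M_2+\|\boldsymbol{R}_m\|_{\mathrm{op}}, \qquad x_m:=\|\boldsymbol{S}_m\|_{\mathrm{op}}.
\end{align*}

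The main obstacle is bounding the remainder
\begin{align*}
\|\boldsymbol{R}_m\|_{\mathrm{op}}\le 2C\sup_{\|u\|=1}\mathbb{E}\big[(m^{-2\gamma}|\Lambda_m^\top\phi|^2\wedge 1)\,|u^\top\Lambda_m|\,|u^\top\phi|\big].
\end{align*}
For part (i) I will use the crude bound $|u^\top\Lambda_m|\le\|\Lambda_m\|$ together with $\mathbb{E}[|\Lambda_m^\top\phi|^2|u^\top\phi|\mid\Lambda_m]\le C\|\Lambda_m\|^2$, which follows from H\"older and Assumption \ref{assump_phi1} with $\iota\ge 4$ (indeed $\iota=3$ suffices). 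This gives $\|\boldsymbol{R}_m\|_{\mathrm{op}}\lesssim m^{-2\gamma}\mathbb{E}\|\Lambda_m\|^3$. I then need $\mathbb{E}\|\Lambda_m\|^3=O(m^{3\gamma/2}q^{3/2})$, or a variant. If that is too strong under only $\nu$ moments, I will instead use $\wedge 1$ and $\mathbb{E}\|\Lambda_m\|^2$ from Theorem \ref{Thm_RCAR}, interpolating through (\ref{proof_RCAR_E^r}) and (\ref{proof_RCAR_Er1}). The target is $\|\boldsymbol{R}_m\|_{\mathrm{op}}=O(m^{1-2\gamma}q^2)$, consistent with the stated rate. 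Feeding $b_{m,1}=M_2$ with $a_1=0$, and $b_{m,2}=m^{1-2\gamma}$ with $h_2=q^2$ and $a_2=1-2\gamma$ (so $a_2+\gamma=1-\gamma\ge0$), into Lemma \ref{Lemma_sequence} gives $x_n=O(n^\gamma+n^{1-\gamma}q^2)$. Since $q=o(n^{\gamma-1/2})$, this is $O(n^\gamma)$.

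For part (ii), $\ell$ is the clipped linear function (\ref{prop_ellx}), so $R(x)=0$ on $|x|\le(\pi-\underline\rho)/\lambda\wedge(\overline\rho-\pi)/\lambda$. The remainder is therefore supported on the event $\{|\Lambda_m^\top\phi(X_{m+1})|\ge c\,m^\gamma\}$. Its contribution is bounded by $\mathbb{E}[\|\Lambda_m\|\,|u^\top\phi|\,\mathbb{I}\{\cdot\}]$, and Markov's inequality with high moments gives
\begin{align*}
\lesssim m^{-k\gamma}\,\mathbb{E}\|\Lambda_m\|^{k+1}=O\big(m^{-k\gamma}(m^\gamma q)^{(k+1)/2}\big)
\end{align*}
by Lemma \ref{Lemma1}, using the projection moments for all $\iota$. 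With $q=O(n^{\gamma-\varepsilon})$ this is $O(m^{\gamma/2+\gamma-(k+1)\varepsilon/2}\cdot m^{-k\gamma/2+k\gamma/2})$. The exponents are $-k\gamma+(k+1)\gamma/2+(k+1)(\gamma-\varepsilon)/2=\gamma-(k+1)\varepsilon/2$, which for large $k$ drops below $0$. So the remainder is $O(1)$ and is absorbed into the constant term, and Lemma \ref{Lemma_sequence} yields $\|\mathbb{E}[\Lambda_n\Lambda_n^\top]\|_{\mathrm{op}}=O(n^\gamma)$.
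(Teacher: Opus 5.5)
Your reduction to $x_{m+1}\le(1-dm^{-\gamma})x_m+M_2+\|\boldsymbol{R}_m\|_{\mathrm{op}}$ with $x_m=\|\boldsymbol{S}_m\|_{\mathrm{op}}$ is sound and matches the skeleton the paper uses. However, part (i) has a genuine gap at the remainder, exactly where you flagged the obstacle.

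Once you bound $|u^\top\Lambda_m|\le\|\Lambda_m\|$, you get $\|\boldsymbol{R}_m\|_{\mathrm{op}}\lesssim m^{-2\gamma}\mathbb{E}\|\Lambda_m\|^3$. Under the hypotheses of (i), the only third-moment bound available is (\ref{proof_RCAR_Er1}), namely $\mathbb{E}\|\Lambda_m\|^3\le C_3m^{1+\gamma/2}q^{3/2}$. Lemma \ref{Lemma_sequence} then returns $O(n^\gamma+n^{1-\gamma/2}q^{3/2})$. Under $q=o(n^{\gamma-1/2})$ this is only $o(n^{\gamma+1/4})$; for example, $q\asymp n^{\gamma-0.6}$ gives order $n^{\gamma+0.1}$.

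The bound $\mathbb{E}\|\Lambda_m\|^3=O(m^{3\gamma/2}q^{3/2})$ you hope for is Lemma \ref{Lemma1}. That lemma needs Assumption \ref{assump_phi1} for every $\nu$, so it is not available in (i). Even granting it, you would get $n^{\gamma/2}q^{3/2}$, which is $O(n^\gamma)$ only if $q=O(n^{\gamma/3})$. That is not implied by $q=o(n^{\gamma-1/2})$ once $\gamma>3/4$.

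The ``$\wedge 1$'' fallback is also unavailable. The remainder $R(x)=\ell(x)-\pi-\ell'(0)x$ grows linearly in $|x|$, so the correct global bound is $C(x^2\wedge|x|)$, not $C(x^2\wedge 1)$.

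The missing idea is to keep the linear factor $|v^\top\Lambda_m|$ and pay for it with $\|\boldsymbol{S}_m\|_{\mathrm{op}}$ rather than with $\operatorname{tr}\boldsymbol{S}_m=\mathbb{E}\|\Lambda_m\|^2$. Your crude bound is effectively $(\mathbb{E}\|\Lambda_m\|^4)^{1/2}(\operatorname{tr}\boldsymbol{S}_m)^{1/2}$. Instead, use $|R(x)|\le Cx^2$, then Cauchy--Schwarz, then AM--GM:
\begin{align*}
\big|v^\top\boldsymbol{R}_m v\big|\le C_0\,m^{-2\gamma}\,\mathbb{E}\big[\|\Lambda_m\|^2\,|v^\top\Lambda_m|\big]\le C_0\,m^{-2\gamma}\big(\mathbb{E}\|\Lambda_m\|^4\big)^{1/2}x_m^{1/2}\le \tfrac{d}{2}\,m^{-\gamma}x_m+\tfrac{C_0^2}{2d}\,m^{-3\gamma}\,\mathbb{E}\|\Lambda_m\|^4 .
\end{align*}
The first term is absorbed into half of the contraction. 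Applying (\ref{proof_RCAR_Er1}) with $r=4$ gives $m^{-3\gamma}\mathbb{E}\|\Lambda_m\|^4\le C_4m^{1-2\gamma}q^2$, which is precisely the forcing term you were aiming for. Lemma \ref{Lemma_sequence} then gives $O(n^\gamma+n^{1-\gamma}q^2)=O(n^\gamma)$.

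For (ii), your crude bound survives because $k$ can be taken large, but two points need repair.
\begin{enumerate}
\item On $\{|x|\ge c\}$ the remainder is bounded by $C|x|$, not by a constant. This is harmless, since it still yields $|R(x)|\le C_k|x|^k$.
\item You cannot substitute $q\lesssim m^{\gamma-\varepsilon}$ at every $m$. Lemma \ref{Lemma1} is a statement at the terminal index, and for bounded $m$ your remainder bound is of order $q^{(k+1)/2}$, not $O(1)$. Use (\ref{proof_RCAR_Er1}) at each $m$ instead. The contribution of any step $m\le n/2$ is multiplied by $\prod_{j>m}(1-dj^{-\gamma})\le\exp\{-c(n^{1-\gamma}-m^{1-\gamma})\}$, so only $m\asymp n$ matters, and there $q=O(m^{\gamma-\varepsilon})$ does hold. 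Note that Lemma \ref{Lemma_sequence} as stated does not cover exponents with $a_j+\gamma<0$.
\end{enumerate}
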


\begin{proof}[Proof of Lemma \ref{Lemma_Lambdaop}: ]
Note that
\begin{align*}
\mathbb{E}\left[\Lambda_{m+1}\Lambda_{m+1}^\top\middle| \mathcal{F}_{m} \right] =&\ \Lambda_m\Lambda_m^\top + \mathbb{E}\left[(\ell_{m+1} - \pi) \left\{ \Lambda_{m} \phi(X_{m+1})^\top + \phi(X_{m+1})\Lambda_{m}^\top \right\}  \middle| \mathcal{F}_{m} \right] \\
&\ + \mathbb{E}\left[(T_{m+1} - \pi)^2 \phi(X_{m+1})\phi(X_{m+1})^\top \middle| \mathcal{F}_{m} \right]. 
\end{align*}
For the last term, we have
\begin{align*}
&\ u^\top \mathbb{E}\left[(T_{m+1} - \pi)^2 \phi(X_{m+1})\phi(X_{m+1})^\top \middle| \mathcal{F}_{m} \right]u \\ 
=&\ \mathbb{E}\left[(T_{m+1}-\pi)^2\left\{u^\top\phi(X_{m+1}) \right\}^2\middle| \mathcal{F}_m\right] \leq
\mathbb{E}\left[\left\{u^\top\phi(X_{m+1}) \right\}^2\right] 
	=	u^\top\boldsymbol{\Sigma}_{\phi\phi} u,
\end{align*}
which implies that
\begin{align*}
\left\|  \mathbb{E}\left[(T_{m+1} - \pi)^2 \phi(X_{m+1})\phi(X_{m+1})^\top \right] \right\|_{\mathrm{op}} \leq \left\| \boldsymbol{\Sigma}_{\phi\phi} \right\|_{\mathrm{op}} = \lambda_{\max}(\boldsymbol{\Sigma}_{\phi\phi}). 
\end{align*}
We write $h_m = \ell_{m} - \pi - \ell^\prime(0)(m-1)^{-\gamma}\Lambda_{m-1}^\top \phi(X_{m})$ for $m=2,\cdots,n$. By Taylor's expansion, we have
\begin{align}\label{ell_2}
	\left| \ell(x) - \pi -\ell^\prime(0)x \right|\leq \overline{\lambda}_{\ell,2}x^2. 
\end{align}
Therefore, we have $|h_m| \leq \overline{\lambda}_{\ell,2} (m-1)^{-2\gamma} \{\Lambda_{m-1}^\top \phi(X_{m})\}^2$. \\
We now consider the linear term. Note that
\begin{align*}
&\ \mathbb{E}\left[ \Lambda_{m}^\top \phi(X_{m+1}) \phi(X_{m+1}) \middle| \mathcal{F}_m\right] = \mathbb{E}\left[  \phi(X_{m+1}) \Lambda_{m}^\top \phi(X_{m+1}) \middle| \Lambda_{m}\right] \\
=&\ \mathbb{E}\left[  \phi(X_{m+1}) \phi(X_{m+1})^\top  \Lambda_{m} \middle| \Lambda_{m} \right] = \boldsymbol{\Sigma}_{\phi\phi} \Lambda_{m}.
\end{align*}
Then, let $\boldsymbol{B}_m = \boldsymbol{I} + \ell^\prime(0) m^{-\gamma} \boldsymbol{\Sigma}_{\phi\phi}$, we have 
\begin{align*}
\boldsymbol{B}_m \Lambda_m \Lambda_m^\top \boldsymbol{B}_m =&\  \Lambda_m \Lambda_m^\top + \frac{\ell^\prime(0)}{m^{\gamma}}  \boldsymbol{\Sigma}_{\phi\phi} \Lambda_m \Lambda_m^\top + \frac{\ell^\prime(0)}{m^{\gamma}} \Lambda_m \Lambda_m^\top\boldsymbol{\Sigma}_{\phi\phi} + \left(\frac{\ell^\prime(0)}{m^{\gamma}}\right)^2 \boldsymbol{\Sigma}_{\phi\phi} \Lambda_m \Lambda_m^\top\boldsymbol{\Sigma}_{\phi\phi} \\
=&\ \Lambda_m\Lambda_m^\top + \mathbb{E}\left[(\ell_{m+1} - \pi) \left\{ \Lambda_{m} \phi(X_{m+1})^\top + \phi(X_{m+1})\Lambda_{m}^\top \right\}  \middle| \mathcal{F}_{m} \right] \\
&\ + \left(\frac{\ell^\prime(0)}{m^{\gamma}}\right)^2 \boldsymbol{\Sigma}_{\phi\phi} \Lambda_m \Lambda_m^\top\boldsymbol{\Sigma}_{\phi\phi} - \boldsymbol{R}_m,
\end{align*}
where $\boldsymbol{R}_m = \mathbb{E}[h_{m+1} \{ \Lambda_{m} \phi(X_{m+1})^\top + \phi(X_{m+1})\Lambda_{m}^\top \} | \Lambda_{m}]$. 
For any unit vector $v\in\mathbb{R}^q$, 
\begin{align*}
\left| v^\top \boldsymbol{R}_m v\right|  =&\ 2\left| \mathbb{E}\left[ h_{m+1} \left(v^\top \Lambda_{m}\right) \left\{v^\top \phi(X_{m+1}) \right\}  \middle| \Lambda_m \right] \right| \\
\leq&\ \frac{2\overline{\lambda}_{\ell,2}}{m^{2\gamma}} \mathbb{E}\left[  \left\{\Lambda_{m}^\top \phi(X_{m+1})\right\}^2 \left| v^\top \Lambda_{m}\right| \left| v^\top \phi(X_{m+1}) \right|\, \middle| \Lambda_{m} \right] \\
\overset{\mathrm{(i)}}{\leq}&\ \frac{2\overline{\lambda}_{\ell,2}}{m^{2\gamma}} \left(\mathbb{E}\left[  \left\{\Lambda_{m}^\top \phi(X_{m+1})\right\}^4 \middle| \Lambda_{m} \right]\right)^{1/2} \left|v^\top \Lambda_{m} \right| \left(\mathbb{E}\left[\left|v^\top \phi(X_{m+1}) \right|^2 \middle| \Lambda_{m} \right]\right)^{1/2} \\
\leq&\ \frac{2\overline{\lambda}_{\ell,2}}{m^{2\gamma}} M_2^{1/2} M_4^{1/2}\left\| \Lambda_{m} \right\|^2 \left|v^\top \Lambda_{m} \right|,
\end{align*}
where (i) follows from Cauchy-Schwarz's inequality. It follows that
\begin{align*}
\left| v^\top \mathbb{E}[\boldsymbol{R}_m] v\right| \leq&\  \frac{2\overline{\lambda}_{\ell,2}}{m^{2\gamma}} M_2^{1/2} M_4^{1/2}\mathbb{E}\left[\left\| \Lambda_{m} \right\|^2 \left|v^\top \Lambda_{m} \right| \right] \\
\leq&\ \frac{2\overline{\lambda}_{\ell,2}}{m^{2\gamma}} M_2^{1/2} M_4^{1/2} \left( \mathbb{E}\left[\left\| \Lambda_{m} \right\|^4  \right]\right)^{1/2}  \left( v^\top \mathbb{E}\left[\Lambda_m\Lambda_m^\top \right] v  \right)^{1/2} \\
\leq&\ \frac{2\overline{\lambda}_{\ell,2}}{m^{2\gamma}} M_2^{1/2} M_4^{1/2} \left( \mathbb{E}\left[\left\| \Lambda_{m} \right\|^4  \right]\right)^{1/2} \left\|\mathbb{E}\left[\Lambda_m\Lambda_m^\top\right]\right\|_{\mathrm{op}}^{1/2} 
\end{align*}
that is, 
\begin{align*}
\left\| \mathbb{E}[\boldsymbol{R}_m]\right\|_{\mathrm{op}} \leq \frac{2\overline{\lambda}_{\ell,2}}{m^{2\gamma}} M_2^{1/2} M_4^{1/2} \left( \mathbb{E}\left[\left\| \Lambda_{m} \right\|^4  \right] \right)^{1/2} \left\|\mathbb{E}\left[\Lambda_m\Lambda_m^\top\right]\right\|_{\mathrm{op}}^{1/2}. 
\end{align*}
Recall the notation and conclusions in Lemma \ref{Lemma_qusi_smallball}, we have 
\begin{align*}
&\ \boldsymbol{U} \boldsymbol{B}_m \Lambda_m \Lambda_m^\top \boldsymbol{B}_m \boldsymbol{U}^\top = \left(\boldsymbol{U} \boldsymbol{B}_m \boldsymbol{U}^\top\right) \left(\boldsymbol{U}\Lambda_m\right) \left(\boldsymbol{U}\Lambda_m\right)^\top \left(\boldsymbol{U} \boldsymbol{B}_m \boldsymbol{U}^\top\right) \\
=&\ \left(\boldsymbol{I} + m^{-\gamma} \ell^\prime(0) \operatorname{diag}\{\lambda_1,\cdots,\lambda_{q'}\}\right) \widetilde{\Lambda}_{m(1)} \widetilde{\Lambda}_{m(1)}^\top \left(\boldsymbol{I} + m^{-\gamma} \ell^\prime(0) \operatorname{diag}\{\lambda_1,\cdots,\lambda_{q'}\}\right).
\end{align*}
It follows that if $m \geq \{-\ell^\prime(0)\lambda_{\max}(\boldsymbol{\Sigma}_{\phi\phi})\}^{1/\gamma}$, 
\begin{align*}
&\ \left\| \mathbb{E}\left[\boldsymbol{B}_m \Lambda_m \Lambda_m^\top \boldsymbol{B}_m \right]\right\|_{\mathrm{op}} = \left\| \mathbb{E}\left[\boldsymbol{U}\boldsymbol{B}_m \Lambda_m \Lambda_m^\top \boldsymbol{B}_m\boldsymbol{U}^\top \right]\right\|_{\mathrm{op}} \\
\leq&\ \left\|\boldsymbol{I} + m^{-\gamma} \ell^\prime(0) \operatorname{diag}\{\lambda_1,\cdots,\lambda_{q'}\}\right\|_{\mathrm{op}}^2  \left\| \mathbb{E}\left[ \widetilde{\Lambda}_{m(1)}\widetilde{\Lambda}_{m(1)}^\top \right] \right\|_{\mathrm{op}} \\
=&\ (1 + m^{-\gamma}\ell^\prime(0)\lambda_0 )^2 \left\| \mathbb{E}\left[ \Lambda_{m}\Lambda_{m}^\top \right] \right\|_{\mathrm{op}} \leq (1 + m^{-\gamma}\ell^\prime(0)\lambda_0) \left\| \mathbb{E}\left[ \Lambda_{m}\Lambda_{m}^\top \right] \right\|_{\mathrm{op}}. 
\end{align*}
Therefore, for $m \geq \{-\ell^\prime(0)\lambda_{\max}(\boldsymbol{\Sigma}_{\phi\phi})\}^{1/\gamma}$, 
\begin{align*}
 \left\| \mathbb{E}\left[ \Lambda_{m+1}\Lambda_{m+1}^\top \right] \right\|_{\mathrm{op}} 
\leq&\ (1 + m^{-\gamma}\ell^\prime(0)\lambda_0 ) \left\| \mathbb{E}\left[ \Lambda_{m}\Lambda_{m}^\top \right] \right\|_{\mathrm{op}} + \lambda_{\max}(\boldsymbol{\Sigma}_{\phi\phi}) \\
&\ +\frac{2\overline{\lambda}_{\ell,2}}{m^{2\gamma}} M_2^{1/2} M_4^{1/2} \left( \mathbb{E}\left[\left\| \Lambda_{m} \right\|^4  \right]\right)^{1/2} \left\|\mathbb{E}\left[\Lambda_m\Lambda_m^\top\right]\right\|_{\mathrm{op}}^{1/2} \\
\overset{(\mathrm{i})}{\leq}&\ (1 + m^{-\gamma}\ell^\prime(0)\lambda_0 ) \left\| \mathbb{E}\left[ \Lambda_{m}\Lambda_{m}^\top \right] \right\|_{\mathrm{op}} + \lambda_{\max}(\boldsymbol{\Sigma}_{\phi\phi}) \\
&\ - \frac{1}{2} \ell^\prime(0) \lambda_0 m^{-\gamma}\left\| \mathbb{E}\left[ \Lambda_{m}\Lambda_{m}^\top \right] \right\|_{\mathrm{op}} - \frac{2\overline{\lambda}_{\ell,2}M_2M_4}{\ell^\prime(0)\lambda_0} m^{-3\gamma}\mathbb{E}\left[\left\| \Lambda_{m} \right\|^4  \right]\\
\overset{(\mathrm{ii})}{\leq}&\ \left(1 + \frac{1}{2} m^{-\gamma}\ell^\prime(0)\lambda_0 \right) \left\| \mathbb{E}\left[ \Lambda_{m}\Lambda_{m}^\top \right] \right\|_{\mathrm{op}} + \lambda_{\max}(\boldsymbol{\Sigma}_{\phi\phi}) + C_{\Lambda} m^{1-2\gamma}q^2,
\end{align*}
for some constant $C_\Lambda >0$, where (i) follows from AM–GM inequality and (ii) follows from Equation (\ref{proof_RCAR_Er1}). In addition,
\begin{align*}
\max_{1\leq m \leq \{-\ell^\prime(0)\lambda_{\max}(\boldsymbol{\Sigma}_{\phi\phi})\}^{1/\gamma}} \left\| \mathbb{E}\left[ \Lambda_{m}\Lambda_{m}^\top \right] \right\|_{\mathrm{op}} = O(q).
\end{align*}
Then, by Lemma \ref{Lemma_sequence}, we have $\| \mathbb{E}[\Lambda_n\Lambda_n^\top] \|_{\mathrm{op}} = O(n^{1-\gamma}q^2+ n^{\gamma}) = O(n^{\gamma})$. \\
Furthermore, if Assumption \ref{assump_phi1} holds for all $\nu \geq 4$ and the allocation function is given by (\ref{prop_ellx}). Then, for any $K\geq 2$, there exists a constant $\overline{\lambda}_{\ell,K}$ such that
\begin{align}\label{ell_K}
	\left| \ell(x) - \pi -\ell^\prime(0)x \right|\leq \overline{\lambda}_{\ell,K}|x|^K. 
\end{align}
Therefore, we have $|h_m| \leq \overline{\lambda}_{\ell,K} (m-1)^{-K\gamma} \{\Lambda_{m-1}^\top \phi(X_{m})\}^K$. Then, for any unit vector $v\in\mathbb{R}^q$, 
\begin{align*}
	\left| v^\top \boldsymbol{R}_m v\right|  =&\ 2\left| \mathbb{E}\left[ h_{m+1} \left(v^\top \Lambda_{m}\right) \left\{v^\top \phi(X_{m+1}) \right\}  \middle| \Lambda_m \right] \right| \\
	\leq&\ \frac{2\overline{\lambda}_{\ell,K}}{m^{K\gamma}} \mathbb{E}\left[  \left\{\Lambda_{m}^\top \phi(X_{m+1})\right\}^K \left| v^\top \Lambda_{m}\right| \left| v^\top \phi(X_{m+1}) \right|\, \middle| \Lambda_{m} \right] \\
	\overset{\mathrm{(i)}}{\leq}&\ \frac{2\overline{\lambda}_{\ell,K}}{m^{K\gamma}} \left(\mathbb{E}\left[  \left\{\Lambda_{m}^\top \phi(X_{m+1})\right\}^{2K} \middle| \Lambda_{m} \right]\right)^{1/2} \left|v^\top \Lambda_{m} \right| \left(\mathbb{E}\left[\left|v^\top \phi(X_{m+1}) \right|^2 \middle| \Lambda_{m} \right]\right)^{1/2} \\
	\leq&\ \frac{2\overline{\lambda}_{\ell,K}}{m^{K\gamma}} M_2^{1/2} M_{2K}^{1/2}\left\| \Lambda_{m} \right\|^K \left|v^\top \Lambda_{m} \right|,
\end{align*}
which implies that 
\begin{align*}
\left\| \mathbb{E}\left[ \boldsymbol{R}_m \right] \right\|_{\mathrm{op}} \leq \frac{2\overline{\lambda}_{\ell,K}}{m^{K\gamma}} M_2^{1/2} M_{2K}^{1/2} \left( \mathbb{E}\left[\left\| \Lambda_{m} \right\|^{2K}  \right] \right)^{1/2} \left\|\mathbb{E}\left[\Lambda_m\Lambda_m^\top\right]\right\|_{\mathrm{op}}^{1/2}. 
\end{align*}
Therefore, for $m \geq \{-\ell^\prime(0)\lambda_{\max}(\boldsymbol{\Sigma}_{\phi\phi})\}^{1/\gamma}$, 
\begin{align*}
	\left\| \mathbb{E}\left[ \Lambda_{m+1}\Lambda_{m+1}^\top \right] \right\|_{\mathrm{op}} 
	\leq&\ (1 + m^{-\gamma}\ell^\prime(0)\lambda_0 ) \left\| \mathbb{E}\left[ \Lambda_{m}\Lambda_{m}^\top \right] \right\|_{\mathrm{op}} + \lambda_{\max}(\boldsymbol{\Sigma}_{\phi\phi}) \\
	&\ +\frac{\overline{\lambda}_{\ell,K}}{m^{K\gamma}} M_2^{1/2} M_{2K}^{1/2} \left( \mathbb{E}\left[\left\| \Lambda_{m} \right\|^{2K}  \right]\right)^{1/2} \left\|\mathbb{E}\left[\Lambda_m\Lambda_m^\top\right]\right\|_{\mathrm{op}}^{1/2} \\
	\leq &\ (1 + m^{-\gamma}\ell^\prime(0)\lambda_0 ) \left\| \mathbb{E}\left[ \Lambda_{m}\Lambda_{m}^\top \right] \right\|_{\mathrm{op}} + \lambda_{\max}(\boldsymbol{\Sigma}_{\phi\phi}) \\
	&\ - \frac{1}{2} \ell^\prime(0) \lambda_0 m^{-\gamma}\left\| \mathbb{E}\left[ \Lambda_{m}\Lambda_{m}^\top \right] \right\|_{\mathrm{op}} - \frac{2\overline{\lambda}_{\ell,K}M_2M_{2K}}{\ell^\prime(0)\lambda_0} m^{-(2K-1)\gamma}\mathbb{E}\left[\left\| \Lambda_{m} \right\|^{2K}  \right]\\
	\leq&\ \left(1 + \frac{1}{2} m^{-\gamma}\ell^\prime(0)\lambda_0 \right) \left\| \mathbb{E}\left[ \Lambda_{m}\Lambda_{m}^\top \right] \right\|_{\mathrm{op}} + \lambda_{\max}(\boldsymbol{\Sigma}_{\phi\phi}) + C_{\Lambda,K} m^{1-K\gamma}q^K,
\end{align*}
for some constant $ C_{\Lambda,K} >0$. \\
Then, by Lemma \ref{Lemma_sequence}, $\| \mathbb{E}[\Lambda_n\Lambda_n^\top]\|_{\mathrm{op}} = O(n^{1-(K-1)\gamma}q^{K} + n^{\gamma}) = O(n^{\gamma})$ for sufficiently large $K$. 
\end{proof}

\begin{proof}[Proof of Theorem \ref{Thm_additional covariate}:]
	\textbf{Proof of (i):} Let $\Delta M_i = (T_i-\pi) Z_i - \mathbb{E}[(\ell_i - \pi)Z_i | \Lambda_{i-1}]$ and $\mathcal{F}_m^{ZW}$ be the $\sigma$-field generated by $\{T_i,X_i,Z_i,W_i\}_{i=1}^m$. Then, $\{\Delta M_i,\mathcal{F}_i^{ZW} \}_{i=1}^n$ is a martingale difference sequence, which implies that
	\begin{align*}
		\frac{1}{n} \sum_{i=1}^n \Delta M_i \xrightarrow{P} 0.
	\end{align*}
	Note that $|\ell(x) -\pi | \leq \overline{\lambda}_{\ell,1}|x|$ and $|\ell(x)-\pi| \leq 1$. Then, for all $M>0$,
	\begin{align*}
		\left| \mathbb{E}\left[(\ell_i - \pi)Z_i | \Lambda_{i-1}\right]\right| \leq&\ M \mathbb{E}\left[ \left|\ell_i - \pi \right| \middle| \Lambda_{i-1} \right] + \mathbb{E}\left[|Z|\mathbb{I}\{|Z|\geq M\} \right] \\
		\leq&\ \frac{M\overline{\lambda}_{\ell,1}}{(i-1)^{\gamma}} \left(\mathbb{E}\left[\left(\Lambda_{i-1}^\top\phi(X_i) \right)^2 \middle| \Lambda_{i-1} \right]\right)^{1/2} + \mathbb{E}\left[|Z|\mathbb{I}\{|Z|\geq M\} \right] \\
		\leq&\ M\overline{\lambda}_{\ell,1} M_2^{1/2} \frac{\|\Lambda_{i-1} \|}{(i-1)^{\gamma}} + \mathbb{E}\left[|Z|\mathbb{I}\{|Z|\geq M\} \right].
	\end{align*}
	It implies that 
	\begin{align}\label{eq_Proof_RCAR_addition1}
		\frac{1}{n}\left|\sum_{i=2}^n \mathbb{E}\left[(\ell_i - \pi)Z_i | \Lambda_{i-1}\right]\right| \leq&\ \frac{1}{n}\sum_{i=2}^{n} M\overline{\lambda}_{\ell,1} M_2^{1/2} \frac{\|\Lambda_{i-1} \|}{(i-1)^{\gamma}} + \frac{n-1}{n} \mathbb{E}\left[|Z|\mathbb{I}\{|Z|\geq M\} \right] \notag \\
		=&\ \mathbb{E}\left[|Z|\mathbb{I}\{|Z|\geq M\} \right] + o_P(1), 
	\end{align}
	where the last equation follows from the fact that if $q=o( n^{\gamma - \frac{ 2(1-\gamma) }{3\nu-4}})$,
	\begin{align*}
		\mathbb{E}\left[\| \Lambda_n\| \right] \leq \left(\mathbb{E}\left[\| \Lambda_n\|^2 \right]\right)^{1/2} = o(n^\gamma). 
	\end{align*}
	Then, let $M\to\infty$, we prove (i). \\
	\textbf{Proof of (ii):} Let $\widetilde{Z}_i = Z_i - \phi(X_i)^\top b_{\phi,Z}$, where $b_{\phi,Z} = \arg\min_{b} \mathbb{E}[(Z-\phi(X)^\top b)^2]$ and $\Delta \widetilde{M}_i = (T_i-\pi) \widetilde{Z}_i - \mathbb{E}[(\ell_i - \pi)\widetilde{Z}_i | \Lambda_{i-1}]$. Then, $\{\Delta \widetilde{M}_i,\mathcal{F}_i^{ZW} \}_{i=1}^n$ is a martingale difference sequence and 
	\begin{align*}
		\sum_{i=1}^n (T_i - \pi)Z_i = \sum_{i=1}^n \Delta \widetilde{M}_i + \Lambda_n^\top b_{\phi,Z} + \sum_{i=2}^{n} \mathbb{E}\left[ \left(\ell_i - \pi\right)\widetilde{Z}_i \middle| \Lambda_{i-1} \right],
	\end{align*}
	By the definition of $b_{\phi,Z}$, we have $\boldsymbol{\Sigma}_{\phi\phi}b_{\phi,Z} =\mathbb{E}\left[\phi(X)Z\right]$. Recall the notation and conclusions in Lemma \ref{Lemma_qusi_smallball} and write that $\boldsymbol{U} b_{\phi,Z} = (\widetilde{b}_{\phi,Z(1)}^\top, \widetilde{b}_{\phi,Z(2)}^\top)^\top$, where $\widetilde{b}_{\phi,Z(1)} \in \mathbb{R}^{q'}$. Then, we have
	\begin{align*}
		\mathbb{E}\left[\widetilde{\phi}(X)Z\right] = \boldsymbol{U}\boldsymbol{\Sigma}_{\phi\phi}\boldsymbol{U}^\top \left(\boldsymbol{U} b_{\phi,Z} \right) = \operatorname{diag}\{\lambda_1,\cdots,\lambda_{q'},0,\cdots,0\} \left(\boldsymbol{U} b_{\phi,Z} \right), 
	\end{align*}
	which implies that
	\begin{align*}
		\mathbb{E}\left[\widetilde{\phi}_{(1)}(X)Z\right] = \operatorname{diag}\{\lambda_1,\cdots,\lambda_{q'}\} \, \widetilde{b}_{\phi,Z(1)},
	\end{align*}
     Note that $\| \widetilde{\Lambda}_{n(1)} \| = \| \widetilde{\Lambda}_{n} \|= \| \Lambda_{n} \|$ and 
	\begin{align*}
		\left\| \mathbb{E}\left[\widetilde{\phi}_{(1)}(X)Z\right] \right\| =&\ \sup_{\| u \|= 1} \left| \mathbb{E}\left[u^\top \widetilde{\phi}_{(1)}(X)Z\right]\right| \leq \sup_{\| u \|= 1}\left( \mathbb{E}\left[\left(u^\top \widetilde{\phi}_{(1)}(X)\right)^2\right]\right)^{1/2} \left(\mathbb{E}[Z^2]\right)^{1/2} \\
		=&\ \sup_{1\leq i \leq q'} \lambda_i \left(\mathbb{E}[Z^2]\right)^{1/2}. 
	\end{align*}
Then, we have $\| \widetilde{b}_{\phi,Z(1)} \| = O(1)$. Therefore, note that $q = o(n^{\gamma - 1/2})$, by Lemma \ref{Lemma_Lambdaop},
\begin{align*}
&\ \mathbb{E}\left[ \left(\Lambda_n^\top b_{\phi,Z}\right)^2 \right] = \mathbb{E}\left[\left\{ \left(\boldsymbol{U}\Lambda_n\right)^\top \left(\boldsymbol{U}b_{\phi,Z}\right)  \right\}^2 \right] = \mathbb{E}\left[ \left(\widetilde{\Lambda}_{n(1)}^\top \widetilde{b}_{\phi,Z(1)}\right)^2 \right] \\
\leq&\ \left\|\widetilde{b}_{\phi,Z(1)} \right\|^2 \left\| \mathbb{E}\left[ \widetilde{\Lambda}_{n(1)}\widetilde{\Lambda}_{n(1)}^\top \right] \right\|_{\mathrm{op}} = \left\|\widetilde{b}_{\phi,Z(1)} \right\|^2 \left\| \mathbb{E}\left[\Lambda_n\Lambda_n^\top  \right] \right\|_{\mathrm{op}} = O(n^{1-\gamma}q^2) = o(n), 
\end{align*}
which implies that $\Lambda_n^\top b_{\phi,Z}  = o_P(\sqrt{n})$. \\ 
In addition, by Equation (\ref{ell_2}) and Cauchy-Schwarz's inequality, we have
	\begin{align*}
		&\ \left| \mathbb{E}\left[  \left( \ell_{m+1} - \pi \right)\widetilde{Z}_{m+1}     \middle| \Lambda_m  \right] \right| = \left| \mathbb{E}\left[  \left( \ell_{m+1} - \pi - \ell^\prime(0) \frac{\Lambda_m^\top \phi(X_{m+1})}{m^{\gamma}} \right)\widetilde{Z}_{m+1}     \middle| \Lambda_m  \right] \right| \\
		\leq&\ \overline{\lambda}_{\ell,2} \mathbb{E}\left[ \frac{\left(\Lambda_m^\top \phi(X_{m+1})  \right)^2  }{m^{2\gamma} } \left|\widetilde{Z}_{m+1}\right|  \middle| \Lambda_m \right] \\
		\leq&\ \overline{\lambda}_{\ell,2} \frac{1}{m^{2\gamma}} \left(\mathbb{E}\left[\left(\Lambda_m^\top \phi(X_{m+1})  \right)^4\middle| \Lambda_m \right]\right)^{1/2} \left(\mathbb{E}[\widetilde{Z}^2] \right)^{1/2} \leq  \overline{\lambda}_{\ell,2} M_4^{1/2} \left(\mathbb{E}[\widetilde{Z}^2] \right)^{1/2} \frac{\|\Lambda_m\|^2}{m^{2\gamma}} , 
	\end{align*}
	which implies that if $q = o(n^{\gamma-1/2})$ with $\gamma>1/2$,  
	\begin{align*}
		\left|\sum_{i=2}^{n}\mathbb{E}\left[ \left(\ell_i - \pi\right)\widetilde{Z}_i \middle| \Lambda_{i-1} \right] \right| \leq&\ \overline{\lambda}_{\ell,2} M_4^{1/2} \left(\mathbb{E}[\widetilde{Z}^2]\right)^{1/2} \sum_{i=2}^{n} \frac{\| \Lambda_i \|^2}{i^{2\gamma}} \\
		=&\ O_P\left(n^{1-\gamma}q + n^{\frac{2\gamma + 2\nu - 3\gamma\nu}{2(\nu-1)}} q^{\frac{3\nu-4}{2(\nu-1)}}  \right) = O_P(n^{1-\gamma}q) = o_P(\sqrt{n}).
	\end{align*}
	A simple calculation yields that $\mathbb{E}[W_i^2|\mathcal{F}_{i-1}] = \sigma_W^2$ and by Equation (\ref{eq_Proof_RCAR_addition1}),
	\begin{align*}
		&\ \frac{1}{n\widetilde{\sigma}_Z^2} \sum_{i=1}^n \mathbb{E}\left[\left(\Delta \widetilde{M}_i\right)^2\middle| \mathcal{F}_{i-1}\right] \\
		=&\ \frac{1}{n\widetilde{\sigma}_Z^2} \sum_{i=1}^n (1-2\pi)\mathbb{E}\left[\left(\ell_i-\pi \right)\widetilde{Z}_i^2 \middle| \Lambda_{i-1} \right] 
		-\frac{1}{n\widetilde{\sigma}_Z^2} \sum_{i=1}^n \left(\mathbb{E}\left[(\ell_i-\pi)\widetilde{Z}_i\middle| \Lambda_{i-1} \right] \right)^2 + 1 \xrightarrow{P} 1, 
	\end{align*}
	and 
	\begin{align*}
		\frac{1}{n} \sum_{i=1}^n \mathbb{E}\left[ \Delta\widetilde{M}_i W_i\middle| \mathcal{F}_{i-1}\right] = \frac{1}{n}\sum_{i=1}^n \mathbb{E}\left[(\ell_i-\pi)\widetilde{Z}_i W_i\middle| \mathcal{F}_{i-1}\right] \xrightarrow{P} 0.
	\end{align*}
	Furthermore, Lindeberg's condition holds for two-dimensional martingale difference sequence $\{(\Delta \widetilde{M}_i/\widetilde{\sigma}_Z, W_i/\sigma_W )^\top,\mathcal{F}_i \}$, which follows from
	\begin{align*}
		&\ \frac{1}{n}\sum_{i=1}^n \mathbb{E}\left[\left(\Delta \widetilde{M}_i/\widetilde{\sigma}_Z\right)^2 \mathbb{I}\left\{\left(\Delta \widetilde{M}_i/\widetilde{\sigma}_Z\right)^2 \geq \epsilon n \right\} \right] \\
		\leq &\ \frac{1}{n\widetilde{\sigma}_Z^2} \sum_{i=1}^n \mathbb{E}\left[\left(|\widetilde{Z}_i|+\mathbb{E}[|\widetilde{Z}|] \right)^2\mathbb{I}\left\{\left(|\widetilde{Z}_i|+\mathbb{E}[|\widetilde{Z}|]\right) \geq \sqrt{\pi(1-\pi)c_{\phi,Z}\epsilon n} \right\}  \right]\\
		=&\ \frac{1}{\widetilde{\sigma}_Z^2}\mathbb{E}\left[\left(|\widetilde{Z}|+\mathbb{E}[|\widetilde{Z}|] \right)^2\mathbb{I}\left\{\left(|\widetilde{Z}|+\mathbb{E}[|\widetilde{Z}|]\right) \geq \sqrt{\pi(1-\pi)c_{\phi,Z}\epsilon n} \right\}  \right] \to 0, \\
		\text{and } &\ \frac{1}{n} \sum_{i=1}^n \mathbb{E}\left[W_i^2 \mathbb{I}\left\{W_i^2\geq \epsilon n \right\} \right] = \mathbb{E}\left[W^2 \mathbb{I}\left\{W^2\geq \epsilon n \right\} \right] \to 0. 
	\end{align*}
	Therefore, by central limit theorem for martingale difference sequence and Slutsky’s theorem, 
	\begin{align*}
		\frac{1}{\sqrt{n}} \sum_{i=1}^n
		\begin{pmatrix}
			(T_i-\pi)Z_i/\widetilde{\sigma}_{Z} \\
			W_i/\sigma_W
		\end{pmatrix} \xrightarrow{d}\mathcal{N}\left(\boldsymbol{0}_2,\boldsymbol{I}_2 \right). 
	\end{align*}
	\textbf{Proof of (iii):} We only need to prove that
	\begin{align*}
\Lambda_n^\top b_{\phi,Z}= o_P(\sqrt{n}) \quad \text{and} \quad		\sum_{i=1}^{n}\mathbb{E}\left[ \left(\ell_i - \pi\right)\widetilde{Z}_i \middle| \Lambda_{i-1} \right] = o_P(\sqrt{n}). 
	\end{align*}
The first equation follows from Lemma \ref{Lemma_Lambdaop} directly. Then, we only need to prove the second equation. For any $K\geq 2$, by Equation (\ref{ell_K}) and Cauchy-Schwarz's inequality, we have
	\begin{align*}
		&\ \left| \mathbb{E}\left[  \left( \ell_{m+1} - \pi \right)\widetilde{Z}_{m+1}     \middle| \Lambda_m  \right] \right| = \left| \mathbb{E}\left[  \left( \ell_{m+1} - \pi - \ell^\prime(0) \frac{\Lambda_m^\top \phi(X_{m+1})}{m^{\gamma}} \right)\widetilde{Z}_{m+1}     \middle| \Lambda_m  \right] \right| \\
		\leq&\ \overline{\lambda}_{\ell,K} \mathbb{E}\left[ \frac{\left|\Lambda_m^\top \phi(X_{m+1})  \right|^K  }{m^{K\gamma} } \left|\widetilde{Z}_{m+1}\right|  \middle| \Lambda_m \right] \\
		\leq&\ \overline{\lambda}_{\ell,K} \frac{1}{m^{K\gamma}} \left(\mathbb{E}\left[\left(\Lambda_m^\top \phi(X_{m+1})  \right)^{2K}\middle| \Lambda_m \right]\right)^{1/2} \left(\mathbb{E}[\widetilde{Z}^2] \right)^{1/2} \leq  \overline{\lambda}_{\ell,K} M_{2K}^{1/2} \left(\mathbb{E}[\widetilde{Z}^2] \right)^{1/2} \frac{\|\Lambda_m\|^K}{m^{K\gamma}}.
	\end{align*}
	By Lemma \ref{Lemma1}, let $K > \varepsilon_0^{-1}$, if $q=o(n^{\gamma-\varepsilon})$ for some $\varepsilon>0$, 
	\begin{align*}
		\left|\sum_{i=2}^{n}\mathbb{E}\left[ \left(\ell_i - \pi\right)\widetilde{Z}_i \middle| \Lambda_{i-1} \right] \right| \leq \overline{\lambda}_{\ell,K} M_{2K}^{1/2} \left(\mathbb{E}[\widetilde{Z}^2] \right)^{1/2} \sum_{i=2}^{n} \frac{\|\Lambda_i\|^K}{i^{K\gamma}} 
		= O_P(n^{1-\frac{K}{2}\gamma}q^{\frac{K}{2}}) = o_P(\sqrt{n}).  
	\end{align*}
\end{proof}
\section{Theoretical Properties of Treatment Effect under IR-CAR Procedures}
\subsection{Proof of Theorem \ref{Thm_ATE}}
\begin{proof}[Proof of Theorem \ref{Thm_ATE}: ]
	Let $D_n=\sum_{i=1}^n (T_i-\pi)$ denote the overall imbalance. To establish the stochastic order of $D_n$, we distinguish between two cases. First, suppose $1\in \mathrm{span}\{\phi(X)\}$, then by Lemma \ref{Lemma_Lambdaop}, $D_n = O_P(n^{\gamma/2}) = o_P(\sqrt{n})$. Next, suppose that $1\notin\operatorname{span}\{\phi(X)\}$, then by Theorem \ref{Thm_additional covariate}, $D_n = O_P(\sqrt{n})$. Combining the two cases yields $D_n = O_P(\sqrt{n})$. Let $\widetilde{Y}_{i}(a) = Y_i(a) - \mathbb{E}[Y(a)]$ and $\widetilde{Y}(a) = Y(a) - \mathbb{E}[Y(a)]$ for $a=0,1$, we have
	\begin{align*}
		\widehat{\tau} = \Bar{Y}_1 -\Bar{Y}_0 =&\ \tau + \left(\pi + \frac{D_n}{n}\right)^{-1} \left(\frac{1}{n}\sum_{i=1}^n \pi \widetilde{Y}_{i}(1) + \frac{1}{n} \sum_{i=1}^n (T_i-\pi)\widetilde{Y}_{i}(1) \right) \\
		&\ - \left(1 - \pi - \frac{D_n}{n}\right)^{-1} \left(\frac{1}{n}\sum_{i=1}^n \left(1-\pi\right) \widetilde{Y}_{i}(0) - \frac{1}{n} \sum_{i=1}^n (T_i-\pi)\widetilde{Y}_{i}(0) \right),
	\end{align*}
	which implies that
	\begin{align*}
		\sqrt{n}\left( \widehat{\tau} - \tau \right) = \frac{1}{\sqrt{n}} \sum_{i=1}^n \left(T_i-\pi\right)\left( \frac{\widetilde{Y}_{i}(1)}{\pi} + \frac{\widetilde{Y}_{i}(0)}{1-\pi} \right)  + \frac{1}{\sqrt{n}} \sum_{i=1}^n \left(\widetilde{Y}_{i}(1) - \widetilde{Y}_{i}(0)\right) + o_P(1).
	\end{align*}
	Let $\Breve{Y}(a) = \widetilde{Y}(a) - \Pi_{\phi(X)}\{\widetilde{Y}(a)\}$ for $a=0,1$. A simple calculation yields that
	\begin{align*}
		&\ \pi(1-\pi)\mathbb{E}\left[\left(\frac{\Breve{Y}(1)}{\pi} + \frac{\Breve{Y}(0)}{1-\pi}\right)^2\right] + \mathbb{E}\left[\left(\widetilde{Y}(1) - \widetilde{Y}(0)\right)^2 \right] \\
		=&\ \pi(1-\pi)\mathbb{E}\left[\left(\frac{\Breve{Y}(1)}{\pi} + \frac{\Breve{Y}(0)}{1-\pi}\right)^2\right] + \mathbb{E}\left[\left(\Breve{Y}(1) - \Breve{Y}(0)\right)^2 \right] + \mathbb{E}\left[\left(\Pi_{\phi(X)}\left\{\widetilde{Y}(1)-\widetilde{Y}(0)\right\}\right)^2 \right] \\
		=&\ \varsigma_{\widetilde{Y}}^2(\pi) + \varsigma_{H\widetilde{Y}}^2. 
	\end{align*}
	Therefore, by Theorem \ref{Thm_additional covariate} and Slutsky’s theorem,
	\begin{align*}
		\sqrt{n}\left( \widehat{\tau} - \tau \right)\Big/ \sqrt{\varsigma_{\widetilde{Y}}^2(\pi) + \varsigma_{H\widetilde{Y}}^2} \xrightarrow{d} \mathcal{N}(0,1). 
	\end{align*}
\end{proof}
\subsection{Proof of Theorem \ref{Thm_ATE_test}}
In this subsection, we assume that $\boldsymbol{\Sigma}_{\phi\phi}$ is positive definite without loss of generality. Otherwise, we can use $\widetilde{\phi}_{(1)}(X)$ instead of $\phi(X)$ (see its definition in the proof of Lemma  \ref{Lemma_qusi_smallball}). Therefore, combining Assumptions \ref{assump_phi1} and \ref{assump_boundSigma}, we assume throughout this subsection that the eigenvalues of $\boldsymbol{\Sigma}_{\phi\phi}$ are bounded above and below. We first establish several technical lemmas.

\begin{lemma}\label{Lemma_vector}
Under the conditions of Theorem \ref{Thm_ATE_test}, let $\{V_i\}_{i=1}^n$ be a sequence of i.i.d. copies of $V$, where $V\in\mathbb{R}^d$ and define $K_{\phi,V} = \sup_{\|u\|=1,\|v\|=1} \mathbb{E}[|u^\top V |\cdot | v^\top \phi(X) |]$. Then,
\begin{align*}
\left\| \frac{1}{n} \sum_{i=1}^n (T_i-\pi)V_i \right\| = O_P\left(\sqrt{\frac{\mathbb{E}[\|V\|^2]}{n}}\right) + o(K_{\phi,V}). 
\end{align*}
Furthermore, if $\mathbb{E}[V] = \boldsymbol{0}$. Then, for $a=0,1$,
\begin{align*}
\left\| \frac{1}{n_a} \sum_{i=1}^n \mathbb{I}\{T_i=a\} V_i \right\| = O_P\left(\sqrt{\frac{\mathbb{E}[\|V\|^2]}{n}}\right) + o(K_{\phi,V}).
\end{align*}
\end{lemma}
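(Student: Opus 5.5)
Following the proof of Theorem \ref{Thm_additional covariate}(ii), I will split the sum into a martingale part and a drift part. Let $\mathcal{F}_{i}^{V}$ be the $\sigma$-field generated by $\{T_j,X_j,V_j\}_{j\le i}$, and write
\begin{align*}
\sum_{i=1}^n (T_i-\pi)V_i = \sum_{i=1}^n \Big\{(T_i-\pi)V_i - \mathbb{E}\big[(\ell_i-\pi)V_i \,\big|\, \Lambda_{i-1}\big]\Big\} + \sum_{i=2}^n \mathbb{E}\big[(\ell_i-\pi)V_i \,\big|\, \Lambda_{i-1}\big],
\end{align*}
and call the two sums $\mathcal{M}_n$ and $\mathcal{D}_n$.

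\textbf{Martingale part.} The summands of $\mathcal{M}_n$ are $\mathbb{R}^d$-valued martingale differences. Because of orthogonality, $\mathbb{E}\|\mathcal{M}_n\|^2 = \sum_i \mathbb{E}\|\Delta_i\|^2 \le \sum_i \mathbb{E}\|(T_i-\pi)V_i\|^2 \le n\mathbb{E}\|V\|^2$. Chebyshev's inequality then gives $\|\mathcal{M}_n\|/n = O_P(\sqrt{\mathbb{E}\|V\|^2/n})$. This step is routine and does not depend on the dimension.

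\textbf{Drift part, which is the main obstacle.} A first-order Taylor expansion of $\ell$ gives
\begin{align*}
\mathbb{E}\big[(\ell_i-\pi)V_i \,\big|\, \Lambda_{i-1}\big] = \ell'(0)(i-1)^{-\gamma}\,\mathbb{E}\big[V\phi(X)^\top\big]\Lambda_{i-1} + \mathbb{E}\big[h_i V_i \,\big|\, \Lambda_{i-1}\big].
\end{align*}
The operator norm of $\mathbb{E}[V\phi(X)^\top]$ is at most $K_{\phi,V}$, since $|u^\top \mathbb{E}[V\phi^\top] v|\le \mathbb{E}|u^\top V||v^\top\phi|$. Hence the linear part contributes at most
\begin{align*}
\frac{|\ell'(0)| K_{\phi,V}}{n}\sum_{i}(i-1)^{-\gamma}\|\Lambda_{i-1}\|.
\end{align*}
Using $\mathbb{E}\|\Lambda_{i}\|\le (\mathbb{E}\|\Lambda_i\|^2)^{1/2}$ together with Theorem \ref{Thm_RCAR}, and noting that $q=o(n^{\gamma-1/2})$ (or $q=o(n^{\gamma-\varepsilon})$) puts us in regime (i), we get $\mathbb{E}\|\Lambda_i\|=o(i^{\gamma})$. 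Averaging then gives $o(K_{\phi,V})$, after upgrading $o_P$ to the deterministic form via Markov's inequality. The remainder involves $h_i$. I will handle it with a uniform vector bound: for unit $u$, apply Cauchy--Schwarz with $|h_i|\le \bar\lambda_{\ell,2}(i-1)^{-2\gamma}(\Lambda_{i-1}^\top\phi)^2$. This step is delicate because the result must come out proportional to $K_{\phi,V}$ or to $\sqrt{\mathbb{E}\|V\|^2}$, not to a $d$-dependent quantity. I would first try bounding $\|\mathbb{E}[h_iV_i|\Lambda_{i-1}]\|=\sup_u \mathbb{E}[h_i u^\top V_i]$ by $(\mathbb{E}(u^\top V)^2)^{1/2} M_4^{1/2}\|\Lambda_{i-1}\|^2 i^{-2\gamma}$. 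Summing, and using $\sum_i \|\Lambda_i\|^2 i^{-2\gamma}=O_P(n^{1-\gamma}q)=o_P(\sqrt n)$ exactly as in the proof of Theorem \ref{Thm_additional covariate}(ii), gives a term $o(\sqrt{\mathbb{E}\|V\|^2/n})$. Under condition (ii) I will use the $K$-th order bound \eqref{ell_K} and Lemma \ref{Lemma1} instead, choosing $K$ large. If the sup over $u$ becomes problematic, I will fall back on the truncation argument from part (i) of Theorem \ref{Thm_additional covariate}, which yields a bound of the same order.

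\textbf{Second claim.} For $\mathbb{E}[V]=0$, I will write $\sum_i \mathbb{I}\{T_i=1\}V_i=\pi\sum_i V_i+\sum_i(T_i-\pi)V_i$, and similarly with $1-\pi$ and a minus sign for $a=0$. The i.i.d. sum satisfies $\|\sum_i V_i\|/n=O_P(\sqrt{\mathbb{E}\|V\|^2/n})$ by a second-moment computation. Finally, $n_a/n\to\pi$ or $1-\pi$ in probability because $D_n=o_P(n)$, which follows from Theorem \ref{Thm_additional covariate}(i) applied with $Z\equiv1$. Dividing by $n_a$ instead of $n$ therefore changes only constants, and this gives the stated bound.
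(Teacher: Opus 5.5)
Your proposal is correct and follows the paper's skeleton. Both arguments use the same martingale/drift split and an $L^2$ bound plus Chebyshev for the martingale part; your centering argument even gives the constant $1$ where the paper has $4$. The second claim is handled the same way, via the identity $\sum_iT_iV_i=\pi\sum_iV_i+\sum_i(T_i-\pi)V_i$ (and its analogue for $a=0$) together with $n_a/n\to\pi$ or $1-\pi$.

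The genuine difference is the drift step. The paper never Taylor-expands. It uses the global bound $|\ell(x)-\pi|\le\overline{\lambda}_{\ell,1}|x|$, which follows from differentiability at $0$ together with $|\ell-\pi|\le 1$. Writing $\Lambda_{i-1}=\|\Lambda_{i-1}\|v$ with $\|v\|=1$, it gets in one line
\begin{align*}
\left|u^\top\mathbb{E}\left[(\ell_i-\pi)V_i\,\middle|\,\Lambda_{i-1}\right]\right|\le\frac{\overline{\lambda}_{\ell,1}\|\Lambda_{i-1}\|}{(i-1)^{\gamma}}\,\mathbb{E}\left[|v^\top\phi(X)|\,|u^\top V|\right]\le\frac{\overline{\lambda}_{\ell,1}K_{\phi,V}}{(i-1)^{\gamma}}\|\Lambda_{i-1}\|.
\end{align*}
So the whole drift is $o_P(K_{\phi,V})$ as soon as $n^{-1}\sum_i\|\Lambda_{i-1}\|/(i-1)^\gamma=o_P(1)$. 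The step you call delicate never arises. Lemma~\ref{Lemma1}, the bound~(\ref{ell_K}), and the split between conditions (i) and (ii) are all unnecessary; the weaker regime of Theorem~\ref{Thm_additional covariate}(i) suffices.

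Your route still goes through. The supremum over $u$ is harmless because $\sup_{\|u\|=1}(\mathbb{E}(u^\top V)^2)^{1/2}\le(\mathbb{E}\|V\|^2)^{1/2}$, so the truncation fallback is not needed. It also proves slightly more. Since your remainder is $o_P(\sqrt{\mathbb{E}\|V\|^2/n})$, you may replace $K_{\phi,V}$ by the cross-covariance norm $\|\mathbb{E}[V\phi(X)^\top]\|_{\mathrm{op}}\le K_{\phi,V}$, which vanishes when $V$ is uncorrelated with $\phi(X)$. The cost is the extra remainder analysis, which relies on the full dimension and moment conditions.

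Two wording fixes:
\begin{itemize}
\item $\mathbb{E}\|\Lambda_i\|=o(i^\gamma)$ is not uniform in $i$, because $q=q_n$ is tied to $n$. Average the explicit bound instead, e.g.\ $n^{-1}\sum_{i\le n}i^{-\gamma}(i^{\gamma}q)^{1/2}\asymp(q/n^{\gamma})^{1/2}\to0$.
\item Markov's inequality turns the expectation bound into $o_P(K_{\phi,V})$, not into a deterministic statement. That $o_P$ reading is how the lemma's $o(K_{\phi,V})$ has to be interpreted.
\end{itemize}
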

\begin{proof}
We consider the following decomposition:
\begin{align*}
\sum_{i=1}^n (T_i-\pi)V_i = \sum_{i=1}^n (T_i-\pi)V_i - \mathbb{E}\left[\left(\ell_i - \pi \right)V_i\middle|\Lambda_{i-1} \right] + \sum_{i=1}^{n-1} \mathbb{E}\left[\left(\ell_{i+1} - \pi \right)V_{i+1}\middle|\Lambda_{i} \right].
\end{align*}
First, note that $\{ (T_i-\pi)V_i - \mathbb{E}\left[\left(\ell_i - \pi \right)V_i\middle|\Lambda_{i-1} \right], \mathcal{F}_i \}$ is a martingale difference sequence, which implies that
\begin{align*}
&\ \mathbb{E}\left[\left\| \sum_{i=1}^n\left\{ (T_i-\pi)V_i - \mathbb{E}\left[\left(\ell_i - \pi \right)V_i\middle|\Lambda_{i-1} \right]\right\} \right\|^2 \right] \\
=&\ \sum_{i=1}^n \mathbb{E}\left[\left\|(T_i-\pi)V_i - \mathbb{E}\left[\left(\ell_i - \pi \right)V_i\middle|\Lambda_{i-1} \right] \right\|^2 \right] \leq 2\sum_{i=1}^n \left(\mathbb{E}\left[\|V_i\|^2 + \left\| \mathbb{E}\left[ V_i\middle| \Lambda_{i-1}\right]\right\|^2\right] \right)\\
\leq&\ 2\sum_{i=1}^n \left(\mathbb{E}\left[\|V_i\|^2 + \mathbb{E}\left[ \|V_i\|^2\middle| \Lambda_{i-1}\right]\right]\right) = 4n \mathbb{E}[\|V\|^2].
\end{align*}
Then, by Markov inequality, we obtain that
\begin{align*}
\left\| \sum_{i=1}^n\left\{ (T_i-\pi)V_i - \mathbb{E}\left[\left(\ell_i - \pi \right)V_i\middle|\Lambda_{i-1} \right]\right\} \right\|^2 = O_P(\sqrt{n\mathbb{E}[\|V\|^2]}). 
\end{align*}
In addition, for every unit vector $u$,
\begin{align*}
&\ \left| u^\top\mathbb{E}\left[\left(\ell_{i+1} - \pi \right)V_{i+1}\middle|\Lambda_{i} \right] \right| \leq 
\mathbb{E}\left[ \left|\ell_{i+1} - \pi\right|\cdot \left|u^\top V_{i+1} \right|\, \middle| \Lambda_{i}  \right] \\
\leq&\ \overline{\lambda}_{\ell,1} \mathbb{E}\left[\left| \frac{\Lambda_{i}^\top \phi(X_{i+1})}{i^{\gamma}} \right| \cdot \left|u^\top V_{i+1} \right|\, \middle| \Lambda_{i} \right] \leq \frac{\overline{\lambda}_{\ell,1}K_{\phi,V}}{i^{\gamma}}\left\|\Lambda_i \right\|,
\end{align*}
which implies that
\begin{align*}
\left\| \mathbb{E}\left[\left(\ell_{i+1} - \pi \right)V_{i+1}\middle|\Lambda_{i} \right] \right\| \leq \frac{\overline{\lambda}_{\ell,1}K_{\phi,V}}{i^{\gamma}}\left\|\Lambda_i \right\|.
\end{align*}
Consequently,
\begin{align*}
\left\| \sum_{i=2}^n \mathbb{E}\left[\left(\ell_{i+1} - \pi \right)V_{i+1}\middle|\Lambda_{i} \right] \right\| \leq \overline{\lambda}_{\ell,1} K_{\phi,V} \sum_{i=1}^{n-1} \frac{\|\Lambda_i\|}{i^{\gamma}} =  o_P\left(nK_{\phi,V}\right). 
\end{align*}
Therefore, 
\begin{align*}
\left\| \frac{1}{n} \sum_{i=1}^n (T_i-\pi)V_i \right\| = O_P\left(\sqrt{\frac{\mathbb{E}[\|V\|^2]}{n}}\right) + o(K_{\phi,V}).   
\end{align*}
Next, we consider the following decomposition
\begin{align*}
\frac{1}{n_1} \sum_{i=1}^n T_iV_i = \frac{n}{n_1} \left[\frac{1}{n} \sum_{i=1}^n (T_i-\pi)V_i + \frac{\pi}{n} \sum_{i=1}^n V_i  \right]. 
\end{align*}
Note that $n_1/n \xrightarrow{P} \pi$ and by Markov's inequality,
\begin{align*}
\left\|  \sum_{i=1}^n V_i \right\| = O_P(\sqrt{n\mathbb{E}[\|V\|^2]}). 
\end{align*}
Therefore, 
\begin{align*}
\left\| \frac{1}{n_1} \sum_{i=1}^n T_iV_i  \right\| = O_P\left(\sqrt{\frac{\mathbb{E}[\|V\|^2]}{n}}\right) + o(K_{\phi,V}). 
\end{align*}
Similarly, we can prove that
\begin{align*}
\left\| \frac{1}{n_0} \sum_{i=1}^n (1-T_i)V_i \right\| = O_P\left(\sqrt{\frac{\mathbb{E}[\|V\|^2]}{n}}\right) + o(K_{\phi,V}). 
\end{align*}
\end{proof}

\begin{lemma}[Burkholder–Rosenthal Inequalities for Self-Adjoint Random Matrices]\label{Lemma_Burkholder}
For a $q\times q$ matrix $B$ and $r\geq 2$, define $|B|=(B^*B)^{1/2}$ and its Schatten $r$-norm by $\|B\|_{S_r} =\{\operatorname{tr}(|B|^r)\}^{1/r}$. For a random $q\times q$ matrix $B$, define $\|B\|_{L_r(S_r)}=\{\mathbb{E}\|B\|_{S_r}^r\}^{1/r}
=\{\mathbb{E}\operatorname{tr}(|B|^r)\}^{1/r}$. Then there exists a constant $K_r>0$, depending only on $r$, such that the following statements hold. \\
(i) Suppose that \(\{D_i,\mathcal G_i\}_{i=1}^n\) is a sequence of \(q\times q\) self-adjoint matrix-valued martingale differences such that $\mathbb{E}\|D_i\|_{S_r}^r<\infty$ for $i=1,\ldots,n$. Then,
\begin{align*}
 \left\| \sum_{i=1}^nD_i \right\|_{L_r(S_r)}  \leq K_r
\left[ \left\| \left\{
	\sum_{i=1}^n \mathbb{E}[D_i^2| \mathcal G_{i-1}] \right\}^{1/2}
	\right\|_{L_r(S_r)} +
	\left\{
	\sum_{i=1}^n  \mathbb{E} \|D_i\|_{S_r}^r
	\right\}^{1/r}
	\right].
\end{align*}
(ii) Suppose that \(D_1,\ldots,D_n\) are independent, mean-zero, self-adjoint \(q\times q\) random matrices such that $\mathbb{E}\|D_i\|_{S_r}^r<\infty$ for $i=1,\ldots,n$. Then,
\begin{align*}
\left\|
\sum_{i=1}^nD_i
\right\|_{L_r(S_r)}
\le
K_r
\left[ \left\|
\left\{
\sum_{i=1}^n \mathbb{E}[D_i^2]
\right\}^{1/2}
\right\|_{S_r} + 
\left\{
\sum_{i=1}^n
\mathbb{E} \|D_i\|_{S_r}^r
\right\}^{1/r} \right].
\end{align*}
\end{lemma}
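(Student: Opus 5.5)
The plan is not to prove this from scratch but to obtain it as a specialization of the noncommutative Burkholder/Rosenthal inequality of \cite{junge2003noncommutative}. The work consists of checking that our setting fits their framework and that their norms reduce to the ones in the statement.

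\textbf{Step 1: set up the noncommutative probability space.} Let $(\Omega,\mathcal{G},\mathbb{P})$ carry the filtration $\{\mathcal G_i\}$. I would take the von Neumann algebra $\mathcal{M}=L_\infty(\Omega)\bar\otimes M_q$ with normal faithful finite trace $\tau=\mathbb{E}\otimes\operatorname{tr}$. The subalgebras $\mathcal{M}_i=L_\infty(\Omega,\mathcal G_i)\bar\otimes M_q$ form an increasing filtration, and the trace-preserving conditional expectations are $\mathcal{E}_i=\mathbb{E}[\,\cdot\,|\mathcal G_i]\otimes \mathrm{id}$. With this choice:
\begin{itemize}
\item the noncommutative $L_r(\mathcal M)$ norm of a random matrix $B$ is exactly $\{\mathbb{E}\operatorname{tr}(|B|^r)\}^{1/r}=\|B\|_{L_r(S_r)}$;
\item a sequence of random matrices with $\mathbb{E}[D_i|\mathcal G_{i-1}]=0$ is precisely a noncommutative martingale difference sequence with respect to $\{\mathcal{M}_i\}$;
\item the moment hypothesis $\mathbb{E}\|D_i\|_{S_r}^r<\infty$ says $D_i\in L_r(\mathcal M)$.
\end{itemize}

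\textbf{Step 2: invoke Junge--Xu for part (i).} Their theorem for $2\le r<\infty$ bounds $\|\sum_i D_i\|_{L_r}$, up to a constant depending only on $r$ (not on $\mathcal M$, hence not on $q$), by the maximum of three terms:
\begin{itemize}
\item the conditional column square function $\|(\sum_i\mathcal{E}_{i-1}[D_i^*D_i])^{1/2}\|_{L_r}$;
\item the conditional row square function $\|(\sum_i\mathcal{E}_{i-1}[D_iD_i^*])^{1/2}\|_{L_r}$;
\item the diagonal term $(\sum_i\|D_i\|_{L_r}^r)^{1/r}$.
\end{itemize}
Since each $D_i$ is self-adjoint, $D_i^*D_i=D_iD_i^*=D_i^2$, so the two square functions coincide and equal $\|\{\sum_i\mathbb{E}[D_i^2|\mathcal G_{i-1}]\}^{1/2}\|_{L_r(S_r)}$. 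The diagonal term is $\{\sum_i\mathbb{E}\|D_i\|_{S_r}^r\}^{1/r}$. Bounding the maximum by the sum gives (i) with $K_r$ equal to the Junge--Xu constant.

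\textbf{Step 3: deduce part (ii).} Take $\mathcal G_i=\sigma(D_1,\ldots,D_i)$. Independence and mean zero make $\{D_i\}$ a martingale difference sequence, and independence also gives $\mathbb{E}[D_i^2|\mathcal G_{i-1}]=\mathbb{E}[D_i^2]$, a deterministic matrix. For a deterministic $B$, $\|B\|_{L_r(S_r)}=\|B\|_{S_r}$, so (i) becomes (ii). Note that $\mathbb{E}[D_i^2]$ is well defined because $\mathbb{E}\|D_i\|_{S_r}^r<\infty$ with $r\ge2$ implies $\mathbb{E}\|D_i\|_{S_2}^2<\infty$.

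\textbf{Main obstacle.} The main point needing care is that $K_r$ must not depend on the dimension $q$, since $q\to\infty$ in our applications. This is guaranteed because the Junge--Xu constants hold uniformly over all semifinite von Neumann algebras; they are of order $r$ in the Rosenthal form.

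A minor technical check is that Junge--Xu state the inequality for bounded or $L_r$-integrable martingales in the algebra. Our finite-moment assumption is covered directly, or by a routine truncation-and-limit argument in $L_r(\mathcal M)$ if one prefers to start from bounded differences.
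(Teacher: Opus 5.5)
Your proposal is correct and takes the same route as the paper. Part (i) is read off from Theorem 5.1 of Junge--Xu in $L_\infty(\Omega)\bar\otimes M_q$ with trace $\mathbb{E}\otimes\operatorname{tr}$: self-adjointness merges the row and column square functions, and the universality of the constant makes it independent of $q$. Part (ii) then follows from (i) with the natural filtration; you simply spell out the identifications the paper leaves implicit. One bookkeeping point worth adding: prepend a zero difference in $L_\infty(\Omega,\mathcal{G}_0)\bar\otimes M_q$, so that the first conditional term is $\mathbb{E}[D_1^2\mid\mathcal{G}_0]$ regardless of how $\mathcal{E}_0$ is normalized in Junge--Xu.
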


\begin{proof}
Part (i) is a finite-dimensional self-adjoint special case of the noncommutative Burkholder–Rosenthal inequality \cite[Theorem 5.1]{junge2003noncommutative}. Part (ii) follows from part (i) by taking the natural filtration generated by the independent random matrices.
\end{proof}

\begin{lemma}\label{Lemma_matrix}
Under the conditions of Theorem \ref{Thm_ATE_test}, let $\widehat{\boldsymbol{\Sigma}}_{\phi\phi,a} = n_a^{-1} \sum_{i=1}^n \mathbb{I}\{T_i=a\}\phi(X_i)\phi(X_i)^\top$ for $a=0,1$, 
	\begin{align*}
		\left\|\widehat{\boldsymbol{\Sigma}}_{\phi\phi,a} - \boldsymbol{\Sigma}_{\phi\phi} \right\|_{\mathrm{op}} = o_P(1), \quad  \left\|\widehat{\boldsymbol{\Sigma}}_{\phi\phi,a}^{-1} - \boldsymbol{\Sigma}_{\phi\phi}^{-1} \right\|_{\mathrm{op}} = o_P(1).
	\end{align*}
\end{lemma}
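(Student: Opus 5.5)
We want $\|\widehat{\boldsymbol{\Sigma}}_{\phi\phi,a}-\boldsymbol{\Sigma}_{\phi\phi}\|_{\mathrm{op}}=o_P(1)$ for $a=0,1$. We treat $a=1$; the case $a=0$ is symmetric. The idea is to decompose the assignment-weighted sum of $\phi(X_i)\phi(X_i)^\top$ the same way as in the proof of Lemma \ref{Lemma_vector}, but with matrices in place of vectors. Write
\begin{align*}
\frac{1}{n}\sum_{i=1}^n T_i\phi(X_i)\phi(X_i)^\top = \frac{\pi}{n}\sum_{i=1}^n \phi(X_i)\phi(X_i)^\top + \frac{1}{n}\sum_{i=1}^n D_i + \frac{1}{n}\sum_{i=1}^n \mathbb{E}\left[(\ell_i-\pi)\phi(X_i)\phi(X_i)^\top\middle|\Lambda_{i-1}\right],
\end{align*}
where $D_i=(T_i-\pi)\phi(X_i)\phi(X_i)^\top-\mathbb{E}[(\ell_i-\pi)\phi(X_i)\phi(X_i)^\top|\Lambda_{i-1}]$. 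The $D_i$ are self-adjoint martingale differences with respect to $\mathcal{F}_i$. Since $n_1/n\xrightarrow{P}\pi$ (because $D_n=O_P(\sqrt n)$, as in the proof of Theorem \ref{Thm_ATE}), it suffices to show three things: the first term is within $o_P(1)$ of $\pi\boldsymbol{\Sigma}_{\phi\phi}$, and the other two terms are $o_P(1)$, all in operator norm.

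\textbf{First term.} Apply Lemma \ref{Lemma_Burkholder}(ii) to the independent centered matrices $\phi(X_i)\phi(X_i)^\top-\boldsymbol{\Sigma}_{\phi\phi}$, bounding the operator norm by the Schatten $r$-norm.
\begin{itemize}
\item The conditional-variance part: $\|\mathbb{E}[(\phi\phi^\top)^2]\|_{\mathrm{op}}=\sup_u\mathbb{E}[\|\phi\|^2(u^\top\phi)^2]\lesssim q\,M_4^{1/2}$ by Cauchy--Schwarz and $\mathbb{E}\|\phi\|^4\asymp q^2$. The Schatten-$r$ norm of its square root is then at most $q^{1/r}(nq)^{1/2}$.
\item The jump part: $(\sum_i\mathbb{E}\|\phi\phi^\top\|_{S_r}^r)^{1/r}=(n\,\mathbb{E}\|\phi\|^{2r})^{1/r}\asymp n^{1/r}q$.
\end{itemize}
Dividing by $n$ gives $O_P(q^{1/r}\sqrt{q/n}+n^{1/r-1}q)$. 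Taking $r$ large (all moments are available, or $r=\nu$), this is $o(1)$ whenever $q=o(n^{1-\epsilon'})$. This holds under either condition of Theorem \ref{Thm_ATE}, since $q=o(n^\gamma)$ with $\gamma<1$.

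\textbf{Martingale term.} Use Lemma \ref{Lemma_Burkholder}(i). Because $|T_i-\pi|\le1$, we have $\mathbb{E}[D_i^2|\mathcal{F}_{i-1}]\preceq \mathbb{E}[(\phi\phi^\top)^2]$ up to a constant factor, and $\mathbb{E}\|D_i\|_{S_r}^r\lesssim\mathbb{E}\|\phi\|^{2r}$. So the same bound as for the first term applies.

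\textbf{Drift term (main obstacle).} The difficulty is that the drift must be controlled in operator norm, uniformly over the $q$ directions. For unit vectors $u,v$, using $|\ell(x)-\pi|\le\overline{\lambda}_{\ell,1}|x|$ together with H\"older and the projection moments of Assumption \ref{assump_phi1},
\begin{align*}
\left|\mathbb{E}\left[(\ell_i-\pi)(u^\top\phi)(v^\top\phi)\middle|\Lambda_{i-1}\right]\right|\le \overline{\lambda}_{\ell,1}(i-1)^{-\gamma}\,\mathbb{E}\left[|\Lambda_{i-1}^\top\phi|\,|u^\top\phi|\,|v^\top\phi|\middle|\Lambda_{i-1}\right]\le \overline{\lambda}_{\ell,1}M_3\frac{\|\Lambda_{i-1}\|}{(i-1)^{\gamma}}.
\end{align*}
Hence the drift term has operator norm at most $Cn^{-1}\sum_i\|\Lambda_{i-1}\|/(i-1)^{\gamma}$. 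By Theorem \ref{Thm_RCAR}, $\mathbb{E}\|\Lambda_m\|\le(\mathbb{E}\|\Lambda_m\|^2)^{1/2}=o(m^\gamma)$ in both regimes: $q=o(n^{\gamma-1/2})$, and $q=o(n^{\gamma-\varepsilon})$ with Lemma \ref{Lemma1}. So this term is $o_P(1)$; the argument is exactly that of the proof of Theorem \ref{Thm_additional covariate}(i), now with a matrix-valued integrand.

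\textbf{Inverse.} By Assumption \ref{assump_boundSigma} and the positive-definiteness reduction made at the start of this subsection, $\lambda_{\min}(\boldsymbol{\Sigma}_{\phi\phi})$ is bounded below. Weyl's inequality then gives $\lambda_{\min}(\widehat{\boldsymbol{\Sigma}}_{\phi\phi,a})\ge\lambda_0/2$ with probability tending to one. On that event,
\[
\widehat{\boldsymbol{\Sigma}}^{-1}-\boldsymbol{\Sigma}^{-1}=\widehat{\boldsymbol{\Sigma}}^{-1}(\boldsymbol{\Sigma}-\widehat{\boldsymbol{\Sigma}})\boldsymbol{\Sigma}^{-1},
\]
whose operator norm is at most $(2/\lambda_0^2)\|\widehat{\boldsymbol{\Sigma}}-\boldsymbol{\Sigma}\|_{\mathrm{op}}=o_P(1)$.
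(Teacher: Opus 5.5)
Your proposal is correct and follows the paper's proof essentially step for step. It uses the same split into an i.i.d. term, a martingale term and a drift term, Lemma~\ref{Lemma_Burkholder} for the first two, the bound $|\ell(x)-\pi|\le\overline{\lambda}_{\ell,1}|x|$ together with $\mathbb{E}\|\Lambda_m\|=o(m^{\gamma})$ for the drift, and the identity $\widehat{\boldsymbol{\Sigma}}^{-1}-\boldsymbol{\Sigma}^{-1}=\widehat{\boldsymbol{\Sigma}}^{-1}(\boldsymbol{\Sigma}-\widehat{\boldsymbol{\Sigma}})\boldsymbol{\Sigma}^{-1}$ for the inverse. Working with uncentered $\phi\phi^\top$ and a three-factor H\"older bound is only cosmetic, and your Weyl step makes $\|\widehat{\boldsymbol{\Sigma}}_{\phi\phi,a}^{-1}\|_{\mathrm{op}}=O_P(1)$ explicit, where the paper leaves it implicit.

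One small correction concerns the choice of $r$. The jump term needs $\mathbb{E}\|\phi(X)\|^{2r}\asymp q^{r}$, so you need $r\le\nu/2$, not $r=\nu$. Under condition (i) of Theorem~\ref{Thm_ATE}, where only a fixed $\nu\ge4$ is assumed, you cannot take $r$ large. Take $r=2$ instead, which suffices because $q=o(n^{\gamma-1/2})=o(n^{1/2})$ gives $q/\sqrt{n}\to0$; this is exactly the paper's choice.
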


\begin{proof}
It suffices to prove the result for $a=1$, as the case $a=0$ can be established by the same argument. Consider the following decomposition: 
\begin{align*}
\widehat{\boldsymbol{\Sigma}}_{\phi\phi,1} - \boldsymbol{\Sigma}_{\phi\phi} =&\ \frac{n}{n_1}\left[\frac{1}{n}\sum_{i=1}^n (T_i-\pi)\left(\phi(X_i)\phi(X_i)^\top - \boldsymbol{\Sigma}_{\phi\phi}\right) + \frac{\pi}{n} \sum_{i=1}^n \left(\phi(X_i)\phi(X_i)^\top - \boldsymbol{\Sigma}_{\phi\phi}\right) \right] \\
=&\ \frac{n}{n_1} \left[ \frac{1}{n} \sum_{i=1}^{n} \Delta M_{i,\phi}  + \frac{1}{n} \sum_{i=1}^{n-1} \mathbb{E}\left[(\ell_{i+1} - \pi) \left(\phi(X_{i+1})\phi(X_{i+1})^\top - \boldsymbol{\Sigma}_{\phi\phi}\right) \middle| \Lambda_{i} \right] \right. \\
 &\ + \left. \frac{\pi}{n} \sum_{i=1}^n \left(\phi(X_i)\phi(X_i)^\top - \boldsymbol{\Sigma}_{\phi\phi}\right) \right],
\end{align*}
where $\Delta M_{i,\phi} = (T_i-\pi)(\phi(X_i)\phi(X_i)^\top - \boldsymbol{\Sigma}_{\phi\phi}) - \mathbb{E}[(\ell_i - \pi) (\phi(X_i)\phi(X_i)^\top - \boldsymbol{\Sigma}_{\phi\phi}) | \Lambda_{i-1}]$.\\
Note that $\{ \Delta M_{i,\phi} , \mathcal{F}_i \}$ is a martingale difference sequence. 
\begin{align*}
&\ \left| u^\top \sum_{i=1}^n \mathbb{E}\left[ \left(\Delta M_{i,\phi}\right)^2  \middle| \mathcal{F}_{i-1} \right] u \right| \leq \left| u^\top \sum_{i=1}^n \mathbb{E}\left[(T_i-\pi)^2 (\phi(X_i)\phi(X_i)^\top - \boldsymbol{\Sigma}_{\phi\phi})^2 \middle| \mathcal{F}_{i-1} \right]u  \right| \\
\leq&\ n u^\top \mathbb{E}\left[(\phi(X)\phi(X)^\top - \boldsymbol{\Sigma}_{\phi\phi})^2\right]u =  n u^\top \left\{ \mathbb{E} \left[ \left\|\phi(X)\right\|^2\phi(X)\phi(X)^\top \right] -\boldsymbol{\Sigma}_{\phi\phi}^2 \right\} u.
\end{align*}
For any unit vector $u$,
\begin{align*}
&\ u^\top\mathbb{E} \left[ \left\|\phi(X)\right\|^2\phi(X)\phi(X)^\top \right]u = \mathbb{E} \left[\left\|\phi(X)\right\|^2 \left\{u^\top\phi(X) \right\}^2 \right] \\
\leq&\ \left\{\mathbb{E} \left[ \left\|\phi(X)\right\|^4\right]\right\}^{1/2} \left\{\mathbb{E} \left[\left\{u^\top\phi(X) \right\}^4 \right] \right\}^{1/2} = O(q),
\end{align*}
which implies that
\begin{align}\label{Eq_phi1}
\left\| \mathbb{E}\left[(\phi(X_i)\phi(X_i)^\top - \boldsymbol{\Sigma}_{\phi\phi})^2\right]  \right\|_{\mathrm{op}} \leq \left\| \mathbb{E} \left[ \left\|\phi(X)\right\|^2\phi(X)\phi(X)^\top \right] \right\|_{\mathrm{op}} + \left\| \boldsymbol{\Sigma}_{\phi\phi}^2 \right\|_{\mathrm{op}} = O(q). 
\end{align}
Therefore,
\begin{align*}
\left\|\sum_{i=1}^n \mathbb{E}\left[ \left(\Delta M_{i,\phi}\right)^2  \middle| \mathcal{F}_{i-1} \right] \right\|_{\mathrm{op}} = O(nq),
\end{align*}
which implies that
\begin{align*}
\left\|\left\{\sum_{i=1}^n \mathbb{E}\left[ \left(\Delta M_{i,\phi}\right)^2  \middle| \mathcal{F}_{i-1}\right]\right\}^{1/2} \right\|_{S_r} \leq q^{1/r} \left\|\sum_{i=1}^n \mathbb{E}\left[ \left(\Delta M_{i,\phi}\right)^2  \middle| \mathcal{F}_{i-1} \right] \right\|_{\mathrm{op}}^{1/2} = O(n^{1/2}q^{1/2+1/r}).  
\end{align*}
In addition, by Jensen's inequality, for $2\leq r\leq \nu/2$,
\begin{align*}
&\ \mathbb{E}\left\| \Delta M_{i,\phi} \right\|_{S_r}^r \\
=&\
\mathbb{E}\left\| (T_i-\pi)(\phi(X_i)\phi(X_i)^\top - \boldsymbol{\Sigma}_{\phi\phi}) - \mathbb{E}[(\ell_i - \pi) (\phi(X_i)\phi(X_i)^\top - \boldsymbol{\Sigma}_{\phi\phi}) | \Lambda_{i-1}] \right\|_{S_r}^r \\
\leq&\  2^{r-1} \left[
\mathbb{E}\left\|(T_i-\pi)\left(\phi(X_i)\phi(X_i)^\top- \boldsymbol{\Sigma}_{\phi\phi}\right)\right\|_{S_r}^r
+ \mathbb{E} \left\|
	 \mathbb{E}[(\ell_i - \pi) (\phi(X_i)\phi(X_i)^\top - \boldsymbol{\Sigma}_{\phi\phi}) | \Lambda_{i-1}] 
	\right\|_{S_r}^r
	\right] \\
\leq&\ 2^r \, \mathbb{E}\left\|(T_i-\pi)\left(\phi(X_i)\phi(X_i)^\top- \boldsymbol{\Sigma}_{\phi\phi}\right) \right\|_{S_r}^r \leq 2^r \, \mathbb{E}\left\|\phi(X_i)\phi(X_i)^\top- \boldsymbol{\Sigma}_{\phi\phi} \right\|_{S_r}^r = O(q^r),
\end{align*}
where the last equation follows from the fact that $\|\phi(X_i)\phi(X_i)^\top \|_{S_r} = \| \phi(X_i)\|^2$ and Minkowski's inequality:
\begin{align}\label{Eq_phi2} 
\left\{\mathbb{E}\left\|\phi(X_i)\phi(X_i)^\top- \boldsymbol{\Sigma}_{\phi\phi} \right\|_{S_r}^r\right\}^{1/r} &\leq \left\{\mathbb{E}\left[\left\|\phi(X) \right\|^{2r} \right] \right\}^{1/r} + \left\| \boldsymbol{\Sigma}_{\phi\phi}  \right\|_{S_r} \notag \\
&\leq \left\{\mathbb{E}\left[\left\|\phi(X) \right\|^{2r} \right] \right\}^{1/r} + q^{1/r}\left\| \boldsymbol{\Sigma}_{\phi\phi}  \right\|_{\mathrm{op}} = O(q).
\end{align}
Therefore, by Lemma \ref{Lemma_Burkholder}, 
\begin{align*}
\left\| \sum_{i=1}^n \Delta M_{i,\phi} \right\|_{L_r(S_r)} = O(n^{1/2}q^{1/2+1/r}) + O(n^{1/r}q).
\end{align*}
If condition (i) of Theorem \ref{Thm_ATE} holds, we set $r=2$; if condition (ii) of holds, we set 
\begin{align*}
r = \max\left\{2, \frac{1}{1-\gamma + \varepsilon}, \frac{2(\gamma-\varepsilon)}{1-\gamma+\varepsilon} \right\},
\end{align*}
Then,
\begin{align*}
\left\| \sum_{i=1}^n \Delta M_{i,\phi} \right\|_{L_r(S_r)} = o(n).
\end{align*}
By Markov's inequality and the fact that $\|B\|_{\mathrm{op}} \leq \| B\|_{S_r}$,
\begin{align*}
\left\|\frac{1}{n} \sum_{i=1}^n \Delta M_{i,\phi} \right\|_{\mathrm{op}} = o_P(1).
\end{align*}
For the second term, note that for any unit vector $u$,
\begin{align*}
&\ \left| u^\top \mathbb{E}\left[(\ell_{i+1} - \pi) (\phi(X_{i+1})\phi(X_{i+1})^\top - \boldsymbol{\Sigma}_{\phi\phi}) \middle| \Lambda_{i}\right] u \right| \\
\leq&\ \mathbb{E}\left[\left|\ell_{i+1} - \pi\right| \cdot \left|u^\top (\phi(X_{i+1})\phi(X_{i+1})^\top - \boldsymbol{\Sigma}_{\phi\phi}) u\right|\, \middle| \Lambda_{i}\right] \\
\leq &\ \frac{\overline{\lambda}_{\ell,1}}{i^{\gamma}} \mathbb{E}\left[\left| \Lambda_{i}^\top \phi(X_{i+1}) \right| \cdot \left|u^\top (\phi(X_{i+1})\phi(X_{i+1})^\top - \boldsymbol{\Sigma}_{\phi\phi}) u\right|\, \middle| \Lambda_{i}\right] \\
\leq&\ \frac{\overline{\lambda}_{\ell,1}}{i^{\gamma}} \left\{\mathbb{E}\left[\left| \Lambda_{i}^\top \phi(X_{i+1}) \right|^2\middle| \Lambda_{i} \right] \right\}^{1/2} \left\{ \mathbb{E}\left[\left(u^\top \phi(X)\right)^4 \right]  + (u^\top \boldsymbol{\Sigma}_{\phi\phi} u)^2 \right\}^{1/2} \\
\leq&\ \frac{\overline{\lambda}_{\ell,1}}{i^{\gamma}} M_2^{1/2} \left(M_4+\lambda_{\max}^2(\boldsymbol{\Sigma}_{\phi\phi})\right)^{1/2} \left\| \Lambda_{i}\right\|,
\end{align*}
which implies that
\begin{align*}
\left\| \frac{1}{n}\sum_{i=1}^{n-1} \mathbb{E}\left[(\ell_{i+1} - \pi) (\phi(X_{i+1})\phi(X_{i+1})^\top - \boldsymbol{\Sigma}_{\phi\phi}) \middle| \Lambda_{i}\right] \right\|_{\mathrm{op}} = O_P\left(\frac{1}{n} \sum_{i=1}^{n-1} \frac{\left\| \Lambda_{i}\right\|}{i^\gamma} \right) = o_P(1). 
\end{align*}
For the third term, by Equation (\ref{Eq_phi1}), 
\begin{align*}
\left\|\left\{\mathbb{E}\left[\left(\phi(X_i)\phi(X_i)^\top- \boldsymbol{\Sigma}_{\phi\phi}\right)^2\right]\right\}^{1/2} \right\|_{S_r} \leq q^{1/r} \left\| \mathbb{E}\left[(\phi(X_i)\phi(X_i)^\top - \boldsymbol{\Sigma}_{\phi\phi})^2\right]  \right\|_{\mathrm{op}}^{1/2} = O(q^{1/r+1/2}),
\end{align*}
and by Equation (\ref{Eq_phi2}), 
\begin{align*}
\mathbb{E}\left\| \phi(X_i)\phi(X_i)^\top- \boldsymbol{\Sigma}_{\phi\phi}  \right\|_{S_r}^r = O(q).
\end{align*}
Then, by Lemma \ref{Lemma_Burkholder},
\begin{align*}
&\ \left\| \frac{1}{n} \sum_{i=1}^n \left(\phi(X_i)\phi(X_i)^\top - \boldsymbol{\Sigma}_{\phi\phi}\right)  \right\|_{L_r(S_r)} \\
\leq&\ K_r \left[ n^{-1/2} \left\|\left\{\mathbb{E}\left[\left(\phi(X_i)\phi(X_i)^\top- \boldsymbol{\Sigma}_{\phi\phi}\right)^2\right]\right\}^{1/2} \right\|_{S_r} + n^{1/r-1} \mathbb{E}\left\| \phi(X_i)\phi(X_i)^\top- \boldsymbol{\Sigma}_{\phi\phi}  \right\|_{S_r}^r\right] \\
=&\ O\left(q^{1/r}\sqrt{\frac{q}{n}} + qn^{1/r -1}\right) = o(1),
\end{align*}
where the selection of $r$ is chosen as in the earlier part of the proof. \\
By Markov's inequality and the fact that $\|B\|_{\mathrm{op}} \leq \| B\|_{S_r}$,
\begin{align*}
\left\| \frac{1}{n} \sum_{i=1}^n \left(\phi(X_i)\phi(X_i)^\top - \boldsymbol{\Sigma}_{\phi\phi}\right)  \right\|_{\mathrm{op}} = o_P(1).
\end{align*}
Therefore,
\begin{align*}
\left\|\widehat{\boldsymbol{\Sigma}}_{\phi\phi,1} - \boldsymbol{\Sigma}_{\phi\phi} \right\|_{\mathrm{op}} = o_P(1).
\end{align*}
Furthermore, 
\begin{align*}
\left\| \widehat{\boldsymbol{\Sigma}}_{\phi\phi,1}^{-1} - \boldsymbol{\Sigma}_{\phi\phi}^{-1} \right\|_{\mathrm{op}}
&= \left\| -\widehat{\boldsymbol{\Sigma}}_{\phi\phi,1}^{-1}
(\widehat{\boldsymbol{\Sigma}}_{\phi\phi,1}-\boldsymbol{\Sigma}_{\phi\phi})
\boldsymbol{\Sigma}_{\phi\phi}^{-1}\right\|_{\mathrm{op}} \\
&\leq \left\| \widehat{\boldsymbol{\Sigma}}_{\phi\phi,1}^{-1} \right\|_{\mathrm{op}}  \left\|\widehat{\boldsymbol{\Sigma}}_{\phi\phi,1} - \boldsymbol{\Sigma}_{\phi\phi} \right\|_{\mathrm{op}} \left\| \boldsymbol{\Sigma}_{\phi\phi}^{-1} \right\|_{\mathrm{op}} = o_P(1).
\end{align*}

\end{proof}

\begin{lemma}\label{lemma_beta}
 Under the conditions of Theorem \ref{Thm_ATE_test}, let $\beta_{\phi}(a) = \boldsymbol{\Sigma}_{\phi\phi}^{-1}g_a$, where $g_a = \mathbb{E}[\phi(X)\{Y(a) - \mathbb{E}[Y(a)] \}]$ for $a=0,1$. Then,
	\begin{align*}
		\left\| \widehat{\beta}_{\phi}(a) - \beta_{\phi}(a) \right\| = o_P(1), a=0,1.
	\end{align*}
\end{lemma}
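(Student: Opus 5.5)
By the normal equations, $\widehat{\beta}_{\phi}(1)=\widehat{\boldsymbol{\Sigma}}_{\phi\phi,1}^{-1}\widehat{g}_1$, where
\begin{align*}
\widehat{g}_1 = \frac{1}{n_1}\sum_{i=1}^n T_i\phi(X_i)\{Y_i(1)-\Bar{Y}_1\}.
\end{align*}
The case $a=0$ is symmetric. I will compare $\widehat{\beta}_{\phi}(1)$ with $\beta_{\phi}(1)=\boldsymbol{\Sigma}_{\phi\phi}^{-1}g_1$ through the decomposition
\begin{align*}
\widehat{\beta}_{\phi}(1)-\beta_{\phi}(1) = \widehat{\boldsymbol{\Sigma}}_{\phi\phi,1}^{-1}\left(\widehat{g}_1-g_1\right) + \left(\widehat{\boldsymbol{\Sigma}}_{\phi\phi,1}^{-1}-\boldsymbol{\Sigma}_{\phi\phi}^{-1}\right)g_1.
\end{align*}
By Lemma \ref{Lemma_matrix}, $\|\widehat{\boldsymbol{\Sigma}}_{\phi\phi,1}^{-1}\|_{\mathrm{op}}=O_P(1)$ and $\|\widehat{\boldsymbol{\Sigma}}_{\phi\phi,1}^{-1}-\boldsymbol{\Sigma}_{\phi\phi}^{-1}\|_{\mathrm{op}}=o_P(1)$. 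It therefore suffices to show $\|g_1\|=O(1)$ and $\|\widehat{g}_1-g_1\|=o_P(1)$. The first bound follows as in the proof of Theorem \ref{Thm_additional covariate}(ii): $\|g_1\|=\sup_{\|u\|=1}\mathbb{E}[u^\top\phi(X)\widetilde{Y}(1)]\le \lambda_{\max}^{1/2}(\boldsymbol{\Sigma}_{\phi\phi})(\mathbb{E}[\widetilde{Y}^2(1)])^{1/2}$.

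For $\widehat{g}_1-g_1$, write $Y_i(1)-\Bar{Y}_1=\widetilde{Y}_i(1)-(\Bar{Y}_1-\mathbb{E}[Y(1)])$, which gives
\begin{align*}
\widehat{g}_1-g_1 = \frac{1}{n_1}\sum_{i=1}^n T_i\left\{\phi(X_i)\widetilde{Y}_i(1)-g_1\right\} - \left(\Bar{Y}_1-\mathbb{E}[Y(1)]\right)\frac{1}{n_1}\sum_{i=1}^n T_i\phi(X_i).
\end{align*}
For the first term I apply Lemma \ref{Lemma_vector} with $V=\phi(X)\widetilde{Y}(1)-g_1$, which has mean zero. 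By Cauchy--Schwarz,
\begin{align*}
\mathbb{E}\|V\|^2\le \mathbb{E}[\|\phi(X)\|^2\widetilde{Y}^2(1)]\le (\mathbb{E}\|\phi(X)\|^4)^{1/2}(\mathbb{E}\widetilde{Y}^4(1))^{1/2}=O(q),
\end{align*}
using the assumed fourth moment $\mathbb{E}[Y^4(a)]<\infty$. Also
\begin{align*}
K_{\phi,V}\le \sup_{\|u\|=\|v\|=1}\left(\mathbb{E}[(u^\top\phi)^2(v^\top\phi)^2]\right)^{1/2}(\mathbb{E}\widetilde{Y}^2(1))^{1/2}+O(1)=O(1),
\end{align*}
by the $\iota\ge4$ projection moment bound $M_4$. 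Lemma \ref{Lemma_vector} then yields $O_P(\sqrt{q/n})+o(1)=o_P(1)$, since $q=o(n)$ under either condition of Theorem \ref{Thm_ATE}.

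For the second term, Theorem \ref{Thm_ATE} (or a direct application of Lemma \ref{Lemma_vector} with $d=1$) gives $\Bar{Y}_1-\mathbb{E}[Y(1)]=O_P(n^{-1/2})$. I decompose
\begin{align*}
\frac{1}{n_1}\sum_{i=1}^n T_i\phi(X_i)=\frac{1}{n_1}\sum_{i=1}^n T_i\{\phi(X_i)-\mathbb{E}\phi(X)\}+\mathbb{E}\phi(X).
\end{align*}
Lemma \ref{Lemma_vector} bounds the first piece by $O_P(\sqrt{q/n})+o(1)$. For the second piece, $\|\mathbb{E}\phi(X)\|=\sup_{\|u\|=1}\mathbb{E}[u^\top\phi(X)]\le M_2^{1/2}=O(1)$. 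The product is then $O_P(n^{-1/2})\cdot O_P(1)=o_P(1)$.

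The main obstacle I expect is controlling the drift part of Lemma \ref{Lemma_vector}, namely the $o(K_{\phi,V})$ term, when $V$ involves the product $\phi(X)\widetilde{Y}(1)$. This requires $K_{\phi,V}=O(1)$ uniformly in $q$, which is where the bounded fourth projection moments are essential; with a cruder bound through $\|V\|$ the constant would grow like $\sqrt{q}$ and the argument would fail. If the $\sup$ bound on $K_{\phi,V}$ turns out delicate, I would fall back on bounding $\mathbb{E}[|u^\top V||v^\top\phi|]\le(\mathbb{E}[(u^\top\phi)^4])^{1/4}(\mathbb{E}[(v^\top\phi)^4])^{1/4}(\mathbb{E}\widetilde{Y}^2(1))^{1/2}$ via Hölder. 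Combining the pieces gives $\|\widehat{\beta}_{\phi}(a)-\beta_{\phi}(a)\|=o_P(1)$.
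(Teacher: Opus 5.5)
Your proposal is correct and follows the paper's proof essentially step by step: the same decomposition $\widehat{\beta}_{\phi}(1)-\beta_{\phi}(1)=\widehat{\boldsymbol{\Sigma}}_{\phi\phi,1}^{-1}(\widehat{g}_1-g_1)+(\widehat{\boldsymbol{\Sigma}}_{\phi\phi,1}^{-1}-\boldsymbol{\Sigma}_{\phi\phi}^{-1})g_1$, the same split of $\widehat{g}_1-g_1$, and Lemma~\ref{Lemma_vector} (with $\mathbb{E}\|V\|^2=O(q)$ and $K_{\phi,V}=O(1)$) together with Lemma~\ref{Lemma_matrix}. One small correction: Lemma~\ref{Lemma_vector} with $d=1$ alone gives only $\bar{Y}_1-\mathbb{E}[Y(1)]=o_P(1)$, not $O_P(n^{-1/2})$; that rate comes from the proof of Theorem~\ref{Thm_ATE}, which is what the paper uses. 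This is harmless here, because you bound $\|\mathbb{E}\phi(X)\|\le M_2^{1/2}$ rather than using $\mathbb{E}\|\phi(X)\|=O(\sqrt{q})$ as the paper does, so the weaker rate already suffices for the second term.
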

\begin{proof}
Let $\widehat{g}_a = n_a^{-1} \sum_{i=1}^n \mathbb{I}\{T_i=a\} \phi(X_i)(Y_i - \bar{Y}_a) $ and $g_i(a) = \phi(X_i)(Y_i(a) - \mathbb{E}[Y(a)])$ for $a=0,1$. We first prove that $\| \widehat{g}_a - g_a \|= o_P(1)$ and it suffices to prove the result for $a=1$, as the case $a=0$ can be established by the same argument. Consider the following decomposition, 
\begin{align*}
\widehat{g}_1 -g_1 =  \frac{1}{n_1}\sum_{i=1}^n T_i\left(g_i(1)- g_1  \right)  - (\bar{Y}_1 - \mathbb{E}[Y(1)]) \left(\frac{1}{n_1} \sum_{i=1}^n T_i\phi(X_i) \right). 
\end{align*}
Note that
\begin{align*}
\mathbb{E}\left[\left\| g_i(1) - g_1 \right\|^2 \right] \leq&\  \mathbb{E}\left[\| \phi(X) \|^2 \left\{ Y(1) - \mathbb{E}[Y(1)] \right\}^2\right] \\
\leq&\ \left\{\mathbb{E}\left[\| \phi(X) \|^4\right]  \right\}^{1/2} \mathbb{E}\left[\left\{ Y(1) - \mathbb{E}[Y(1)] \right\}^4 \right]^{1/2} = O(q). 
\end{align*}
and for every pair of unit vectors $u$ and $v$, 
\begin{align*}
&\ \mathbb{E}\left[\left| u^\top (g_i(1)-g_1) \right| \cdot \left|v^\top \phi(X) \right|  \right] \\
\leq&\ \mathbb{E}\left[\left| Y(a) - \mathbb{E}[Y(a)] \right| \cdot \left| u^\top \phi(X) \right| \cdot \left| v^\top \phi(X) \right|   \right] + \left| u^\top g_1 \right| \mathbb{E}\left| v^\top \phi(X)\right| \\
\leq&\ \left(\operatorname{Var}\{Y(a)\}\right)^{1/2} \left(\mathbb{E}\left| u^\top \phi(X) \right|^4\right)^{1/2} \left(\mathbb{E}\left| v^\top \phi(X) \right|^4\right)^{1/2} =O(1).
\end{align*}
Then, by Lemma \ref{Lemma_vector},
\begin{align*}
\left\| \frac{1}{n}\sum_{i=1}^n T_i \left(g_i(1)- g_1  \right) \right\| = O_P\left(\sqrt{\frac{q}{n}}\right)+o_P(1) = o_P(1).  
\end{align*}
In addition, by the proof of Theorem \ref{Thm_ATE},
\begin{align*}
\left| \bar{Y}_1 - \mathbb{E}[Y(1)]\right| = O_P(n^{-1/2}),
\end{align*}
and by Lemma \ref{Lemma_vector},
\begin{align*}
\left\| \frac{1}{n_1} \sum_{i=1}^n T_i\phi(X_i) \right\| \leq \left\| \frac{1}{n_1} \sum_{i=1}^n T_i\phi(X_i) - \mathbb{E}\left[\phi(X) \right] \right\| + \mathbb{E}\left[\left\|\phi(X) \right\|\right] = O_P(\sqrt{q}).
\end{align*}
Consequently,
\begin{align*}
\left\| \widehat{g}_1 -g_1 \right\| \leq  \left\| \frac{1}{n_1}\sum_{i=1}^n T_i\left(g_i(1)- g_1  \right)\right\|  + \left| \bar{Y}_1 - \mathbb{E}[Y(1)]\right| \left\|\frac{1}{n_1} \sum_{i=1}^n T_i\phi(X_i) \right\| = o_P(1). 
\end{align*}
A direct calculation yields that
\begin{align*}
\widehat{\beta}_{\phi}(a) - \beta_{\phi}(a) = \widehat{\boldsymbol{\Sigma}}_{\phi\phi,a}^{-1} \left(\widehat{g}_a - g_a\right) + \left(\widehat{\boldsymbol{\Sigma}}_{\phi\phi,a}^{-1} - \boldsymbol{\Sigma}_{\phi\phi}^{-1}\right) g_a.
\end{align*}	
Therefore, by Lemma \ref{Lemma_matrix}, 
\begin{align*}
\left\| \widehat{\beta}_{\phi}(a) - \beta_{\phi}(a) \right\| \leq \left\| \widehat{\boldsymbol{\Sigma}}_{\phi\phi,a}^{-1} \right\|_{\mathrm{op}} \left\|\widehat{g}_a - g_a\right\| + \left\| \widehat{\boldsymbol{\Sigma}}_{\phi\phi,a}^{-1} - \boldsymbol{\Sigma}_{\phi\phi}^{-1} \right\|_{\mathrm{op}} \left\|g_a \right\| 
= o_P(1). 
\end{align*}
\end{proof}


\begin{proof}[Proof of Theorem \ref{Thm_ATE_test}:] 
	Let $\widehat{s}_1^2 = n_1^{-1} \sum_{i=1}^n T_i\{Y_i - \bar{Y}_1\}^2$ denote the sample variance in the treatment group. Then, a direct calculation gives 
	\begin{align*}
		&\ \frac{1}{n_1} \sum_{i=1}^n T_i \widehat{e}_{i,1}^2 = \frac{1}{n_1} \sum_{i=1}^n T_i \left( Y_i-\Bar{Y}_{1} - \phi(X_i)^\top \widehat{\beta}_{\phi}(1) \right)^2 \\
		=&\ \widehat{s}_1^2 - 2 \widehat{g}_1^\top \widehat{\beta}_{\phi}(1)  + \widehat{\beta}_{\phi}(1)^\top \widehat{\boldsymbol{\Sigma}}_{\phi\phi,1} \widehat{\beta}_{\phi}(1) = \widehat{s}_1^2 - \widehat{g}_1^\top \widehat{\beta}_{\phi}(1) ,
	\end{align*}
and $\widetilde{\sigma}_{\widetilde{Y}(1)}^2 = \operatorname{Var}\{Y(1)\} - g_1^\top \beta_{\phi}(1) $. \\
By Theorem \ref{Thm_additional covariate} (i) and the law of large numbers, 
\begin{align*}
\widehat{s}_1^2 =&\ \frac{1}{n_1} \sum_{i=1}^n T_i Y_i^2(1) - \left(\frac{1}{n_1} \sum_{i=1}^n T_i Y_i(1)\right)^2 \\
=&\ \frac{n}{n_1} \left(\frac{1}{n} \sum_{i=1}^n  (T_i-\pi) Y_i^2(1) + \frac{\pi}{n} \sum_{i=1}^n Y_i^2(1) \right) - \left\{\frac{n}{n_1} \left(\frac{1}{n} \sum_{i=1}^n  (T_i-\pi) Y_i(1) + \frac{\pi}{n} \sum_{i=1}^n Y_i(1) \right)  \right\}^2 \\
\xrightarrow{P}&\ \mathbb{E}\left[Y^2(1)\right] - \mathbb{E}\left[Y(1)\right]^2 = \operatorname{Var}\{Y(1)\}. 
\end{align*}
In addition, consider the following decomposition,
\begin{align*}
\left| \widehat{g}_1^\top \widehat{\beta}_{\phi}(1)  - g_1^\top \beta_{\phi}(1)  \right| \leq \left\| \widehat{g}_1 - g_1 \right\| \left\| \widehat{\beta}_{\phi}(1) \right\| + \left\| g_1 \right\| \left\| \widehat{\beta}_{\phi}(1) - \beta_{\phi}(1) \right\|.
\end{align*}
Note that 
\begin{align*}
\mathbb{E}[Y^2(1)] \geq \mathbb{E}[(\phi(X)^\top\beta_{\phi}(1))^2] = \beta_{\phi}(1)^\top \boldsymbol{\Sigma}_{\phi\phi} \beta_{\phi}(1) \geq \lambda_{\min}\left(\boldsymbol{\Sigma}_{\phi\phi}\right)\left\|\beta_{\phi}(1)\right\|^2,
\end{align*}
which implies that $\|\beta_{\phi}(1) \| = O(1)$ and
\begin{align*}
\left\|g_a \right\| \leq \left\| \boldsymbol{\Sigma}_{\phi\phi} \right\|_{\mathrm{op}} \left\|\beta_{\phi}(1) \right\| = O(1). 
\end{align*}
Then, by Lemma \ref{lemma_beta}, 
\begin{align*}
\left| \widehat{g}_1^\top \widehat{\beta}_{\phi}(1)  - g_1^\top \beta_{\phi}(1)  \right| = o_P(1). 
\end{align*}
Consequently, 
\begin{align*}
\left| \frac{1}{n_1-q} \sum_{i=1}^n T_i \widehat{e}_{i,1}^2 - \widetilde{\sigma}_{\widetilde{Y}(1)}^2 \right| \xrightarrow{P} 0.
\end{align*}
Similarly, we can prove that 
\begin{align*}
\left|\frac{1}{n_0-q} \sum_{i=1}^n (1-T_i) \widehat{e}_{i,0}^2 - \widetilde{\sigma}_{\widetilde{Y}(0)}^2\right| \xrightarrow{P} 0,
\end{align*}
which further implies that
\begin{align}\label{Proof_ATE_CI1}
\left| \widehat{\varsigma}_{\widetilde{Y}}^2(\pi) - \varsigma_{\widetilde{Y}}^2(\pi) \right| \xrightarrow{P} 0.
\end{align}
	Write $\boldsymbol{\widehat{\Sigma}}_{\phi\phi} = n^{-1}\sum_{i=1}^n \phi(X_i)\phi(X_i)^\top$. Then, $\|\boldsymbol{\widehat{\Sigma}}_{\phi\phi}\|_{\mathrm{op}} =O_P(1)$. In addition, for convenience, we write $\widehat{\beta}_{\phi}=\widehat{\beta}_{\phi}(1) - \widehat{\beta}_{\phi}(0)$ and $\beta_{\phi}=\beta_{\phi}(1) - \beta_{\phi}(0)$. Then, $\|\widehat{\beta}_{\phi} - \beta_{\phi} \| = o_P(1)$ and 
	\begin{align*}
		&\ \widehat{\varsigma}_{H\widetilde{Y}}^2 = \frac{1}{n} \sum_{i=1}^n \left(\phi(X_i)^\top \widehat{\beta}_{\phi}   \right)^2 = \widehat{\beta}_{\phi}^\top \boldsymbol{\widehat{\Sigma}}_{\phi\phi}\widehat{\beta}_{\phi} \\
		=&\ \beta_{\phi}^\top \boldsymbol{\widehat{\Sigma}}_{\phi\phi}\beta_{\phi} + 2\beta_{\phi}^\top \boldsymbol{\widehat{\Sigma}}_{\phi\phi}\left(\widehat{\beta}_{\phi}-\beta_{\phi} \right) + \left(\widehat{\beta}_{\phi}-\beta_{\phi} \right)^\top\boldsymbol{\widehat{\Sigma}}_{\phi\phi}\left(\widehat{\beta}_{\phi}-\beta_{\phi} \right)\\
		=&\ \frac{1}{n} \sum_{i=1}^n \left(\phi(X_i)^\top \beta_{\phi}   \right)^2 +  o_P(1),
	\end{align*}
	where the last equality follows from 
	\begin{align*}
		&\left|\beta_{\phi}^\top \boldsymbol{\widehat{\Sigma}}_{\phi\phi}\left(\widehat{\beta}_{\phi}-\beta_{\phi} \right) \right| \leq \left\|\beta_{\phi} \right\| \left\| \boldsymbol{\widehat{\Sigma}}_{\phi\phi} \right\|_{\mathrm{op}} \left\| \widehat{\beta}_{\phi}-\beta_{\phi}  \right\| = o_P(1), \\
		&\left(\widehat{\beta}_{\phi}-\beta_{\phi} \right)^\top\boldsymbol{\widehat{\Sigma}}_{\phi\phi}\left(\widehat{\beta}_{\phi}-\beta_{\phi} \right) \leq \left\| \boldsymbol{\widehat{\Sigma}}_{\phi\phi} \right\|_{\mathrm{op}} \left\| \widehat{\beta}_{\phi}-\beta_{\phi}  \right\|^2 = o_P(1).
	\end{align*}
Therefore, by the law of large numbers, 
\begin{align}\label{Proof_ATE_CI2}
\left| \widehat{\varsigma}_{H\widetilde{Y}}^2 - \varsigma_{H\widetilde{Y}}^2 \right| \xrightarrow{P} 0. 
\end{align}
Finally, by (\ref{Proof_ATE_CI1}), (\ref{Proof_ATE_CI2}), Theorem \ref{Thm_ATE} and Slutsky's theorem, 
\begin{align*}
\sqrt{n}\left(\widehat{\tau} - \tau\right)\Big/ \sqrt{\widehat{\varsigma}_{\widetilde{Y}}^2(\pi) + \widehat{\varsigma}_{H\widetilde{Y}}^2} \xrightarrow{d} \mathcal{N}(0,1). 
\end{align*}
\end{proof}

\section{Additional Simulation Results}
Following the simulation settings in the main paper, we further examine the scaling behavior under \(\pi=2/3\). The results are reported in Figures \ref{fig4} and \ref{fig5}. Similar to the results in the main text, the curves in Figures \ref{fig4} and \ref{fig5} are nearly linear on the log–log scale. Following the same presentation as in the main text, we further plot the estimated slopes as a function of \(d\). The results are shown in Figure \ref{fig6}. The results under \(\pi=2/3\) are consistent with those in the main text, and we therefore omit further discussion. 

\begin{figure}[htbp]
	\centering
	\includegraphics[width=\textwidth]{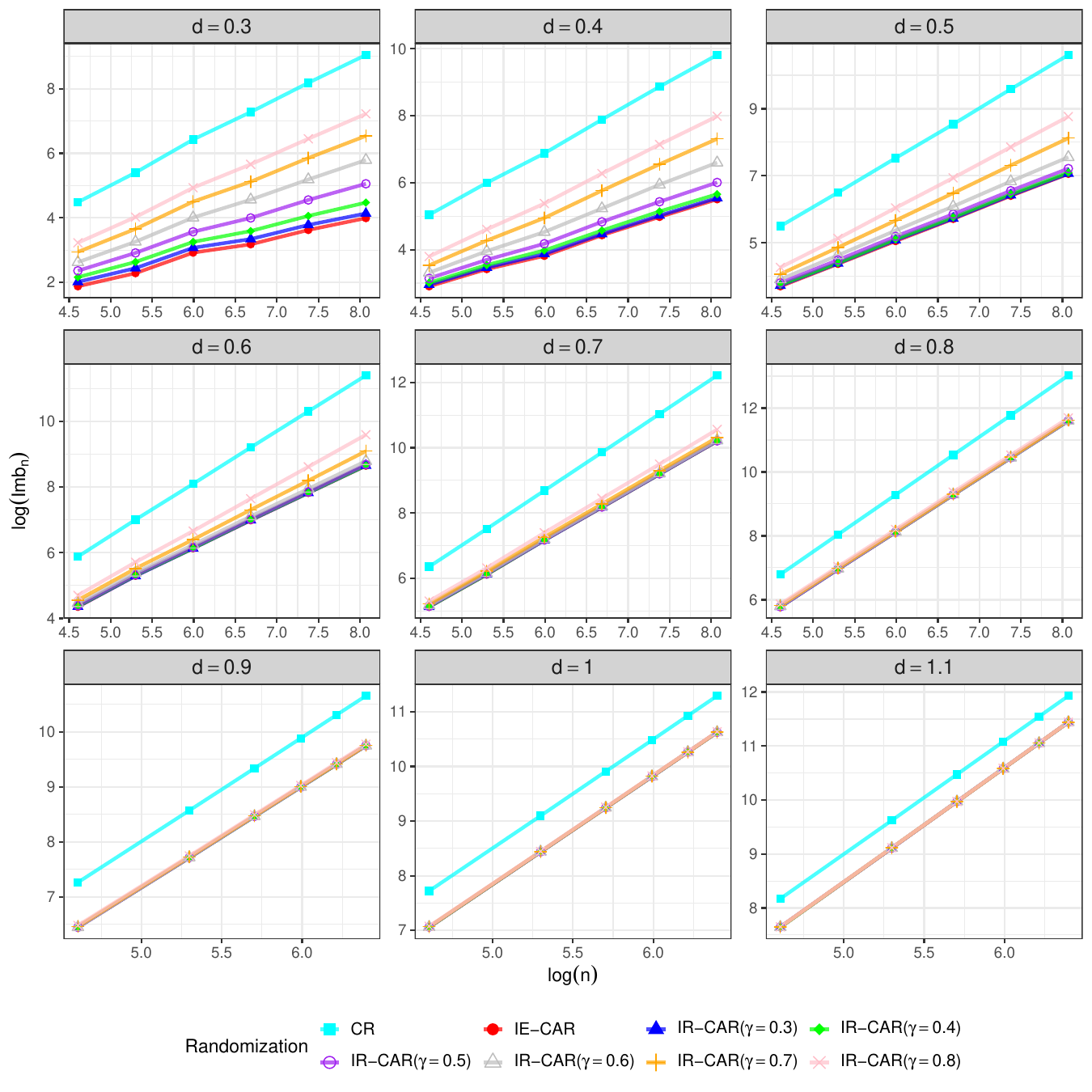}
	\caption{Log–log plot of $\text{Imb}_n$ versus $n$ with continuous covariates under $\pi=2/3$.}
	\label{fig4}
\end{figure}

\begin{figure}[htbp]
	\centering
	\includegraphics[width=\textwidth]{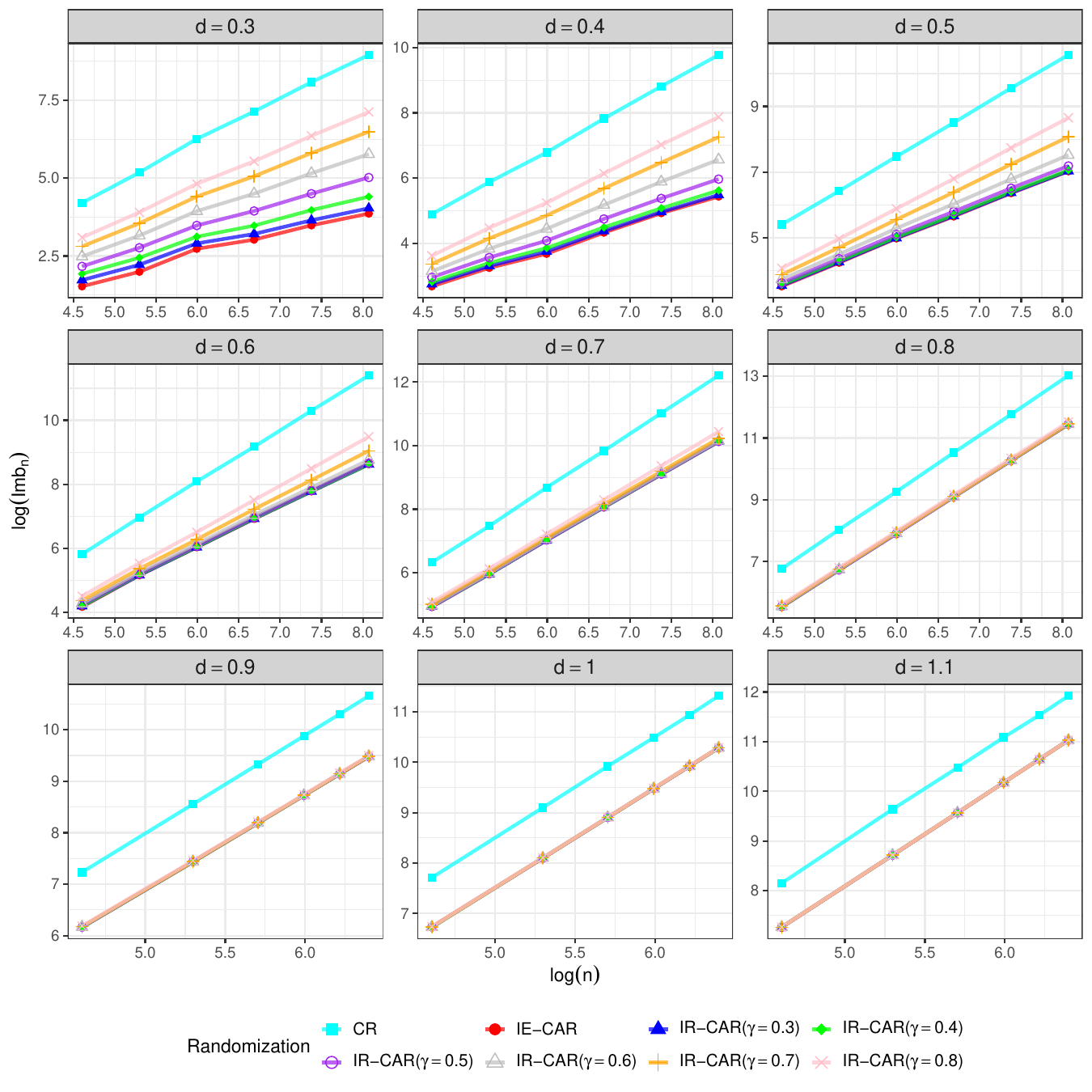}
	\caption{Log–log plot of $\text{Imb}_n$ versus $n$ with discrete  covariates under $\pi=2/3$.}
	\label{fig5}
\end{figure}

\begin{figure}[htbp]
	\centering
	\includegraphics[width=\textwidth]{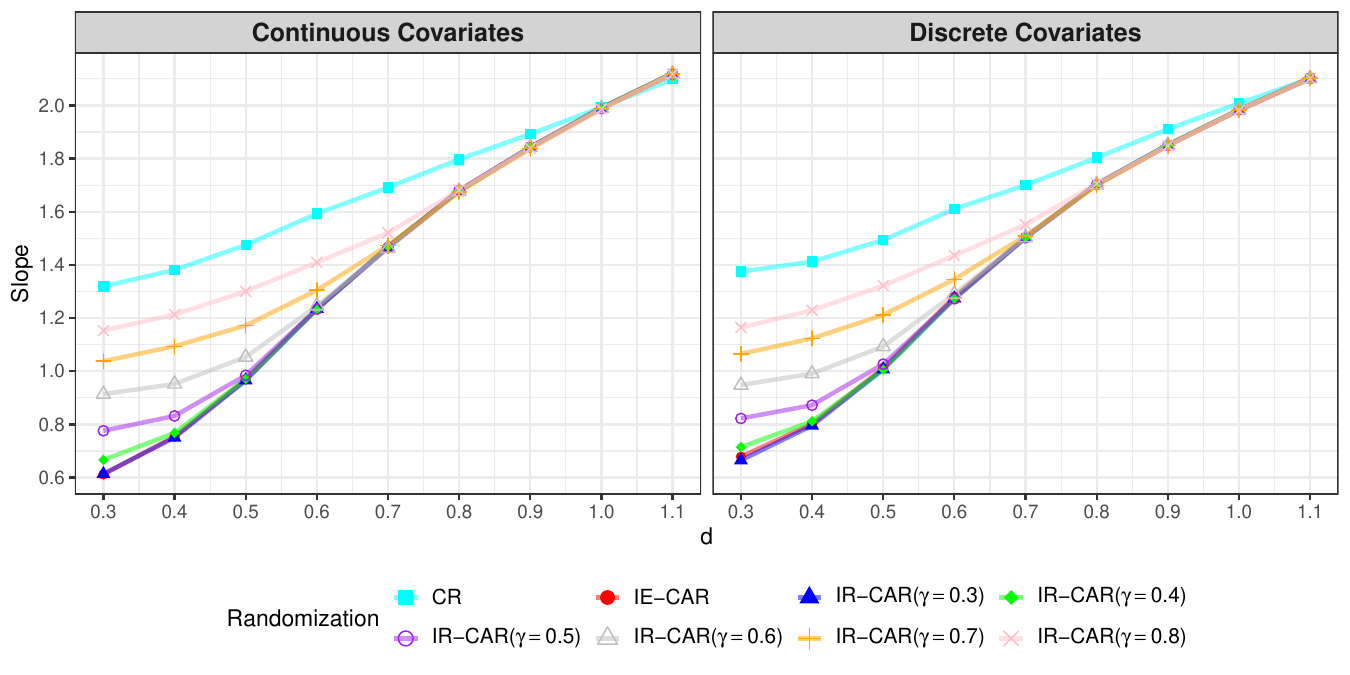}
	\caption{Estimated slopes from log–log plots under various randomization procedures with $\pi=2/3$.}
	\label{fig6}
\end{figure}

\section{Additional Details for Real Data Example}
Table \ref{covariates_realdata} provides a summary of the baseline covariates used in our analysis, including their types, definitions, and units.
\begin{table}[ht]
    \centering
    \setlength{\tabcolsep}{3pt} 
    \renewcommand{\arraystretch}{1}
    \caption{Description of covariates in PBC dataset.}
    \begin{tabular}{llll}
	\toprule
		Variable & Type & Description & Unit / Coding \\
	\midrule
		age & Continuous & Age at baseline & years \\
		sex & Discrete & Sex & 0 = female, 1 = male \\
		ascites & Discrete & Presence of ascites & 0 = no, 1 = yes \\
		hepatomegaly & Discrete & Presence of hepatomegaly & 0 = no, 1 = yes \\
		spiders & Discrete & Presence of spider angiomas & 0 = no, 1 = yes \\
		edema & Discrete & Edema status & 0 = no, 0.5 = mild, 1 = severe \\
		stage & Discrete & Histologic disease stage & 1, 2, 3, 4 \\
		bilirubin & Continuous & Serum bilirubin & mg/dL \\
		cholesterol & Continuous & Serum cholesterol & mg/dL \\
		albumin & Continuous & Serum albumin & g/dL \\
		copper & Continuous & Urine copper & $\mu$g/day \\
		alk.phos & Continuous & Alkaline phosphatase & U/L \\
		ast & Continuous & Aspartate aminotransferase & U/L \\
		triglycerides & Continuous & Triglycerides & mg/dL \\
		platelet & Continuous & Platelet count & $10^3$/mm$^3$ \\
		prothrombin & Continuous & Prothrombin time & seconds \\
	\bottomrule
	\end{tabular}

    \label{covariates_realdata}
\end{table}

\end{document}